\let\old@setaddresses\@setaddresses
\def\@setaddresses{\bigskip{\parindent 0pt\let\scshape\relax\let\ttfamily\relax\old@setaddresses}}
\newcommand{\tikzstairs}[1]{}
\newcommand{\tikzsubset}[1]{}
\newcommand{\tikzpartition}[1]{}
\newcommand{\tikztriang}[1]{}
\newcommand{\tikztree}[1]{}
\newcommand{\tikzdyck}[1]{}
\newcommand{\tikzrect}[1]{}
\newcommand{\mesh}[2] 
{
  \foreach \x in {1,...,#1} \draw[mesh line] (\x,0)--(\x,#1+1) (0,\x)--(#1+1,\x);
  \foreach \y [count=\x] in {#2} \node[mesh node] (\x) at (\x,\y) {};
}
\tikzstyle{stair values}=[thick]
\tikzstyle{subset values}=[thick]
\tikzstyle{subset container}=[thick, rounded corners]
\tikzstyle{partition values}=[draw, shape=circle, thin]
\tikzstyle{partition container}=[thick, rounded corners, opacity=0.5]
\tikzstyle{tree edge}=[thick]
\tikzstyle{tree vertex}=[circle,minimum size=8pt,inner sep=0pt]
\tikzstyle{triang vertex}=[circle,fill=black,minimum size=2.5pt,inner sep=0pt]
\tikzstyle{mesh line}=[thin]
\tikzstyle{mesh node}=[circle,minimum size=4pt,inner sep=0pt,draw=black,fill=black]
\tikzstyle{marked mesh node}=[circle,minimum size=4pt,inner sep=3pt,draw=black,fill=none]
\tikzstyle{poset node}=[rounded corners=2pt,minimum size=6pt,inner sep=2pt,draw=black,fill=white]
\tikzstyle{grid edge}=[draw=gray, line width=1.5pt]
\tikzstyle{slope edge}=[draw=black, thick]
\definecolor{p1col1}{rgb}{0.266122, 0.486664, 0.802529}
\definecolor{p1col2}{rgb}{0.513417, 0.72992, 0.440682}
\definecolor{p1col3}{rgb}{0.863512, 0.670771, 0.236564}
\definecolor{p1col4}{rgb}{0.857359, 0.131106, 0.132128}
\definecolor{p2col1}{rgb}{0.264135, 0.201429, 0.745889}
\definecolor{p2col2}{rgb}{0.256326, 0.430921, 0.808553}
\definecolor{p2col3}{rgb}{0.324106, 0.60897, 0.708341}
\definecolor{p2col4}{rgb}{0.439128, 0.704968, 0.52925}
\definecolor{p2col5}{rgb}{0.597181, 0.742185, 0.36771}
\definecolor{p2col6}{rgb}{0.764712, 0.728302, 0.273608}
\definecolor{p2col7}{rgb}{0.88018, 0.631684, 0.227665}
\definecolor{p2col8}{rgb}{0.897354, 0.41824, 0.185007}
\definecolor{p2col9}{rgb}{0.857359, 0.131106, 0.132128}
\newtheorem{theorem}{Theorem}
\newtheorem{lemma}[theorem]{Lemma}
\theoremstyle{remark}
\newtheorem{remark}[theorem]{Remark}
\title{Combinatorial generation via permutation languages. \\ I. Fundamentals}
\author{Elizabeth Hartung}
\address[Elizabeth Hartung]{Massachusetts College of Liberal Arts, United States}
\email{e.hartung@mcla.edu}
\author{Hung P. Hoang}
\address[Hung P. Hoang]{Department of Computer Science, ETH Z\"urich, Switzerland}
\email{hung.hoang@inf.ethz.ch}
\author{Torsten M\"utze}
\address[Torsten M\"utze]{Department of Computer Science, University of Warwick, United Kingdom}
\email{torsten.mutze@warwick.ac.uk}
\author{Aaron Williams}
\address[Aaron Williams]{Computer Science Department, Williams College, United States}
\email{aaron.williams@williams.edu}
\thanks{An extended abstract of this paper appeared in the Proceedings of the 31st Annual ACM-SIAM Symposium on Discrete Algorithms (SODA~2020)~\cite{DBLP:conf/soda/HartungHMW20}.}
\thanks{Torsten M\"utze is also affiliated with the Faculty of Mathematics and Physics, Charles University Prague, Czech Republic. He was supported by Czech Science Foundation grant GA~19-08554S, and by German Science Foundation grant~413902284.}
\begin{document}

\begin{abstract}
In this work we present a general and versatile algorithmic framework for exhaustively generating a large variety of different combinatorial objects, based on encoding them as permutations.
This approach provides a unified view on many known results and allows us to prove many new ones.
In particular, we obtain the following four classical Gray codes as special cases: the Steinhaus-Johnson-Trotter algorithm to generate all permutations of an $n$-element set by adjacent transpositions; the binary reflected Gray code to generate all $n$-bit strings by flipping a single bit in each step; the Gray code for generating all $n$-vertex binary trees by rotations due to Lucas, Roelants van Baronaigien, and Ruskey; the Gray code for generating all partitions of an $n$-element ground set by element exchanges due to Kaye.

We present two distinct applications for our new framework:
The first main application is the generation of pattern-avoiding permutations, yielding new Gray codes for different families of permutations that are characterized by the avoidance of certain classical patterns, (bi)vincular patterns, barred patterns, boxed patterns, Bruhat-restricted patterns, mesh patterns, monotone and geometric grid classes, and many others.
We also obtain new Gray codes for all the combinatorial objects that are in bijection to these permutations, in particular for five different types of geometric rectangulations, also known as floorplans, which are divisions of a square into $n$ rectangles subject to certain restrictions.

The second main application of our framework are lattice congruences of the weak order on the symmetric group~$S_n$.
Recently, Pilaud and Santos realized all those lattice congruences as $(n-1)$-dimensional polytopes, called quotientopes, which generalize hypercubes, associahedra, permutahedra etc.
Our algorithm generates the equivalence classes of each of those lattice congruences, by producing a Hamilton path on the skeleton of the corresponding quotientope, yielding a constructive proof that each of these highly symmetric graphs is Hamiltonian.
We thus also obtain a provable notion of optimality for the Gray codes obtained from our framework: They translate into walks along the edges of a polytope.
\end{abstract}

\keywords{Exhaustive generation algorithm, Gray code, pattern-avoiding permutation, weak order, lattice congruence, quotientope, Hamilton path, rectangulation}
\subjclass[2010]{05A05, 05C45, 06B05, 06B10, 52B11, 52B12}

\maketitle

\newpage

\section{Introduction}
\label{sec:intro}

In mathematics and computer science we frequently encounter different kinds of combinatorial objects, such as permutations, binary strings, binary trees, set partitions, spanning trees of a graph, and so forth.
There are three recurring fundamental algorithmic tasks that we want to perform with such objects: counting, random generation, and exhaustive generation.
For the first two tasks, there are powerful general methods available, such as generating functions~\cite{MR2483235} and Markov chains~\cite{MR1960003}, solving both problems for a large variety of different objects.
For the third task, namely exhaustive generation, however, we are lacking such a powerful and unifying theory, even though some first steps in this direction have been made (see Section~\ref{sec:related} below).
Nonetheless, the literature contains a vast number of algorithms that solve the exhaustive generation problem for specific classes of objects, and many of these algorithms are covered in depth in the most recent volume of Knuth's seminal series \emph{`The Art of Computer Programming'}~\cite{MR3444818}.

\subsection{Overview of our results}

The main contribution of this paper is a general and versatile algorithmic framework for exhaustively generating a large variety of different combinatorial objects, which provides a unified view on many known results and allows us to prove many new ones.
The basic idea is to encode a particular set of objects as a set of permutations~$L_n\seq S_n$, where $S_n$ denotes the set of all permutations of $[n]:=\{1,2,\ldots,n\}$, and to use a simple greedy algorithm to generate those permutations by cyclic rotations of substrings, an operation we call a \emph{jump}.
This works under very mild assumptions on the set~$L_n$, and allows us to generate more than double-exponentially (in~$n$) many distinct sets~$L_n$.
Moreover, the jump orderings obtained from our algorithm translate into listings of combinatorial objects where consecutive objects differ by small changes, i.e., we obtain \emph{Gray codes}~\cite{MR1491049}, and those changes are smallest possible in a provable sense.
The main tools of our framework are Algorithm~J and Theorem~\ref{thm:jump} in Section~\ref{sec:jump}.
In particular, we obtain the following four classical Gray codes as special cases:
(1) the Steinhaus-Johnson-Trotter algorithm to generate all permutations of~$[n]$ by adjacent transpositions, also known as plain change order~\cite{DBLP:journals/cacm/Trotter62,MR0159764};
(2) the binary reflected Gray code (BRGC) to generate all binary strings of length~$n$ by flipping a single bit in each step~\cite{gray_1953};
(3) the Gray code for generating all $n$-vertex binary trees by rotations due to Lucas, Roelants van Baronaigien, and Ruskey~\cite{MR1239499};
(4) the Gray code for generating all set partitions of~$[n]$ by exchanging an element in each step due to Kaye~\cite{MR0443423}.

We see two main applications for our new framework:
The first application is the generation of pattern-avoiding permutations, yielding new Gray codes for different families of permutations that are characterized by the avoidance of certain classical patterns, vincular and bivincular patterns~\cite{MR1758852,MR2652101}, barred patterns~\cite{MR2716312}, boxed patterns~\cite{MR2973348}, Bruhat-restricted patterns~\cite{MR2264071}, mesh patterns~\cite{MR2795782}, monotone and geometric grid classes~\cite{MR2240760,MR3091268}, and many others.
We also obtain new Gray codes for all the combinatorial objects that are in bijection to these permutations, in particular for five different types of geometric rectangulations~\cite{MR2233287,MR2864445,MR3084577,MR3878132}, also known as floorplans, which are divisions of a square into $n$ rectangles subject to different restrictions.
Our results on pattern-avoiding permutations are the focus of this paper.

The second application of our framework are lattice congruences of the weak order on the symmetric group~$S_n$.
This area has beautiful ramifications into groups, posets, polytopes, geometry, and combinatorics, and has been developed considerably in recent years, in particular thanks to Nathan Reading's works, summarized in~\cite{MR3221544, MR3645056, MR3645055}.
There are double-exponentially many distinct such lattice congruences, and they generalize many known lattices such as the Boolean lattice, the Tamari lattice~\cite{MR0146227}, and certain Cambrian lattices~\cite{MR2258260,MR3628225}.
Recently, Pilaud and Santos~\cite{MR3964495} realized all those lattice congruences as $(n-1)$-dimensional polytopes, called quotientopes, which generalize hypercubes, associahedra, permutahedra etc.
Our algorithm generates the equivalence classes of each of those lattice congruences, by producing a Hamilton path on the skeleton of the corresponding quotientope, yielding a constructive proof that each of these highly symmetric graphs is Hamiltonian.
Our results in this area are presented in part~II of this paper series~\cite{perm_series_ii}.

\subsection{Related work}
\label{sec:related}

Avis and Fukuda~\cite{MR1380066} introduced \emph{reverse-search} as a general technique for exhaustive generation.
Their idea is to consider the set of objects to be generated as the nodes of a graph, and to connect them by edges that model local modification operations (for instance, adjacent transpositions for permutations).
The resulting \emph{flip graph} is equipped with an objective function, and the directed tree formed by the movements of a local search algorithm that optimizes this function is traversed backwards from the optimum node, using an adjacency oracle.
The authors applied this technique successfully to derive efficient generation algorithms for a number of different objects; for instance, triangulations of a point set, spanning trees of a graph, etc.
Reverse-search is complementary to our permutation based approach, as both techniques use fundamentally different encodings of the objects.
The permutation encoding seems to allow for more fine-grained control (optimal Gray codes) and even faster generation algorithms.

Another method for combinatorial counting and exhaustive generation is the \emph{ECO framework} introduced by Barcucci, Del Lungo, Pergola, and Pinzani~\cite{MR1717162}.
The main tool is an infinite tree with integer node labels, and a set of production rules for creating the children of a node based on its label.
Bacchelli, Barcucci, Grazzini, and Pergola~\cite{MR2074330} also used ECO for exhaustive generation, deriving an efficient algorithm for generating the corresponding root-to-node label sequences in the ECO tree in lexicographic order, which was later turned into a Gray code~\cite{MR2329122}.
Dukes, Flanagan, Mansour, and Vajnovszki~\cite{MR2412243}, Baril~\cite{MR2530633}, and Do, Tran and Vajnovszki~\cite{do_tran_vajnovszki_2019} used ECO for deriving Gray codes for different classes of pattern-avoiding permutations, which works under certain regularity assumptions on the production rules.
Vajnovszki~\cite{MR2654252} also applied ECO for efficiently generating other classes of permutations, such as involutions and derangements.
The main difference between ECO and our framework is that the change operations on the label sequences of the ECO tree do not necessarily correspond to Gray-code like changes on the corresponding combinatorial objects.
Minimal jumps in a permutation, on the other hand, always correspond to minimal changes on the combinatorial objects in a provable sense, even though they may involve several entries of the permutation.

Li and Sawada~\cite{MR2483809} considered another tree-based approach for generating so-called \emph{reflectable languages}, yielding Gray codes for $k$-ary strings and trees, restricted growth strings, and open meandric systems (see also~\cite{DBLP:conf/iccsa/XiangCU10}).
Ruskey, Sawada, and Williams~\cite{MR2844089, MR2900417} proposed a generation framework based on binary strings with a fixed numbers of 1s, called \emph{bubble languages}, which can generate e.g.\ combinations, necklaces, Dyck words, and Lyndon words.
In the resulting cool-lex Gray codes, any two consecutive words differ by cyclic rotation of some prefix.

Pattern avoidance in permutations is a central topic in combinatorics, as illustrated by the books~\cite{MR3012380,MR2919720}, and by the conference `Permutation Patterns', held annually since~2003.
Given two permutations~$\pi$ and~$\tau$, we say that $\pi$ \emph{contains the pattern $\tau$}, if $\pi$ contains a subpermutation formed by (not necessarily consecutive) entries that appear in the same relative order as in~$\tau$; otherwise we say that \emph{$\pi$ avoids~$\tau$}.
It is well known that many fundamental classes of combinatorial objects are in bijection with pattern-avoiding permutations (see Tables~\ref{tab:tame1} and~\ref{tab:tame2} and~\cite{tenner_website}).
For instance, Knuth~\cite{MR3077154} first proved that all 123-avoiding and 132-avoiding permutations are counted by the Catalan numbers (see also~\cite{MR2721479}).
With regards to counting and exhaustive generation, a few tree-based algorithms for pattern-avoiding permutations have been proposed~\cite{MR2376115,MR2412243,MR2566173,MR2530633}.
Pattern-avoidance has also been studied extensively from an algorithmic point of view.
In fact, testing whether a permutation~$\pi$ contains another permutation~$\tau$ as a pattern is known to be NP-complete in general~\cite{MR1620935}.
Jel{\'{i}}nek and Kyn{\v{c}}l~\cite{MR3627753} proved that the problem remains hard even if~$\pi$ and~$\tau$ have no decreasing subsequence of length~4 and~3, respectively, which is best possible.
On the algorithmic side, Guillemot and Marx~\cite{MR3376367} showed that the problem can be solved in time $2^{O(k^2\log k)}n \log n$, where $n$ is the length of~$\pi$ and~$k$ is the length of~$\tau$, a considerable improvement over the obvious $O(n^k)$ algorithm (see also~\cite{kozma_2019}).
In particular, for a pattern of constant length~$k$ their algorithm runs in almost linear time.
It is also known that for a fixed set of forbidden patterns, computing the number of pattern-avoiding permutations is hard~\cite{MR3478442}.

\subsection{Outline of this and future papers}

This is the first in a series of papers where we develop our theory of combinatorial generation via permutation languages.
In this first paper we focus on presenting the fundamental algorithmic ideas (Section~\ref{sec:jump}) and show how to derive the four aforementioned classical Gray codes from our framework (Section~\ref{sec:examples}).
We also discuss the main applications of our framework to pattern-avoiding permutations (Sections~\ref{sec:pattern} and~\ref{sec:algebra}), and its limitations (Section~\ref{sec:limits}).
In part~II of the series~\cite{perm_series_ii}, we apply the framework to lattice congruences of the weak order on the symmetric group~$S_n$.
In part~III we discuss the generation of different types of rectangulations.
In part~IV we cover general methods to make the generation algorithms derived from our framework efficient, and we discuss the problems of ranking/unranking for the orderings obtained from our framework, which are highly relevant for the task of random generation mentioned in the beginning.
Two of the latter topics, namely efficient algorithms and rectangulations, are only very briefly discussed in this paper (see Section~\ref{sec:efficient} and Figure~\ref{fig:tbaxter4} below, respectively).

\section{Generating permutations by jumps}
\label{sec:jump}

In this section we present a simple greedy algorithm, Algorithm~J, for exhaustively generating a given set~$L_n\seq S_n$ of permutations, and we show that the algorithm works successfully under very mild assumptions on the set~$L_n$ (Theorem~\ref{thm:jump}).

\subsection{Preliminaries}

We use $S_n$ to denote the set of all permutations of $[n]:=\{1,\ldots,n\}$, and we write~$\pi\in S_n$ in one-line notation as $\pi=\pi(1) \pi(2)\cdots \pi(n)=a_1 a_2\cdots a_n$.
We use $\ide_n=12\cdots n$ to denote the identity permutation, and $\varepsilon\in S_0$ to denote the empty permutation.
For any $\pi\in S_{n-1}$ and any $1\leq i\leq n$, we write $c_i(\pi)\in S_n$ for the permutation obtained from~$\pi$ by inserting the new largest value~$n$ at position~$i$ of~$\pi$, i.e., if $\pi=a_1\cdots a_{n-1}$ then $c_i(\pi)=a_1\cdots a_{i-1} \, n\, a_i \cdots a_{n-1}$.
Moreover, for~$\pi\in S_n$, we write $p(\pi)\in S_{n-1}$ for the permutation obtained from~$\pi$ by removing the largest entry~$n$.
Here, $c_i$ and~$p$ stand for the child and parent of a node in the tree of permutations discussed shortly.

Given a permutation $\pi=a_1\cdots a_n$ with a substring $a_i\cdots a_j$ with $a_i>a_{i+1},\ldots,a_j$, a \emph{right jump of the value~$a_i$ by $j-i$~steps} is a cyclic left rotation of this substring by one position to $a_{i+1}\cdots a_j a_i$.
Similarly, given a substring $a_i\cdots a_j$ with $a_j>a_i,\ldots,a_{j-1}$, a \emph{left jump of the value~$a_j$ by $j-i$~steps} is a cyclic right rotation of this substring to $a_j a_i\cdots a_{j-1}$.
For example, a right jump of the value~5 in the permutation~$265134$ by 2 steps yields~$261354$.

\subsection{The basic algorithm}
\label{sec:algo}

Our approach starts with the following simple greedy algorithm to generate a set of permutations $L_n\seq S_n$.
We say that a jump is \emph{minimal} (w.r.t.~$L_n$), if every jump of the same value in the same direction by fewer steps creates a permutation that is not in~$L_n$.
Note that each entry of the permutation admits at most one minimal left jump and at most one minimal right jump.

\begin{algo}{Algorithm~J}{Greedy minimal jumps}
This algorithm attempts to greedily generate a set of permutations $L_n\seq S_n$ using minimal jumps starting from an initial permutation $\pi_0 \in L_n$.
\begin{enumerate}[label={\bfseries J\arabic*.}, leftmargin=8mm, noitemsep, topsep=3pt plus 3pt]
\item{} [Initialize] Visit the initial permutation~$\pi_0$.
\item{} [Jump] Generate an unvisited permutation from~$L_n$ by performing a minimal jump of the largest possible value in the most recently visited permutation.
If no such jump exists, or the jump direction is ambiguous, then terminate.
Otherwise visit this permutation and repeat~J2.
\end{enumerate}
\end{algo}

Put differently, in step~J2 we consider the entries $n,n-1,\ldots,2$ of the current permutation in decreasing order, and for each of them we check whether it allows a minimal left or right jump that creates a previously unvisited permutation, and we perform the first such jump we find, unless the same entry also allows a jump in the opposite direction, in which case we terminate.
If no minimal jump creates an unvisited permutation, we also terminate the algorithm.
For example, consider $L_4 = \{1243, 1423, 4123, 4213, 2134\}$.
Starting with $\pi_0 = 1243$, the algorithm generates~$\pi_1=1423$ (obtained from~$\pi_0$ by a left jump of the value~4 by 1~step), then~$\pi_2=4123$, then~$\pi_3=4213$ (in~$\pi_2$, 4 cannot jump, as~$\pi_0$ and~$\pi_1$ have been visited before; 3 cannot jump either to create any permutation from~$L_4$, so 2 jumps left by 1~step), then $\pi_4=2134$, successfully generating~$L_4$.
If instead we initialize with $\pi_0=4213$, then the algorithm generates $\pi_1=2134$, and then stops, as no further jump is possible.
If we choose $\pi_0=1423$, then we may jump~4 to the left or right (by 1~step), but as the direction is ambiguous, the algorithm stops immediately.
As mentioned before, the algorithm may stop before having visited the entire set~$L_n$ either because no minimal jump leading to a new permutation from~$L_n$ is possible, or because the direction of jump is ambiguous in some step.
By the definition of step~J2, the algorithm will never visit any permutation twice.

\subsection{Zigzag languages}
\label{sec:zigzag}

The following main result of our paper provides a sufficient condition on the set~$L_n$ to guarantee that Algorithm~J is successful (cf.~Section~\ref{sec:limits}).
This condition is captured by the following closure property of the set~$L_n$.
A set of permutations~$L_n\seq S_n$ is called a \emph{zigzag language}, if either $n=0$ and $L_0=\{\varepsilon\}$, or if $n\geq 1$ and $L_{n-1}:=\{p(\pi)\mid \pi\in L_n\}$ is a zigzag language satisfying the following condition:
\begin{enumerate}[leftmargin=8mm, noitemsep, topsep=3pt plus 3pt]
\item[(z)] For every $\pi\in L_{n-1}$ we have~$c_1(\pi)\in L_n$ and~$c_n(\pi)\in L_n$.
\end{enumerate}

\begin{theorem}
\label{thm:jump}
Given any zigzag language of permutations~$L_n$ and initial permutation $\pi_0 = \ide_n$, Algorithm~J visits every permutation from~$L_n$ exactly once.
\end{theorem}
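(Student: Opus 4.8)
The plan is to prove this by induction on $n$, exploiting the recursive structure built into the definition of a zigzag language.

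The plan is to prove the theorem by induction on $n$, exploiting the recursive structure built into the definition of a zigzag language. The base cases $n=0,1$ are immediate, since $L_0=\{\varepsilon\}$ and $L_1=\{1\}$ are singletons. For the inductive step I would assume that Algorithm~J succeeds on the zigzag language $L_{n-1}$, producing a listing $\rho_1=\ide_{n-1},\rho_2,\ldots,\rho_N$ of all its elements, and then show that Algorithm~J on $L_n$ produces a listing obtained by lifting this one: over each $\rho_k$ the largest value $n$ sweeps monotonically through all the admissible insertion positions $\{i : c_i(\rho_k)\in L_n\}$, and the sweep directions alternate from one $\rho_k$ to the next---precisely the zigzag pattern that gives these languages their name.

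The engine of the argument is the observation that, because $n$ is the largest entry, it can perform a right jump (resp.\ a left jump) over any block of smaller entries to its right (resp.\ left); the substring inequality in the definition of a jump is then satisfied automatically. Hence a sequence of minimal right jumps of $n$, started with $n$ in the leftmost position, moves $n$ through exactly the admissible positions in increasing order and terminates at position $n$; condition~(z) guarantees that both extreme positions $1$ and~$n$ are admissible, so each sweep runs fully from one end to the other. I would then check that during such a sweep the direction is never ambiguous: once $n$ has reached an interior admissible position from the left, its nearest admissible position to the left is the one it just came from and is therefore already visited, so only the rightward minimal jump reaches a new permutation. Since $n$ is the largest value and can still jump, Algorithm~J always prefers to move $n$ until it is blocked at an extreme, which is why an entire sweep is completed before any smaller value moves.

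The heart of the inductive step---and the step I expect to be the main obstacle---is the transition between consecutive sweeps. When $n$ sits at an extreme position, say position~$n$, giving the permutation $c_n(\rho_k)$, the first $n-1$ entries are exactly $\rho_k$ and are untouched by~$n$, so a minimal jump of any value $v\le n-1$ corresponds, under the parent map $p$, to the same jump performed in $\rho_k$. I would prove a commutation lemma asserting that such a jump is minimal with respect to $L_n$ if and only if the corresponding jump is minimal with respect to $L_{n-1}$: the ``only if'' direction is immediate, and the ``if'' direction follows because a shorter jump staying inside $L_n$ would project under $p$ to a shorter jump staying inside $L_{n-1}$, contradicting minimality there (here one uses that $p$ maps $L_n$ into $L_{n-1}$ by the very definition of $L_{n-1}$). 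Thus the jump that Algorithm~J on $L_{n-1}$ uses to pass from $\rho_k$ to $\rho_{k+1}$ lifts to the unique minimal jump of the largest admissible value $\le n-1$ in $c_n(\rho_k)$, producing $c_n(\rho_{k+1})$; condition~(z) ensures $c_n(\rho_{k+1})\in L_n$, so $n$ stays parked at the extreme and the next sweep begins in the opposite direction.

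Assembling these pieces, the fibers $\{\pi\in L_n : p(\pi)=\rho_k\}$ partition $L_n$, each fiber is traversed completely and exactly once by its sweep, and the sweeps are stitched together in the order $\rho_1,\ldots,\rho_N$ dictated by the inductive hypothesis. After the final sweep over $\rho_N$ no value $\le n-1$ admits a minimal jump to an unvisited permutation (since Algorithm~J halts there on $L_{n-1}$) and $n$ is blocked at an extreme, so Algorithm~J on $L_n$ halts having visited each element of $L_n$ exactly once. Starting from $\pi_0=\ide_n=12\cdots n$ fits this scheme, since $n$ begins in the rightmost position over $\rho_1=\ide_{n-1}$ and the first sweep therefore proceeds leftward. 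The subtleties to handle with care are the commutation lemma and the claim that each sweep visits precisely the admissible fiber; the remainder is bookkeeping layered on top of the inductive hypothesis.
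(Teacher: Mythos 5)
Your proposal is correct and takes essentially the same route as the paper's proof: the paper likewise defines the target ordering $J(L_n)$ by alternating sweeps of the value~$n$ over the inductively generated ordering of $L_{n-1}$, and handles the transition between sweeps by projecting the candidate jump under $p$ and invoking the inductive hypothesis for $L_{n-1}$ (your ``commutation lemma'' is exactly this projection argument, with condition~(z) supplying the lift back into $L_n$). The only difference is presentational: the paper carries the unambiguity of the jump direction explicitly as part of a double induction, whereas you derive it from the fact that a successful run on $L_{n-1}$ can never have encountered an ambiguous step.
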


\begin{remark}
\label{rem:num-zigzag}
It is easy to see that the number of zigzag languages is at least $2^{(n-1)!(n-2)}=2^{2^{\Theta(n\log n)}}$, i.e., it is more than double-exponential in~$n$.
We will see that many of these languages do in fact encode interesting combinatorial objects.
Moreover, minimal jumps as performed by Algorithm~J always translate to small changes on those objects in a provable sense, i.e., our algorithm defines Gray codes for a large variety of combinatorial objects, and Hamilton paths/cycles on the corresponding flip graphs and polytopes.
\end{remark}

Before we present the proof of Theorem~\ref{thm:jump}, we give two equivalent characterizations of zigzag languages.

\subsubsection{Characterization via the tree of permutations}
\label{sec:tree}

There is an intuitive characterization of zigzag languages via the \emph{tree of permutations}.
This is an infinite (unordered) rooted tree which has as nodes all permutations from~$S_n$ at distance~$n$ from the root; see Figure~\ref{fig:tree}.
Specifically, the empty permutation~$\varepsilon$ is at the root, and the children of any node~$\pi\in S_{n-1}$ are exactly the permutations~$c_i(\pi)$, $1\leq i\leq n$, i.e., the permutations obtained by inserting the new largest value~$n$ in all possible positions.
Consequently, the parent of any node~$\pi'\in S_n$ is exactly the permutation $p(\pi')$ obtained by removing the largest value~$n$.
In the figure, for any node~$\pi\in S_{n-1}$, the nodes representing the children~$c_1(\pi)$ and~$c_n(\pi)$ are drawn black, whereas the other children are drawn white.
Any zigzag language of permutations can be obtained from this full tree by pruning subtrees, where by condition~(z) a subtree may be pruned only if its root $\pi'\in S_n$ is neither the child~$c_1(\pi)$ nor the child~$c_n(\pi)$ of its parent~$\pi=p(\pi')\in S_{n-1}$, i.e., only subtrees rooted at white nodes may be pruned.
For any subtree obtained by pruning according to this rule and for any~$n\geq 1$, the remaining permutations of length~$n$ form a zigzag language~$L_n$; see Figure~\ref{fig:ptree}.

\begin{figure}
\includegraphics{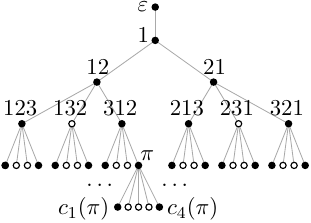}
\caption{Tree of permutations, where the children~$c_1(\pi)$ and~$c_n(\pi)$ of any node $\pi\in S_{n-1}$ are drawn black, all others white.}
\label{fig:tree}
\end{figure}

\begin{figure}
\includegraphics{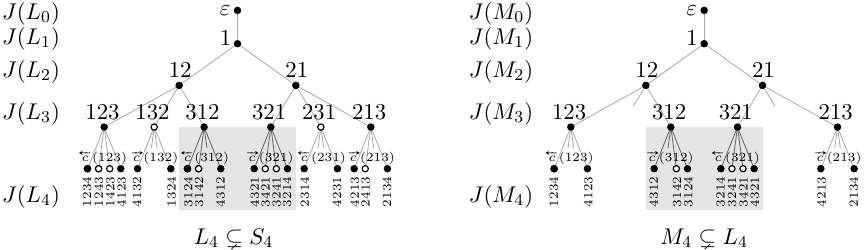}
\caption{Ordered tree representation of two zigzag languages of permutations $L_4$ (left) and~$M_4$ (right) with $M_4\subsetneq L_4\subsetneq S_4$.
Both trees contain the same sets of permutations in the subtrees rooted at~312 and~321 (highlighted in gray), but in the corresponding sequences~$J(L_4)$ and~$J(M_4)$, those permutations appear in different relative order due to the node~132, which was pruned from the right tree.}
\label{fig:ptree}
\end{figure}

Consider all nodes in the tree for which the entire path to the root consists only of black nodes.
Those nodes never get pruned and are therefore contained in any zigzag language.
These are exactly all permutations without peaks.
A \emph{peak} in a permutation $a_1\cdots a_n$ is a triple $a_{i-1} a_i a_{i+1}$ with $a_{i-1}<a_i>a_{i+1}$, and the language of permutations without peaks is generated by the recurrence $P_0:=\{\varepsilon\}$ and $P_n:=\{c_1(\pi),c_n(\pi)\mid \pi\in P_{n-1}\}$ for~$n\geq 1$.
It follows that we have $|P_n|=2^{n-1}$ and $P_n\seq L_n\seq S_n$ for any zigzag language~$L_n$, i.e., $L_n$ is sandwiched between the language of permutations without peaks and the language of all permutations.

\subsubsection{Characterization via nuts}
\label{sec:nuts}

Given a permutation~$\pi$, we may repeatedly remove the largest value from it as long as it is in the leftmost or rightmost position, and obtain what is called the \emph{nut} of~$\pi$.
For example, given $\pi=965214378$, we can remove $9,8,7,6,5$, yielding $2143$ as the nut of~$\pi$.
A left or right jump of some value in a permutation is \emph{maximum} if there is no left jump or right jump of the same value with more steps.
For example, in~$\pi=965214378$ a maximum right jump of the value~$6$ gives $\pi'=952143678$.
By unrolling the recursive definition of zigzag languages from before, we obtain that $L_n\seq S_n$ is a zigzag language if and only if for all $\pi \in L_n$ both the maximum left jump and the maximum right jump of the value~$i$ yield another permutation in~$L_n$ for all $k \leq i \leq n$, where $k$ is the largest value in $\pi$'s nut (with $k = 2$ if the nut is empty).

\subsection{Proof of Theorem~\ref{thm:jump}}

Given a zigzag language~$L_n$, we define a sequence~$J(L_n)$ of all permutations from~$L_n$, and we prove that Algorithm~J generates the permutations of~$L_n$ exactly in this order.
For any $\pi\in L_{n-1}$ we let $\rvec{c}(\pi)$ be the sequence of all $c_i(\pi)\in L_n$ for $i=1,2,\ldots,n$, starting with~$c_1(\pi)$ and ending with~$c_n(\pi)$, and we let $\lvec{c}(\pi)$ denote the reverse sequence, i.e., it starts with~$c_n(\pi)$ and ends with~$c_1(\pi)$.
In words, those sequences are obtained by inserting into~$\pi$ the new largest value~$n$ from left to right, or from right to left, respectively, in all possible positions that yield a permutation from~$L_n$, skipping the positions that yield a permutation that is not in~$L_n$.
The sequence~$J(L_n)$ is defined recursively as follows:
If $n=0$ then we define $J(L_0):=\varepsilon$, and if $n\geq 1$ then we consider the finite sequence $J(L_{n-1})=:\pi_1,\pi_2,\ldots$ and define
\begin{equation}
\label{eq:JLn}
J(L_n):=\lvec{c}(\pi_1),\rvec{c}(\pi_2),\lvec{c}(\pi_3),\rvec{c}(\pi_4),\ldots,
\end{equation}
i.e., this sequence is obtained from the previous sequence by inserting the new largest value~$n$ in all possible positions alternatingly from right to left, or from left to right; see Figure~\ref{fig:ptree}.

\begin{remark}
Algorithm~J thus defines a left-to-right ordering of the nodes at distance~$n$ of the root in the tree representation of the zigzag language~$L_n$ described before, and this ordering is captured by the sequence~$J(L_n)$; see Figure~\ref{fig:ptree}.
Clearly, the same is true for all the zigzag languages~$L_0,L_1,\ldots,L_{n-1}$ that are induced by~$L_n$ through the rule~$L_{k-1}:=\{p(\pi)\mid \pi\in L_k\}$ for $k=n,n-1,\ldots,1$.
The unordered tree is thus turned into an ordered tree, and it is important to realize that pruning operations change the ordering.
Specifically, given two zigzag languages $L_n$ and~$M_n$ with~$M_n\seq L_n$, the tree for~$M_n$ is obtained from the tree for~$L_n$ by pruning, but in general $J(M_n)$ is \emph{not} a subsequence of~$J(L_n)$, as shown by the example in Figure~\ref{fig:ptree}.
This shows that our approach is quite different from the one presented by Vajnovszki and Vernay~\cite{MR2817083}, which considers only subsequences of the Steinhaus-Johnson-Trotter order~$J(S_n)$.
\end{remark}

\begin{proof}[Proof of Theorem~\ref{thm:jump}]
For any $\pi\in L_n$, we let~$J(L_n)_\pi$ denote the subsequence of~$J(L_n)$ that contains all permutations up to and including~$\pi$.
An immediate consequence of the definition of zigzag language is that $L_n$ contains the identity permutation~$\ide_n=c_n(\ide_{n-1})$.
Moreover, the definition~\eqref{eq:JLn} implies that $\ide_n$ is the very first permutation in the sequence~$J(L_n)$.

We now argue by double induction over $n$ and the length of~$J(L_n)$ that Algorithm~J generates all permutations from~$L_n$ exactly in the order described by the sequence~$J(L_n)$, and that when we perform a minimal jump with the largest possible value to create a previously unvisited permutation, then there is only one direction (left or right) to which it can jump.
The induction basis~$n=0$ is clear.
Now suppose the claim holds for the zigzag language~$L_{n-1}:=\{p(\pi)\mid \pi\in L_n\}$.
We proceed to show that it also holds for~$L_n$.

As argued before, the identity permutation~$\ide_n$ is the first permutation in the sequence~$J(L_n)$, and this is indeed the first permutation visited by Algorithm~J in step~J1.
Now let $\pi\in L_n$ be the permutation currently visited by the algorithm in step~J2, and let $\pi':=p(\pi)\in L_{n-1}$.
If $\pi'$ appears at an odd position in~$J(L_{n-1})$, then we define $\bar{c}:=\lvec{c}(\pi')$ and otherwise we define $\bar{c}:=\rvec{c}(\pi')$.
By~\eqref{eq:JLn}, we know that $\pi$ appears in the subsequence $\bar{c}$ within~$J(L_n)$.
We first consider the case that $\pi$ is not the last permutation in~$\bar{c}$.
In this case, the permutation $\rho$ succeeding $\pi$ in~$J(L_n)$ is obtained from~$\pi$ by a minimal jump (w.r.t.~$L_n$) of the largest value~$n$ in some direction~$d$, which is left if $\bar{c}=\lvec{c}(\pi')$ and right if $\bar{c}=\rvec{c}(\pi')$.
Now observe that by the definition of~$\bar{c}$, all permutations in~$L_n$ obtained from~$\pi$ by jumping~$n$ in the direction opposite to~$d$ precede~$\pi$ in~$J(L_n)$ and have been visited by Algorithm~J before by induction.
Consequently, to generate a previously unvisited permutation, the value~$n$ can only jump in direction~$d$ in step~J2 of the algorithm.
Again by the definition of~$\bar{c}$, the permutation~$\rho$ is obtained from~$\pi$ by a minimal jump (w.r.t.~$L_n$), so the next permutation generated by the algorithm will indeed be~$\rho$.

It remains to consider the case that $\pi$ is the last permutation in the subsequence~$\bar{c}$ within~$J(L_n)$.
Let $\rho'$ be the permutation succeeding~$\pi'$ in~$J(L_{n-1})$.
By induction, we have the following property~(*): $\rho'$ is obtained from~$\pi'$ by a minimal jump (w.r.t.~$L_{n-1}$) of the largest possible value~$a$ by $k$~steps in some direction~$d$ (left or right), and~$a$ can jump only into one direction.
As $\pi$ is the last permutation in~$\bar{c}$, the largest value~$n$ of~$\pi$ is at the boundary, which is the left boundary if $\bar{c}=\lvec{c}(\pi')$ or the right boundary if $\bar{c}=\rvec{c}(\pi')$.
By~\eqref{eq:JLn}, the permutation~$\rho$ succeeding~$\pi$ in~$J(L_n)$ also has $n$ at the same boundary, i.e., $\rho$ differs from~$\pi$ by a jump of the value~$a$ by $k$~steps in direction~$d$.
Suppose for the sake of contradiction that when transforming the currently visited permutation~$\pi$ in step~J2, the algorithm does not perform this jump operation, but another one.
This could be a jump of a larger value~$b>a$ to transform~$\pi$ into some permutation~$\tau\in L_n$ that is different from~$\rho$ and not in~$J(L_n)_\pi$, or a jump of the value~$a$ in the direction opposite to~$d$, or a jump of the value~$a$ in direction~$d$ by fewer than~$k$ steps.
But in all those cases the permutation~$\tau':=p(\tau)\in L_{n-1}$ is different from~$\rho'$ and not in~$J(L_{n-1})_{\pi'}$, and it is obtained from~$\pi'$ by a jump of the value~$b>a$, or a jump of the value~$a$ in the direction opposite to~$d$, or a jump of the value~$a$ in direction~$d$ by fewer than~$k$ steps, respectively, a contradiction to property~(*).
This completes the proof.
\end{proof}

\subsection{Further properties of Algorithm~J}

The next lemma captures when Algorithm~J generates a \emph{cyclic} listing of permutations.

\begin{lemma}
\label{lem:cyclic}
In the ordering of permutations~$J(L_n)$ generated by Algorithm~J, the first and last permutation are related by a minimal jump if and only if $|L_i|$ is even for all~$2\leq i\leq n-1$.
\end{lemma}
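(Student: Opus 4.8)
The plan is to analyze the structure of $J(L_n)$ by tracking its first and last permutations through the recursion~\eqref{eq:JLn}, and to determine exactly when the first and last permutation differ by a single minimal jump. I would first observe that $J(L_n)$ begins with $\ide_n$, as already established in the proof of Theorem~\ref{thm:jump}. The key is therefore to understand the \emph{last} permutation of $J(L_n)$ and to relate it to $\ide_n$ via the recursive construction. Recall from~\eqref{eq:JLn} that $J(L_n)$ is built from $J(L_{n-1})=\pi_1,\pi_2,\ldots$ by alternately appending $\lvec{c}(\pi_i)$ and $\rvec{c}(\pi_i)$. Thus the last block of $J(L_n)$ is $\lvec{c}(\pi_{|L_{n-1}|})$ if $|L_{n-1}|$ is odd and $\rvec{c}(\pi_{|L_{n-1}|})$ if $|L_{n-1}|$ is even, where $\pi_{|L_{n-1}|}$ is the last permutation of $J(L_{n-1})$. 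Consequently the value $n$ sits at the left boundary of the last permutation of $J(L_n)$ when $|L_{n-1}|$ is odd, and at the right boundary when $|L_{n-1}|$ is even.

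Next I would set up the comparison between the first permutation $\ide_n=c_n(\ide_{n-1})$ and the last permutation, call it $\omega_n$. Since $\ide_n$ has $n$ at its right boundary, a jump of $n$ alone can relate $\ide_n$ and $\omega_n$ only if $\omega_n$ also has $n$ at a boundary and the two permutations agree after removing $n$; that is, only if $p(\omega_n)=\ide_{n-1}$. But $p(\omega_n)$ is precisely the last permutation of $J(L_{n-1})$. This suggests the right inductive hypothesis: I would prove by induction on $n$ that the first and last permutations of $J(L_n)$ differ by a minimal jump of the value~$n$ (and that this is the only value that could jump) exactly when $|L_i|$ is even for all $2\le i\le n-1$, and I would simultaneously track that in the boundary (even) case the last permutation is $\omega_n=c_1(\ide_{n-1})$, i.e.\ $n$ is at the left while the rest is the identity, so that a single left jump of $n$ across the whole string returns to $\ide_n$. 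The even/odd bookkeeping from the previous paragraph is what makes this work: when $|L_{n-1}|$ is even, $\omega_n=c_1(p(\omega_n))$, and the inductive hypothesis forces $p(\omega_n)=\ide_{n-1}$, giving the clean cyclic relation; when $|L_{n-1}|$ is odd, the value $n$ ends at the \emph{right} boundary, so $\omega_n$ and $\ide_n$ both have $n$ rightmost and differ only through their parents, reducing the whole question to the same statement one level down for $L_{n-1}$.

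The main obstacle, and the step requiring the most care, is proving the \emph{only if} direction cleanly: namely that if some $|L_i|$ is odd for an index in the range $2\le i\le n-1$, then the first and last permutation of $J(L_n)$ are \emph{not} related by a single minimal jump. Here I would argue that the smallest such odd index $i$ propagates a ``defect'' upward: below level $i$ the identity structure is preserved, but at level $i$ the last permutation of $J(L_i)$ fails to be $c_1(\ide_{i-1})$ or $\ide_i$, so that $p^{(n-i)}(\omega_n)$ differs from $\ide_i$ in more than just the position of the largest inserted value, ruling out a single jump. I would also need to address the small-index boundary of the claim, explaining why $|L_0|,|L_1|$ and $|L_n|$ are excluded from the parity condition: $|L_0|=|L_1|=1$ always, and the parity of $|L_n|$ itself is irrelevant because the alternation governing the final block at the top level is driven by $|L_{n-1}|$, not $|L_n|$. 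Verifying that the value being jumped is genuinely $n$ (the largest value, consistent with Algorithm~J's greedy rule) and that the jump is minimal — a full-length cyclic rotation of the whole permutation, which is automatically minimal since no shorter jump of $n$ keeps the result in $L_n$ while matching $\ide_n$ — is the remaining routine check.
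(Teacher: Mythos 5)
Your first paragraph is fine and matches the paper's setup: writing $\omega_i$ for the last permutation of $J(L_i)$, the recursion~\eqref{eq:JLn} gives $\omega_i=c_1(\omega_{i-1})$ if $|L_{i-1}|$ is odd and $\omega_i=c_i(\omega_{i-1})$ if $|L_{i-1}|$ is even. But from the second paragraph on your picture of the closing jump is wrong, and this is a genuine gap rather than a presentational one. You claim that in the cyclic case $\omega_n=c_1(\ide_{n-1})$, that $p(\omega_n)=\ide_{n-1}$, and that the closing jump is a jump of the value~$n$ across the whole string. None of this can happen for $n\geq 3$: $p(\omega_n)=\omega_{n-1}$ is the \emph{last} permutation of $J(L_{n-1})$, and since $J(L_{n-1})$ begins with $\ide_{n-1}$ and never repeats a permutation, $\omega_{n-1}\neq\ide_{n-1}$ whenever $|L_{n-1}|\geq 2$; hence no single jump of~$n$ relates $\ide_n$ and $\omega_n$. (You also reverse the parities relative to your own first paragraph: $|L_{n-1}|$ odd puts $n$ at the \emph{left} boundary of $\omega_n$.) Your ``only if'' argument inherits the same misconception: ``below level $i$ the identity structure is preserved'' is already false at level~$2$, where $\omega_2=21$.

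The correct picture---and the paper's one-line proof---comes from iterating your own recursion all the way down: $\omega_n$ is built from the permutation $1$ by inserting $2,3,\ldots,n$ one at a time at the left or right boundary, and since $|L_1|=1$ is odd the value $2$ is inserted on the \emph{left}, so $2$ precedes $1$ in $\omega_n$ for every zigzag language. A single jump preserves the relative order of all values except the one that jumps, and the value $1$ can never jump (it has nothing smaller to jump over); so the only candidate for a minimal jump relating $\ide_n$ and $\omega_n$ is a jump of the value $2$ over the single value $1$, which forces $\omega_n=2134\cdots n$. That in turn holds if and only if every value $i$ with $3\leq i\leq n$ was inserted on the right, i.e., $|L_{i-1}|$ is even for $3\leq i\leq n$, which is exactly the stated condition; and in that case the relating jump is an adjacent transposition, hence automatically minimal. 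So the closing jump is a jump of the value $2$ by one step, never of the value $n$ (for $n\geq3$), and any repair of your induction has to be restructured around this fact.
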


For example, the conditions described by Lemma~\ref{lem:cyclic} are satisfied for the zigzag languages~$L_n=S_n$ (all permutations) and $L_n=P_n$ (permutations without peaks), and the resulting cyclic orderings~$J(L_n)$ are shown in Figures~\ref{fig:perm4} and~\ref{fig:bits4}, respectively.
Another cyclic Gray code is shown in Figure~\ref{fig:tbaxter4}.
In contrast to that, the Gray codes shown in Figures~\ref{fig:cat4} and~\ref{fig:part4} violate the conditions of the lemma and are therefore not cyclic.

\begin{proof}
Let $\pi_i$ be the last permutation in the ordering $J(L_i)$ for all $i=0,1,\ldots,n$.
For $i\geq 1$, we see from~\eqref{eq:JLn} that $\pi_i=c_i(\pi_{i-1})$ if $|L_{i-1}|$ is even and $\pi_i=c_1(\pi_{i-1})$ if $|L_{i-1}|$ is odd.
As $|L_1|=1$ is odd, we know that 1 and 2 are reversed in~$\pi_n$, and so all numbers $|L_i|$, $2\leq i\leq n-1$, must be even for~$\ide_n$ and~$\pi_n$ to be related by a minimal jump.
\end{proof}

\begin{remark}
\label{rem:cyclic}
It follows from the proof of Theorem~\ref{thm:jump} that instead of initializing the algorithm with the identity permutation~$\pi_0=\ide_n$, we may use any permutation without peaks as a seed~$\pi_0$.
\end{remark}

\subsection{Efficiency considerations}
\label{sec:efficient}

Let us make it very clear that in its stated form, Algorithm~J is not an efficient algorithm to actually generate a particular zigzag language of permutations.
The reason is that it requires storing the list of all previously visited permutations in order to decide which one to generate next.
However, by introducing a few additional arrays, the algorithm can be made memoryless, so that such lookup operations are not needed anymore, and hence no permutations need to be stored at all.
The efficiency of the resulting algorithm is then only determined by the efficiency with which we are able to compute minimal jumps with respect to the input zigzag language~$L_n$ for a given entry of the permutation.
This leads to an algorithm that computes the next permutation to be visited in polynomial time.
In many cases, this can be improved to a loopless algorithm that generates each new permutation in constant worst-case time.
The key insight here is that any jump changes the inversion table of a permutation only in a single entry.
By maintaining only the inversion table, jumps can thus be performed efficiently, even if the number of steps is big.
This extensive and important discussion, however, is not the main focus here, and is deferred to part~IV of this paper series.

\subsection{A general recipe}
\label{sec:recipe}

Here is a step-by-step approach to apply our framework to the generation of a given family~$X_n$ of combinatorial objects.
The first step is to establish a bijection~$f$ that encodes the objects from~$X_n$ as permutations~$L_n\seq S_n$.
If $L_n$ is a zigzag language, which can be checked by verifying the closure property, then we may run Algorithm~J with input~$L_n$, and interpret the resulting ordering~$J(L_n)$ in terms of the combinatorial objects, by applying~$f^{-1}$ to each permutation in~$J(L_n)$, yielding an ordering on~$X_n$.
We may also apply~$f^{-1}$ to Algorithm~J directly, which will yield a simple greedy algorithm for generating~$X_n$.
The final step is to make these algorithms efficient, by introducing additional data structures that allow the change operations on~$X_n$ (which are the preimages of minimal jumps under~$f$) as efficiently as possible.
In the next section we amply illustrate this approach by four examples.

\section{Classical Gray codes from our framework}
\label{sec:examples}

In this section we derive the four classical Gray codes mentioned in the introduction from our framework in a systematic fashion, following the approach outlined in Section~\ref{sec:recipe}.

\subsection{Permutations (Steinhaus-Johnson-Trotter)}

Consider the set~$X_n=L_n=S_n$ of all permutations of~$[n]$.
The bijection~$f$ between~$X_n$ and~$L_n$ here is simply the identity, i.e., $f=\ide$.
In this case, each jump is a jump by 1~step, i.e., it is an adjacent transposition.
Algorithm~J thus yields an ordering of permutations by adjacent transpositions, which coincides with the well-known Steinhaus-Johnson-Trotter order, also known as plain change order~\cite{DBLP:journals/cacm/Trotter62,MR0159764}, which can be implemented efficiently~\cite{MR3444818}.
This ordering is shown in Figure~\ref{fig:perm4}.
Algorithm~J translates into the following simple greedy algorithm to describe this order (see \cite{DBLP:conf/wads/Williams13}):
\textbf{J1.} Visit the identity permutation. \textbf{J2.} Perform a transposition of the largest possible value with an adjacent smaller entry that yields a previously unvisited permutation; then visit this permutation and repeat~J2.

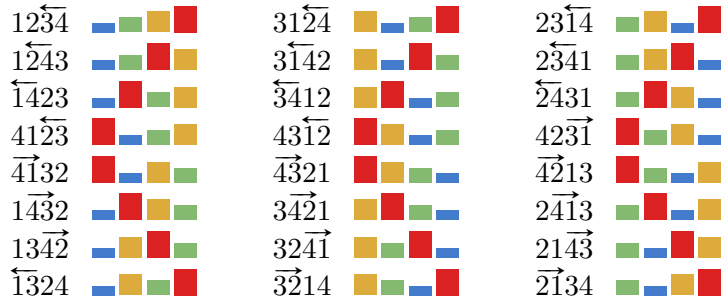
\begin{figure}
\begin{tabular}{c@{\hskip 3mm}c@{\hskip 10mm}c@{\hskip 3mm}c@{\hskip 10mm}c@{\hskip 3mm}c}
$12\lvec{34}$ & \begin{tikzpicture}[scale=30/100]
\fill[fill=p1col1] (0,0) rectangle (1,0.4);
\fill[fill=p1col2] (1.2,0) rectangle (2.2,0.65);
\fill[fill=p1col3] (2.4000000000000004,0) rectangle (3.4000000000000004,0.9);
\fill[fill=p1col4] (3.6000000000000005,0) rectangle (4.6000000000000005,1.15);
\end{tikzpicture}
 & $31\lvec{24}$ & \begin{tikzpicture}[scale=30/100]
\fill[fill=p1col3] (0,0) rectangle (1,0.9);
\fill[fill=p1col1] (1.2,0) rectangle (2.2,0.4);
\fill[fill=p1col2] (2.4000000000000004,0) rectangle (3.4000000000000004,0.65);
\fill[fill=p1col4] (3.6000000000000005,0) rectangle (4.6000000000000005,1.15);
\end{tikzpicture}
 & $23\lvec{14}$ & \begin{tikzpicture}[scale=30/100]
\fill[fill=p1col2] (0,0) rectangle (1,0.65);
\fill[fill=p1col3] (1.2,0) rectangle (2.2,0.9);
\fill[fill=p1col1] (2.4000000000000004,0) rectangle (3.4000000000000004,0.4);
\fill[fill=p1col4] (3.6000000000000005,0) rectangle (4.6000000000000005,1.15);
\end{tikzpicture}
 \\
$1\lvec{24}3$ & \begin{tikzpicture}[scale=30/100]
\fill[fill=p1col1] (0,0) rectangle (1,0.4);
\fill[fill=p1col2] (1.2,0) rectangle (2.2,0.65);
\fill[fill=p1col4] (2.4000000000000004,0) rectangle (3.4000000000000004,1.15);
\fill[fill=p1col3] (3.6000000000000005,0) rectangle (4.6000000000000005,0.9);
\end{tikzpicture}
 & $3\lvec{14}2$ & \begin{tikzpicture}[scale=30/100]
\fill[fill=p1col3] (0,0) rectangle (1,0.9);
\fill[fill=p1col1] (1.2,0) rectangle (2.2,0.4);
\fill[fill=p1col4] (2.4000000000000004,0) rectangle (3.4000000000000004,1.15);
\fill[fill=p1col2] (3.6000000000000005,0) rectangle (4.6000000000000005,0.65);
\end{tikzpicture}
 & $2\lvec{34}1$ & \begin{tikzpicture}[scale=30/100]
\fill[fill=p1col2] (0,0) rectangle (1,0.65);
\fill[fill=p1col3] (1.2,0) rectangle (2.2,0.9);
\fill[fill=p1col4] (2.4000000000000004,0) rectangle (3.4000000000000004,1.15);
\fill[fill=p1col1] (3.6000000000000005,0) rectangle (4.6000000000000005,0.4);
\end{tikzpicture}
 \\
$\lvec{14}23$ & \begin{tikzpicture}[scale=30/100]
\fill[fill=p1col1] (0,0) rectangle (1,0.4);
\fill[fill=p1col4] (1.2,0) rectangle (2.2,1.15);
\fill[fill=p1col2] (2.4000000000000004,0) rectangle (3.4000000000000004,0.65);
\fill[fill=p1col3] (3.6000000000000005,0) rectangle (4.6000000000000005,0.9);
\end{tikzpicture}
 & $\lvec{34}12$ & \begin{tikzpicture}[scale=30/100]
\fill[fill=p1col3] (0,0) rectangle (1,0.9);
\fill[fill=p1col4] (1.2,0) rectangle (2.2,1.15);
\fill[fill=p1col1] (2.4000000000000004,0) rectangle (3.4000000000000004,0.4);
\fill[fill=p1col2] (3.6000000000000005,0) rectangle (4.6000000000000005,0.65);
\end{tikzpicture}
 & $\lvec{24}31$ & \begin{tikzpicture}[scale=30/100]
\fill[fill=p1col2] (0,0) rectangle (1,0.65);
\fill[fill=p1col4] (1.2,0) rectangle (2.2,1.15);
\fill[fill=p1col3] (2.4000000000000004,0) rectangle (3.4000000000000004,0.9);
\fill[fill=p1col1] (3.6000000000000005,0) rectangle (4.6000000000000005,0.4);
\end{tikzpicture}
 \\
$41\lvec{23}$ & \begin{tikzpicture}[scale=30/100]
\fill[fill=p1col4] (0,0) rectangle (1,1.15);
\fill[fill=p1col1] (1.2,0) rectangle (2.2,0.4);
\fill[fill=p1col2] (2.4000000000000004,0) rectangle (3.4000000000000004,0.65);
\fill[fill=p1col3] (3.6000000000000005,0) rectangle (4.6000000000000005,0.9);
\end{tikzpicture}
 & $43\lvec{12}$ & \begin{tikzpicture}[scale=30/100]
\fill[fill=p1col4] (0,0) rectangle (1,1.15);
\fill[fill=p1col3] (1.2,0) rectangle (2.2,0.9);
\fill[fill=p1col1] (2.4000000000000004,0) rectangle (3.4000000000000004,0.4);
\fill[fill=p1col2] (3.6000000000000005,0) rectangle (4.6000000000000005,0.65);
\end{tikzpicture}
 & $42\rvec{31}$ & \begin{tikzpicture}[scale=30/100]
\fill[fill=p1col4] (0,0) rectangle (1,1.15);
\fill[fill=p1col2] (1.2,0) rectangle (2.2,0.65);
\fill[fill=p1col3] (2.4000000000000004,0) rectangle (3.4000000000000004,0.9);
\fill[fill=p1col1] (3.6000000000000005,0) rectangle (4.6000000000000005,0.4);
\end{tikzpicture}
 \\
$\rvec{41}32$ & \begin{tikzpicture}[scale=30/100]
\fill[fill=p1col4] (0,0) rectangle (1,1.15);
\fill[fill=p1col1] (1.2,0) rectangle (2.2,0.4);
\fill[fill=p1col3] (2.4000000000000004,0) rectangle (3.4000000000000004,0.9);
\fill[fill=p1col2] (3.6000000000000005,0) rectangle (4.6000000000000005,0.65);
\end{tikzpicture}
 & $\rvec{43}21$ & \begin{tikzpicture}[scale=30/100]
\fill[fill=p1col4] (0,0) rectangle (1,1.15);
\fill[fill=p1col3] (1.2,0) rectangle (2.2,0.9);
\fill[fill=p1col2] (2.4000000000000004,0) rectangle (3.4000000000000004,0.65);
\fill[fill=p1col1] (3.6000000000000005,0) rectangle (4.6000000000000005,0.4);
\end{tikzpicture}
 & $\rvec{42}13$ & \begin{tikzpicture}[scale=30/100]
\fill[fill=p1col4] (0,0) rectangle (1,1.15);
\fill[fill=p1col2] (1.2,0) rectangle (2.2,0.65);
\fill[fill=p1col1] (2.4000000000000004,0) rectangle (3.4000000000000004,0.4);
\fill[fill=p1col3] (3.6000000000000005,0) rectangle (4.6000000000000005,0.9);
\end{tikzpicture}
 \\
$1\rvec{43}2$ & \begin{tikzpicture}[scale=30/100]
\fill[fill=p1col1] (0,0) rectangle (1,0.4);
\fill[fill=p1col4] (1.2,0) rectangle (2.2,1.15);
\fill[fill=p1col3] (2.4000000000000004,0) rectangle (3.4000000000000004,0.9);
\fill[fill=p1col2] (3.6000000000000005,0) rectangle (4.6000000000000005,0.65);
\end{tikzpicture}
 & $3\rvec{42}1$ & \begin{tikzpicture}[scale=30/100]
\fill[fill=p1col3] (0,0) rectangle (1,0.9);
\fill[fill=p1col4] (1.2,0) rectangle (2.2,1.15);
\fill[fill=p1col2] (2.4000000000000004,0) rectangle (3.4000000000000004,0.65);
\fill[fill=p1col1] (3.6000000000000005,0) rectangle (4.6000000000000005,0.4);
\end{tikzpicture}
 & $2\rvec{41}3$ & \begin{tikzpicture}[scale=30/100]
\fill[fill=p1col2] (0,0) rectangle (1,0.65);
\fill[fill=p1col4] (1.2,0) rectangle (2.2,1.15);
\fill[fill=p1col1] (2.4000000000000004,0) rectangle (3.4000000000000004,0.4);
\fill[fill=p1col3] (3.6000000000000005,0) rectangle (4.6000000000000005,0.9);
\end{tikzpicture}
 \\
$13\rvec{42}$ & \begin{tikzpicture}[scale=30/100]
\fill[fill=p1col1] (0,0) rectangle (1,0.4);
\fill[fill=p1col3] (1.2,0) rectangle (2.2,0.9);
\fill[fill=p1col4] (2.4000000000000004,0) rectangle (3.4000000000000004,1.15);
\fill[fill=p1col2] (3.6000000000000005,0) rectangle (4.6000000000000005,0.65);
\end{tikzpicture}
 & $32\rvec{41}$ & \begin{tikzpicture}[scale=30/100]
\fill[fill=p1col3] (0,0) rectangle (1,0.9);
\fill[fill=p1col2] (1.2,0) rectangle (2.2,0.65);
\fill[fill=p1col4] (2.4000000000000004,0) rectangle (3.4000000000000004,1.15);
\fill[fill=p1col1] (3.6000000000000005,0) rectangle (4.6000000000000005,0.4);
\end{tikzpicture}
 & $21\rvec{43}$ & \begin{tikzpicture}[scale=30/100]
\fill[fill=p1col2] (0,0) rectangle (1,0.65);
\fill[fill=p1col1] (1.2,0) rectangle (2.2,0.4);
\fill[fill=p1col4] (2.4000000000000004,0) rectangle (3.4000000000000004,1.15);
\fill[fill=p1col3] (3.6000000000000005,0) rectangle (4.6000000000000005,0.9);
\end{tikzpicture}
 \\
$\lvec{13}24$ & \begin{tikzpicture}[scale=30/100]
\fill[fill=p1col1] (0,0) rectangle (1,0.4);
\fill[fill=p1col3] (1.2,0) rectangle (2.2,0.9);
\fill[fill=p1col2] (2.4000000000000004,0) rectangle (3.4000000000000004,0.65);
\fill[fill=p1col4] (3.6000000000000005,0) rectangle (4.6000000000000005,1.15);
\end{tikzpicture}
 & $\rvec{32}14$ & \begin{tikzpicture}[scale=30/100]
\fill[fill=p1col3] (0,0) rectangle (1,0.9);
\fill[fill=p1col2] (1.2,0) rectangle (2.2,0.65);
\fill[fill=p1col1] (2.4000000000000004,0) rectangle (3.4000000000000004,0.4);
\fill[fill=p1col4] (3.6000000000000005,0) rectangle (4.6000000000000005,1.15);
\end{tikzpicture}
 & $\rvec{21}34$ & \begin{tikzpicture}[scale=30/100]
\fill[fill=p1col2] (0,0) rectangle (1,0.65);
\fill[fill=p1col1] (1.2,0) rectangle (2.2,0.4);
\fill[fill=p1col3] (2.4000000000000004,0) rectangle (3.4000000000000004,0.9);
\fill[fill=p1col4] (3.6000000000000005,0) rectangle (4.6000000000000005,1.15);
\end{tikzpicture}
 \\
\end{tabular}
\caption{All permutations of length~$n=4$ generated by Algorithm~J, coinciding with the Steinhaus-Johnson-Trotter ordering.
Read the figure column by column, from left to right.
Each arrows indicates an adjacent transposition (jump by 1~step) that creates the next permutation.
The ordering is cyclic, so the last transposition creates the first permutation.
}
\label{fig:perm4}
\end{figure}

\subsection{Binary strings (BRGC)}

Consider the set~$X_n$ of binary strings of length~$n-1$.
We map any binary string $x=x_2\cdots x_n$ to a permutation $f(x)\in S_n$ by setting $f(\varepsilon):=1$ and
\begin{equation*}
 f(x_2\cdots x_n):=\begin{cases}
 c_n\big(f(x_2\cdots x_{n-1})\big) & \text{ if } x_n=0, \\
 c_1\big(f(x_2\cdots x_{n-1})\big) & \text{ if } x_n=1,
                   \end{cases}
\end{equation*}
i.e., we build the permutation~$f(x)$ by inserting the values $i=2,\ldots,n$ one by one, either at the leftmost or rightmost position, depending on the bit~$x_i$; see Figure~\ref{fig:bits4} for examples.
Observe that $f(X_n)$ is exactly the set of permutations without peaks $P_n\seq S_n$ discussed previously in Section~\ref{sec:tree}, and a jump of the value~$i$ in the permutation translates to flipping the bit~$x_i$.
Moreover, $f^{-1}(J(P_n))$ is exactly the well-known reflected Gray code (BRGC) for binary strings of length~$n-1$~\cite{gray_1953}, which can be implemented efficiently~\cite{MR0424386}.
This ordering is shown in Figure~\ref{fig:bits4}.
Applying $f^{-1}$ to Algorithm~J yields the following simple greedy algorithm to describe the BRGC (see \cite{DBLP:conf/wads/Williams13}):
\textbf{J1.} Visit the all-zero string. \textbf{J2.} Flip the rightmost bit that yields a previously unvisited string; then visit this string and repeat~J2.

\begin{figure}
\begin{tabular}{ccc@{}c@{}c}
 & & \colorbox{p1col2!50}{\!$x_2$\!} & \colorbox{p1col3!50}{\!$x_3$\!} & \colorbox{p1col4!50}{\!$x_4$\!} \\ \cline{3-5}
$\lvec{1234}$ & \begin{tikzpicture}[scale=30/100]
\fill[fill=p1col1] (0,0) rectangle (1,0.4);
\fill[fill=p1col2] (1.2,0) rectangle (2.2,0.65);
\fill[fill=p1col3] (2.4000000000000004,0) rectangle (3.4000000000000004,0.9);
\fill[fill=p1col4] (3.6000000000000005,0) rectangle (4.6000000000000005,1.15);
\end{tikzpicture}
 & 0 & 0 & 0 \\
$4\lvec{123}$ & \begin{tikzpicture}[scale=30/100]
\fill[fill=p1col4] (0,0) rectangle (1,1.15);
\fill[fill=p1col1] (1.2,0) rectangle (2.2,0.4);
\fill[fill=p1col2] (2.4000000000000004,0) rectangle (3.4000000000000004,0.65);
\fill[fill=p1col3] (3.6000000000000005,0) rectangle (4.6000000000000005,0.9);
\end{tikzpicture}
 & 0 & 0 & 1 \\
$\rvec{4312}$ & \begin{tikzpicture}[scale=30/100]
\fill[fill=p1col4] (0,0) rectangle (1,1.15);
\fill[fill=p1col3] (1.2,0) rectangle (2.2,0.9);
\fill[fill=p1col1] (2.4000000000000004,0) rectangle (3.4000000000000004,0.4);
\fill[fill=p1col2] (3.6000000000000005,0) rectangle (4.6000000000000005,0.65);
\end{tikzpicture}
 & 0 & 1 & 1 \\
$3\lvec{12}4$ & \begin{tikzpicture}[scale=30/100]
\fill[fill=p1col3] (0,0) rectangle (1,0.9);
\fill[fill=p1col1] (1.2,0) rectangle (2.2,0.4);
\fill[fill=p1col2] (2.4000000000000004,0) rectangle (3.4000000000000004,0.65);
\fill[fill=p1col4] (3.6000000000000005,0) rectangle (4.6000000000000005,1.15);
\end{tikzpicture}
 & 0 & 1 & 0 \\
$\lvec{3214}$ & \begin{tikzpicture}[scale=30/100]
\fill[fill=p1col3] (0,0) rectangle (1,0.9);
\fill[fill=p1col2] (1.2,0) rectangle (2.2,0.65);
\fill[fill=p1col1] (2.4000000000000004,0) rectangle (3.4000000000000004,0.4);
\fill[fill=p1col4] (3.6000000000000005,0) rectangle (4.6000000000000005,1.15);
\end{tikzpicture}
 & 1 & 1 & 0 \\
$4\rvec{321}$ & \begin{tikzpicture}[scale=30/100]
\fill[fill=p1col4] (0,0) rectangle (1,1.15);
\fill[fill=p1col3] (1.2,0) rectangle (2.2,0.9);
\fill[fill=p1col2] (2.4000000000000004,0) rectangle (3.4000000000000004,0.65);
\fill[fill=p1col1] (3.6000000000000005,0) rectangle (4.6000000000000005,0.4);
\end{tikzpicture}
 & 1 & 1 & 1 \\
$\rvec{4213}$ & \begin{tikzpicture}[scale=30/100]
\fill[fill=p1col4] (0,0) rectangle (1,1.15);
\fill[fill=p1col2] (1.2,0) rectangle (2.2,0.65);
\fill[fill=p1col1] (2.4000000000000004,0) rectangle (3.4000000000000004,0.4);
\fill[fill=p1col3] (3.6000000000000005,0) rectangle (4.6000000000000005,0.9);
\end{tikzpicture}
 & 1 & 0 & 1 \\
$\rvec{21}34$ & \begin{tikzpicture}[scale=30/100]
\fill[fill=p1col2] (0,0) rectangle (1,0.65);
\fill[fill=p1col1] (1.2,0) rectangle (2.2,0.4);
\fill[fill=p1col3] (2.4000000000000004,0) rectangle (3.4000000000000004,0.9);
\fill[fill=p1col4] (3.6000000000000005,0) rectangle (4.6000000000000005,1.15);
\end{tikzpicture}
 & 1 & 0 & 0
\end{tabular}
\caption{Permutations without peaks of length~$n=4$ generated by Algorithm~J, and the resulting Gray code for binary strings, coinciding with the BRGC.
In this and subsequent figures, an arrow in a permutation indicates a jump, where the value below the tail of the arrow is the value that jumps, and the tip of the arrow shows the position of the value after the jump.}
\label{fig:bits4}
\end{figure}

\subsection{Binary trees (Lucas-Roelants van Baronaigien-Ruskey)}

Consider the set~$X_n$ of binary trees with $n$ nodes, labelled with $n$ distinct integers from~$[n]$, that have the search tree property, i.e., for every node, all nodes in the left subtree are smaller than all nodes in the right subtree.
We recursively map any such tree~$x$ with root node~$i$, left subtree~$x_L$, and right subtree~$x_R$ to a permutation~$f(x)\in S_n$ by setting $f(x):=(i,f(x_L),f(x_R))$ and $f(\emptyset):=\varepsilon$ if $x=\emptyset$ has no nodes; see Figure~\ref{fig:cat4} for examples.
By the search tree property, $f(X_n)$ is exactly the set of permutations that avoid the pattern~231, i.e., we have $f(X_n)=S_n(231)$, which we will prove to be a zigzag language.
A jump of the value~$i$ in the permutation translates to a tree rotation involving the node~$i$.
Specifically, a right jump of the value~$i$ in the permutation corresponds to a right rotation at node~$i$, and a left jump corresponds to a left rotation at the parent of node~$i$, where node~$i$ is the right child of this parent, and these two operations are inverse to each other.
Moreover, $f^{-1}(J(S_n(231)))$ is exactly the ordering of binary trees described by Lucas, Roelants van Baronaigien, and Ruskey~\cite{MR1239499}, which they showed can be implemented efficiently.
This ordering is shown in Figure~\ref{fig:cat4}.
Applying $f^{-1}$ to Algorithm~J yields the following simple greedy algorithm to describe this order (see \cite{DBLP:conf/wads/Williams13}):
\textbf{J1.} Visit the all-right tree, i.e., every node has exactly one child, namely a right child. \textbf{J2.} Perform a rotation involving the largest possible node that yields a previously unvisited tree; then visit this tree and repeat~J2.

Via standard bijections, binary trees are equivalent to many other Catalan families (such as triangulations, Dyck paths, etc.), so we also obtain Gray codes for all these other objects; see Figure~\ref{fig:cat4}.

\begin{figure}
\begin{tabular}{c@{\hskip 2mm}c@{\hskip 1mm}c@{\hskip 1mm}c@{\hskip 8mm}c@{\hskip 2mm}c@{\hskip 1mm}c@{\hskip 1mm}c}
\parbox{13mm}{$12\lvec{34}$ \\ \begin{tikzpicture}[scale=30/100]
\fill[fill=p1col1] (0,0) rectangle (1,0.4);
\fill[fill=p1col2] (1.2,0) rectangle (2.2,0.65);
\fill[fill=p1col3] (2.4000000000000004,0) rectangle (3.4000000000000004,0.9);
\fill[fill=p1col4] (3.6000000000000005,0) rectangle (4.6000000000000005,1.15);
\end{tikzpicture}
 \\[2mm]} & \begin{tikzpicture}[scale=30/100]
\draw[tree edge] (7.50,-3) -- (7.25,-4);
\draw[tree edge] (7.50,-3) -- (7.75,-4);
\draw[tree edge] (7.00,-2) -- (6.50,-3);
\draw[tree edge] (7.00,-2) -- (7.50,-3);
\draw[tree edge] (6.00,-1) -- (5.00,-2);
\draw[tree edge] (6.00,-1) -- (7.00,-2);
\draw[tree edge] (4.00,0) -- (2.00,-1);
\draw[tree edge] (4.00,0) -- (6.00,-1);
\node[tree vertex,fill=p1col4] at (7.50,-3) {};
\node[tree vertex,fill=p1col3] at (7.00,-2) {};
\node[tree vertex,fill=p1col2] at (6.00,-1) {};
\node[tree vertex,fill=p1col1] at (4.00,0) {};
\end{tikzpicture}
 & \begin{tikzpicture}[scale=65/100]
\fill[p1col1] (-1.00,0.00)--(-0.50,0.87)--(0.50,0.87)--cycle;
\fill[p1col2] (-0.50,-0.87)--(-1.00,0.00)--(0.50,0.87)--cycle;
\fill[p1col3] (0.50,-0.87)--(-0.50,-0.87)--(0.50,0.87)--cycle;
\fill[p1col4] (1.00,-0.00)--(0.50,-0.87)--(0.50,0.87)--cycle;
\node[triang vertex] at (-0.50,0.87) {};
\node[triang vertex] at (-1.00,0.00) {};
\node[triang vertex] at (-0.50,-0.87) {};
\node[triang vertex] at (0.50,-0.87) {};
\node[triang vertex] at (1.00,-0.00) {};
\node[triang vertex] at (0.50,0.87) {};
\end{tikzpicture}
 & \begin{tikzpicture}[scale=30/100]
\fill[p1col4] (6,0)--(7,1)--(7,1)--(8,0)--cycle;
\fill[p1col3] (4,0)--(5,1)--(5,1)--(6,0)--cycle;
\fill[p1col2] (2,0)--(3,1)--(3,1)--(4,0)--cycle;
\fill[p1col1] (0,0)--(1,1)--(1,1)--(2,0)--cycle;
\end{tikzpicture}
 &
\parbox{13mm}{$\lvec{3124}$ \\ \begin{tikzpicture}[scale=30/100]
\fill[fill=p1col3] (0,0) rectangle (1,0.9);
\fill[fill=p1col1] (1.2,0) rectangle (2.2,0.4);
\fill[fill=p1col2] (2.4000000000000004,0) rectangle (3.4000000000000004,0.65);
\fill[fill=p1col4] (3.6000000000000005,0) rectangle (4.6000000000000005,1.15);
\end{tikzpicture}
 \\[2mm]} & \begin{tikzpicture}[scale=30/100]
\draw[tree edge] (3.00,-2) -- (2.50,-3);
\draw[tree edge] (3.00,-2) -- (3.50,-3);
\draw[tree edge] (2.00,-1) -- (1.00,-2);
\draw[tree edge] (2.00,-1) -- (3.00,-2);
\draw[tree edge] (6.00,-1) -- (5.00,-2);
\draw[tree edge] (6.00,-1) -- (7.00,-2);
\draw[tree edge] (4.00,0) -- (2.00,-1);
\draw[tree edge] (4.00,0) -- (6.00,-1);
\node[tree vertex,fill=p1col2] at (3.00,-2) {};
\node[tree vertex,fill=p1col1] at (2.00,-1) {};
\node[tree vertex,fill=p1col4] at (6.00,-1) {};
\node[tree vertex,fill=p1col3] at (4.00,0) {};
\end{tikzpicture}
 & \begin{tikzpicture}[scale=65/100]
\fill[p1col3] (0.50,-0.87)--(-0.50,0.87)--(0.50,0.87)--cycle;
\fill[p1col1] (-1.00,0.00)--(-0.50,0.87)--(0.50,-0.87)--cycle;
\fill[p1col2] (-0.50,-0.87)--(-1.00,0.00)--(0.50,-0.87)--cycle;
\fill[p1col4] (1.00,-0.00)--(0.50,-0.87)--(0.50,0.87)--cycle;
\node[triang vertex] at (-0.50,0.87) {};
\node[triang vertex] at (-1.00,0.00) {};
\node[triang vertex] at (-0.50,-0.87) {};
\node[triang vertex] at (0.50,-0.87) {};
\node[triang vertex] at (1.00,-0.00) {};
\node[triang vertex] at (0.50,0.87) {};
\end{tikzpicture}
 & \begin{tikzpicture}[scale=30/100]
\fill[p1col2] (3,1)--(4,2)--(4,2)--(5,1)--cycle;
\fill[p1col1] (1,1)--(2,2)--(2,2)--(3,1)--cycle;
\fill[p1col4] (6,0)--(7,1)--(7,1)--(8,0)--cycle;
\fill[p1col3] (0,0)--(1,1)--(5,1)--(6,0)--cycle;
\end{tikzpicture}
 \vspace{-4mm} \\
\parbox{13mm}{$1\lvec{24}3$ \\ \begin{tikzpicture}[scale=30/100]
\fill[fill=p1col1] (0,0) rectangle (1,0.4);
\fill[fill=p1col2] (1.2,0) rectangle (2.2,0.65);
\fill[fill=p1col4] (2.4000000000000004,0) rectangle (3.4000000000000004,1.15);
\fill[fill=p1col3] (3.6000000000000005,0) rectangle (4.6000000000000005,0.9);
\end{tikzpicture}
 \\[2mm]} & \begin{tikzpicture}[scale=30/100]
\draw[tree edge] (6.50,-3) -- (6.25,-4);
\draw[tree edge] (6.50,-3) -- (6.75,-4);
\draw[tree edge] (7.00,-2) -- (6.50,-3);
\draw[tree edge] (7.00,-2) -- (7.50,-3);
\draw[tree edge] (6.00,-1) -- (5.00,-2);
\draw[tree edge] (6.00,-1) -- (7.00,-2);
\draw[tree edge] (4.00,0) -- (2.00,-1);
\draw[tree edge] (4.00,0) -- (6.00,-1);
\node[tree vertex,fill=p1col3] at (6.50,-3) {};
\node[tree vertex,fill=p1col4] at (7.00,-2) {};
\node[tree vertex,fill=p1col2] at (6.00,-1) {};
\node[tree vertex,fill=p1col1] at (4.00,0) {};
\end{tikzpicture}
 & \begin{tikzpicture}[scale=65/100]
\fill[p1col1] (-1.00,0.00)--(-0.50,0.87)--(0.50,0.87)--cycle;
\fill[p1col2] (-0.50,-0.87)--(-1.00,0.00)--(0.50,0.87)--cycle;
\fill[p1col4] (1.00,-0.00)--(-0.50,-0.87)--(0.50,0.87)--cycle;
\fill[p1col3] (0.50,-0.87)--(-0.50,-0.87)--(1.00,-0.00)--cycle;
\node[triang vertex] at (-0.50,0.87) {};
\node[triang vertex] at (-1.00,0.00) {};
\node[triang vertex] at (-0.50,-0.87) {};
\node[triang vertex] at (0.50,-0.87) {};
\node[triang vertex] at (1.00,-0.00) {};
\node[triang vertex] at (0.50,0.87) {};
\end{tikzpicture}
 & \begin{tikzpicture}[scale=30/100]
\fill[p1col3] (5,1)--(6,2)--(6,2)--(7,1)--cycle;
\fill[p1col4] (4,0)--(5,1)--(7,1)--(8,0)--cycle;
\fill[p1col2] (2,0)--(3,1)--(3,1)--(4,0)--cycle;
\fill[p1col1] (0,0)--(1,1)--(1,1)--(2,0)--cycle;
\end{tikzpicture}
 &
\parbox{13mm}{$43\lvec{12}$ \\ \begin{tikzpicture}[scale=30/100]
\fill[fill=p1col4] (0,0) rectangle (1,1.15);
\fill[fill=p1col3] (1.2,0) rectangle (2.2,0.9);
\fill[fill=p1col1] (2.4000000000000004,0) rectangle (3.4000000000000004,0.4);
\fill[fill=p1col2] (3.6000000000000005,0) rectangle (4.6000000000000005,0.65);
\end{tikzpicture}
 \\[2mm]} & \begin{tikzpicture}[scale=30/100]
\draw[tree edge] (1.50,-3) -- (1.25,-4);
\draw[tree edge] (1.50,-3) -- (1.75,-4);
\draw[tree edge] (1.00,-2) -- (0.50,-3);
\draw[tree edge] (1.00,-2) -- (1.50,-3);
\draw[tree edge] (2.00,-1) -- (1.00,-2);
\draw[tree edge] (2.00,-1) -- (3.00,-2);
\draw[tree edge] (4.00,0) -- (2.00,-1);
\draw[tree edge] (4.00,0) -- (6.00,-1);
\node[tree vertex,fill=p1col2] at (1.50,-3) {};
\node[tree vertex,fill=p1col1] at (1.00,-2) {};
\node[tree vertex,fill=p1col3] at (2.00,-1) {};
\node[tree vertex,fill=p1col4] at (4.00,0) {};
\end{tikzpicture}
 & \begin{tikzpicture}[scale=65/100]
\fill[p1col4] (1.00,-0.00)--(-0.50,0.87)--(0.50,0.87)--cycle;
\fill[p1col3] (0.50,-0.87)--(-0.50,0.87)--(1.00,-0.00)--cycle;
\fill[p1col1] (-1.00,0.00)--(-0.50,0.87)--(0.50,-0.87)--cycle;
\fill[p1col2] (-0.50,-0.87)--(-1.00,0.00)--(0.50,-0.87)--cycle;
\node[triang vertex] at (-0.50,0.87) {};
\node[triang vertex] at (-1.00,0.00) {};
\node[triang vertex] at (-0.50,-0.87) {};
\node[triang vertex] at (0.50,-0.87) {};
\node[triang vertex] at (1.00,-0.00) {};
\node[triang vertex] at (0.50,0.87) {};
\end{tikzpicture}
 & \begin{tikzpicture}[scale=30/100]
\fill[p1col2] (4,2)--(5,3)--(5,3)--(6,2)--cycle;
\fill[p1col1] (2,2)--(3,3)--(3,3)--(4,2)--cycle;
\fill[p1col3] (1,1)--(2,2)--(6,2)--(7,1)--cycle;
\fill[p1col4] (0,0)--(1,1)--(7,1)--(8,0)--cycle;
\end{tikzpicture}
 \vspace{-4mm} \\
\parbox{13mm}{$\lvec{14}23$ \\ \begin{tikzpicture}[scale=30/100]
\fill[fill=p1col1] (0,0) rectangle (1,0.4);
\fill[fill=p1col4] (1.2,0) rectangle (2.2,1.15);
\fill[fill=p1col2] (2.4000000000000004,0) rectangle (3.4000000000000004,0.65);
\fill[fill=p1col3] (3.6000000000000005,0) rectangle (4.6000000000000005,0.9);
\end{tikzpicture}
 \\[2mm]} & \begin{tikzpicture}[scale=30/100]
\draw[tree edge] (5.50,-3) -- (5.25,-4);
\draw[tree edge] (5.50,-3) -- (5.75,-4);
\draw[tree edge] (5.00,-2) -- (4.50,-3);
\draw[tree edge] (5.00,-2) -- (5.50,-3);
\draw[tree edge] (6.00,-1) -- (5.00,-2);
\draw[tree edge] (6.00,-1) -- (7.00,-2);
\draw[tree edge] (4.00,0) -- (2.00,-1);
\draw[tree edge] (4.00,0) -- (6.00,-1);
\node[tree vertex,fill=p1col3] at (5.50,-3) {};
\node[tree vertex,fill=p1col2] at (5.00,-2) {};
\node[tree vertex,fill=p1col4] at (6.00,-1) {};
\node[tree vertex,fill=p1col1] at (4.00,0) {};
\end{tikzpicture}
 & \begin{tikzpicture}[scale=65/100]
\fill[p1col1] (-1.00,0.00)--(-0.50,0.87)--(0.50,0.87)--cycle;
\fill[p1col4] (1.00,-0.00)--(-1.00,0.00)--(0.50,0.87)--cycle;
\fill[p1col2] (-0.50,-0.87)--(-1.00,0.00)--(1.00,-0.00)--cycle;
\fill[p1col3] (0.50,-0.87)--(-0.50,-0.87)--(1.00,-0.00)--cycle;
\node[triang vertex] at (-0.50,0.87) {};
\node[triang vertex] at (-1.00,0.00) {};
\node[triang vertex] at (-0.50,-0.87) {};
\node[triang vertex] at (0.50,-0.87) {};
\node[triang vertex] at (1.00,-0.00) {};
\node[triang vertex] at (0.50,0.87) {};
\end{tikzpicture}
 & \begin{tikzpicture}[scale=30/100]
\fill[p1col3] (5,1)--(6,2)--(6,2)--(7,1)--cycle;
\fill[p1col2] (3,1)--(4,2)--(4,2)--(5,1)--cycle;
\fill[p1col4] (2,0)--(3,1)--(7,1)--(8,0)--cycle;
\fill[p1col1] (0,0)--(1,1)--(1,1)--(2,0)--cycle;
\end{tikzpicture}
 &
\parbox{13mm}{$\rvec{4321}$ \\ \begin{tikzpicture}[scale=30/100]
\fill[fill=p1col4] (0,0) rectangle (1,1.15);
\fill[fill=p1col3] (1.2,0) rectangle (2.2,0.9);
\fill[fill=p1col2] (2.4000000000000004,0) rectangle (3.4000000000000004,0.65);
\fill[fill=p1col1] (3.6000000000000005,0) rectangle (4.6000000000000005,0.4);
\end{tikzpicture}
 \\[2mm]} & \begin{tikzpicture}[scale=30/100]
\draw[tree edge] (0.50,-3) -- (0.25,-4);
\draw[tree edge] (0.50,-3) -- (0.75,-4);
\draw[tree edge] (1.00,-2) -- (0.50,-3);
\draw[tree edge] (1.00,-2) -- (1.50,-3);
\draw[tree edge] (2.00,-1) -- (1.00,-2);
\draw[tree edge] (2.00,-1) -- (3.00,-2);
\draw[tree edge] (4.00,0) -- (2.00,-1);
\draw[tree edge] (4.00,0) -- (6.00,-1);
\node[tree vertex,fill=p1col1] at (0.50,-3) {};
\node[tree vertex,fill=p1col2] at (1.00,-2) {};
\node[tree vertex,fill=p1col3] at (2.00,-1) {};
\node[tree vertex,fill=p1col4] at (4.00,0) {};
\end{tikzpicture}
 & \begin{tikzpicture}[scale=65/100]
\fill[p1col4] (1.00,-0.00)--(-0.50,0.87)--(0.50,0.87)--cycle;
\fill[p1col3] (0.50,-0.87)--(-0.50,0.87)--(1.00,-0.00)--cycle;
\fill[p1col2] (-0.50,-0.87)--(-0.50,0.87)--(0.50,-0.87)--cycle;
\fill[p1col1] (-1.00,0.00)--(-0.50,0.87)--(-0.50,-0.87)--cycle;
\node[triang vertex] at (-0.50,0.87) {};
\node[triang vertex] at (-1.00,0.00) {};
\node[triang vertex] at (-0.50,-0.87) {};
\node[triang vertex] at (0.50,-0.87) {};
\node[triang vertex] at (1.00,-0.00) {};
\node[triang vertex] at (0.50,0.87) {};
\end{tikzpicture}
 & \begin{tikzpicture}[scale=30/100]
\fill[p1col1] (3,3)--(4,4)--(4,4)--(5,3)--cycle;
\fill[p1col2] (2,2)--(3,3)--(5,3)--(6,2)--cycle;
\fill[p1col3] (1,1)--(2,2)--(6,2)--(7,1)--cycle;
\fill[p1col4] (0,0)--(1,1)--(7,1)--(8,0)--cycle;
\end{tikzpicture}
 \vspace{-4mm} \\
\parbox{13mm}{$41\lvec{23}$ \\ \begin{tikzpicture}[scale=30/100]
\fill[fill=p1col4] (0,0) rectangle (1,1.15);
\fill[fill=p1col1] (1.2,0) rectangle (2.2,0.4);
\fill[fill=p1col2] (2.4000000000000004,0) rectangle (3.4000000000000004,0.65);
\fill[fill=p1col3] (3.6000000000000005,0) rectangle (4.6000000000000005,0.9);
\end{tikzpicture}
 \\[2mm]} & \begin{tikzpicture}[scale=30/100]
\draw[tree edge] (3.50,-3) -- (3.25,-4);
\draw[tree edge] (3.50,-3) -- (3.75,-4);
\draw[tree edge] (3.00,-2) -- (2.50,-3);
\draw[tree edge] (3.00,-2) -- (3.50,-3);
\draw[tree edge] (2.00,-1) -- (1.00,-2);
\draw[tree edge] (2.00,-1) -- (3.00,-2);
\draw[tree edge] (4.00,0) -- (2.00,-1);
\draw[tree edge] (4.00,0) -- (6.00,-1);
\node[tree vertex,fill=p1col3] at (3.50,-3) {};
\node[tree vertex,fill=p1col2] at (3.00,-2) {};
\node[tree vertex,fill=p1col1] at (2.00,-1) {};
\node[tree vertex,fill=p1col4] at (4.00,0) {};
\end{tikzpicture}
 & \begin{tikzpicture}[scale=65/100]
\fill[p1col4] (1.00,-0.00)--(-0.50,0.87)--(0.50,0.87)--cycle;
\fill[p1col1] (-1.00,0.00)--(-0.50,0.87)--(1.00,-0.00)--cycle;
\fill[p1col2] (-0.50,-0.87)--(-1.00,0.00)--(1.00,-0.00)--cycle;
\fill[p1col3] (0.50,-0.87)--(-0.50,-0.87)--(1.00,-0.00)--cycle;
\node[triang vertex] at (-0.50,0.87) {};
\node[triang vertex] at (-1.00,0.00) {};
\node[triang vertex] at (-0.50,-0.87) {};
\node[triang vertex] at (0.50,-0.87) {};
\node[triang vertex] at (1.00,-0.00) {};
\node[triang vertex] at (0.50,0.87) {};
\end{tikzpicture}
 & \begin{tikzpicture}[scale=30/100]
\fill[p1col3] (5,1)--(6,2)--(6,2)--(7,1)--cycle;
\fill[p1col2] (3,1)--(4,2)--(4,2)--(5,1)--cycle;
\fill[p1col1] (1,1)--(2,2)--(2,2)--(3,1)--cycle;
\fill[p1col4] (0,0)--(1,1)--(7,1)--(8,0)--cycle;
\end{tikzpicture}
 &
\parbox{13mm}{$\rvec{321}4$ \\ \begin{tikzpicture}[scale=30/100]
\fill[fill=p1col3] (0,0) rectangle (1,0.9);
\fill[fill=p1col2] (1.2,0) rectangle (2.2,0.65);
\fill[fill=p1col1] (2.4000000000000004,0) rectangle (3.4000000000000004,0.4);
\fill[fill=p1col4] (3.6000000000000005,0) rectangle (4.6000000000000005,1.15);
\end{tikzpicture}
 \\[2mm]} & \begin{tikzpicture}[scale=30/100]
\draw[tree edge] (1.00,-2) -- (0.50,-3);
\draw[tree edge] (1.00,-2) -- (1.50,-3);
\draw[tree edge] (2.00,-1) -- (1.00,-2);
\draw[tree edge] (2.00,-1) -- (3.00,-2);
\draw[tree edge] (6.00,-1) -- (5.00,-2);
\draw[tree edge] (6.00,-1) -- (7.00,-2);
\draw[tree edge] (4.00,0) -- (2.00,-1);
\draw[tree edge] (4.00,0) -- (6.00,-1);
\node[tree vertex,fill=p1col1] at (1.00,-2) {};
\node[tree vertex,fill=p1col2] at (2.00,-1) {};
\node[tree vertex,fill=p1col4] at (6.00,-1) {};
\node[tree vertex,fill=p1col3] at (4.00,0) {};
\end{tikzpicture}
 & \begin{tikzpicture}[scale=65/100]
\fill[p1col3] (0.50,-0.87)--(-0.50,0.87)--(0.50,0.87)--cycle;
\fill[p1col2] (-0.50,-0.87)--(-0.50,0.87)--(0.50,-0.87)--cycle;
\fill[p1col1] (-1.00,0.00)--(-0.50,0.87)--(-0.50,-0.87)--cycle;
\fill[p1col4] (1.00,-0.00)--(0.50,-0.87)--(0.50,0.87)--cycle;
\node[triang vertex] at (-0.50,0.87) {};
\node[triang vertex] at (-1.00,0.00) {};
\node[triang vertex] at (-0.50,-0.87) {};
\node[triang vertex] at (0.50,-0.87) {};
\node[triang vertex] at (1.00,-0.00) {};
\node[triang vertex] at (0.50,0.87) {};
\end{tikzpicture}
 & \begin{tikzpicture}[scale=30/100]
\fill[p1col1] (2,2)--(3,3)--(3,3)--(4,2)--cycle;
\fill[p1col2] (1,1)--(2,2)--(4,2)--(5,1)--cycle;
\fill[p1col4] (6,0)--(7,1)--(7,1)--(8,0)--cycle;
\fill[p1col3] (0,0)--(1,1)--(5,1)--(6,0)--cycle;
\end{tikzpicture}
 \vspace{-4mm} \\
\parbox{13mm}{$\rvec{41}32$ \\ \begin{tikzpicture}[scale=30/100]
\fill[fill=p1col4] (0,0) rectangle (1,1.15);
\fill[fill=p1col1] (1.2,0) rectangle (2.2,0.4);
\fill[fill=p1col3] (2.4000000000000004,0) rectangle (3.4000000000000004,0.9);
\fill[fill=p1col2] (3.6000000000000005,0) rectangle (4.6000000000000005,0.65);
\end{tikzpicture}
 \\[2mm]} & \begin{tikzpicture}[scale=30/100]
\draw[tree edge] (2.50,-3) -- (2.25,-4);
\draw[tree edge] (2.50,-3) -- (2.75,-4);
\draw[tree edge] (3.00,-2) -- (2.50,-3);
\draw[tree edge] (3.00,-2) -- (3.50,-3);
\draw[tree edge] (2.00,-1) -- (1.00,-2);
\draw[tree edge] (2.00,-1) -- (3.00,-2);
\draw[tree edge] (4.00,0) -- (2.00,-1);
\draw[tree edge] (4.00,0) -- (6.00,-1);
\node[tree vertex,fill=p1col2] at (2.50,-3) {};
\node[tree vertex,fill=p1col3] at (3.00,-2) {};
\node[tree vertex,fill=p1col1] at (2.00,-1) {};
\node[tree vertex,fill=p1col4] at (4.00,0) {};
\end{tikzpicture}
 & \begin{tikzpicture}[scale=65/100]
\fill[p1col4] (1.00,-0.00)--(-0.50,0.87)--(0.50,0.87)--cycle;
\fill[p1col1] (-1.00,0.00)--(-0.50,0.87)--(1.00,-0.00)--cycle;
\fill[p1col3] (0.50,-0.87)--(-1.00,0.00)--(1.00,-0.00)--cycle;
\fill[p1col2] (-0.50,-0.87)--(-1.00,0.00)--(0.50,-0.87)--cycle;
\node[triang vertex] at (-0.50,0.87) {};
\node[triang vertex] at (-1.00,0.00) {};
\node[triang vertex] at (-0.50,-0.87) {};
\node[triang vertex] at (0.50,-0.87) {};
\node[triang vertex] at (1.00,-0.00) {};
\node[triang vertex] at (0.50,0.87) {};
\end{tikzpicture}
 & \begin{tikzpicture}[scale=30/100]
\fill[p1col2] (4,2)--(5,3)--(5,3)--(6,2)--cycle;
\fill[p1col3] (3,1)--(4,2)--(6,2)--(7,1)--cycle;
\fill[p1col1] (1,1)--(2,2)--(2,2)--(3,1)--cycle;
\fill[p1col4] (0,0)--(1,1)--(7,1)--(8,0)--cycle;
\end{tikzpicture}
 &
\parbox{13mm}{$21\lvec{34}$ \\ \begin{tikzpicture}[scale=30/100]
\fill[fill=p1col2] (0,0) rectangle (1,0.65);
\fill[fill=p1col1] (1.2,0) rectangle (2.2,0.4);
\fill[fill=p1col3] (2.4000000000000004,0) rectangle (3.4000000000000004,0.9);
\fill[fill=p1col4] (3.6000000000000005,0) rectangle (4.6000000000000005,1.15);
\end{tikzpicture}
 \\[2mm]} & \begin{tikzpicture}[scale=30/100]
\draw[tree edge] (2.00,-1) -- (1.00,-2);
\draw[tree edge] (2.00,-1) -- (3.00,-2);
\draw[tree edge] (7.00,-2) -- (6.50,-3);
\draw[tree edge] (7.00,-2) -- (7.50,-3);
\draw[tree edge] (6.00,-1) -- (5.00,-2);
\draw[tree edge] (6.00,-1) -- (7.00,-2);
\draw[tree edge] (4.00,0) -- (2.00,-1);
\draw[tree edge] (4.00,0) -- (6.00,-1);
\node[tree vertex,fill=p1col1] at (2.00,-1) {};
\node[tree vertex,fill=p1col4] at (7.00,-2) {};
\node[tree vertex,fill=p1col3] at (6.00,-1) {};
\node[tree vertex,fill=p1col2] at (4.00,0) {};
\end{tikzpicture}
 & \begin{tikzpicture}[scale=65/100]
\fill[p1col2] (-0.50,-0.87)--(-0.50,0.87)--(0.50,0.87)--cycle;
\fill[p1col1] (-1.00,0.00)--(-0.50,0.87)--(-0.50,-0.87)--cycle;
\fill[p1col3] (0.50,-0.87)--(-0.50,-0.87)--(0.50,0.87)--cycle;
\fill[p1col4] (1.00,-0.00)--(0.50,-0.87)--(0.50,0.87)--cycle;
\node[triang vertex] at (-0.50,0.87) {};
\node[triang vertex] at (-1.00,0.00) {};
\node[triang vertex] at (-0.50,-0.87) {};
\node[triang vertex] at (0.50,-0.87) {};
\node[triang vertex] at (1.00,-0.00) {};
\node[triang vertex] at (0.50,0.87) {};
\end{tikzpicture}
 & \begin{tikzpicture}[scale=30/100]
\fill[p1col1] (1,1)--(2,2)--(2,2)--(3,1)--cycle;
\fill[p1col4] (6,0)--(7,1)--(7,1)--(8,0)--cycle;
\fill[p1col3] (4,0)--(5,1)--(5,1)--(6,0)--cycle;
\fill[p1col2] (0,0)--(1,1)--(3,1)--(4,0)--cycle;
\end{tikzpicture}
 \vspace{-4mm} \\
\parbox{13mm}{$1\rvec{432}$ \\ \begin{tikzpicture}[scale=30/100]
\fill[fill=p1col1] (0,0) rectangle (1,0.4);
\fill[fill=p1col4] (1.2,0) rectangle (2.2,1.15);
\fill[fill=p1col3] (2.4000000000000004,0) rectangle (3.4000000000000004,0.9);
\fill[fill=p1col2] (3.6000000000000005,0) rectangle (4.6000000000000005,0.65);
\end{tikzpicture}
 \\[2mm]} & \begin{tikzpicture}[scale=30/100]
\draw[tree edge] (4.50,-3) -- (4.25,-4);
\draw[tree edge] (4.50,-3) -- (4.75,-4);
\draw[tree edge] (5.00,-2) -- (4.50,-3);
\draw[tree edge] (5.00,-2) -- (5.50,-3);
\draw[tree edge] (6.00,-1) -- (5.00,-2);
\draw[tree edge] (6.00,-1) -- (7.00,-2);
\draw[tree edge] (4.00,0) -- (2.00,-1);
\draw[tree edge] (4.00,0) -- (6.00,-1);
\node[tree vertex,fill=p1col2] at (4.50,-3) {};
\node[tree vertex,fill=p1col3] at (5.00,-2) {};
\node[tree vertex,fill=p1col4] at (6.00,-1) {};
\node[tree vertex,fill=p1col1] at (4.00,0) {};
\end{tikzpicture}
 & \begin{tikzpicture}[scale=65/100]
\fill[p1col1] (-1.00,0.00)--(-0.50,0.87)--(0.50,0.87)--cycle;
\fill[p1col4] (1.00,-0.00)--(-1.00,0.00)--(0.50,0.87)--cycle;
\fill[p1col3] (0.50,-0.87)--(-1.00,0.00)--(1.00,-0.00)--cycle;
\fill[p1col2] (-0.50,-0.87)--(-1.00,0.00)--(0.50,-0.87)--cycle;
\node[triang vertex] at (-0.50,0.87) {};
\node[triang vertex] at (-1.00,0.00) {};
\node[triang vertex] at (-0.50,-0.87) {};
\node[triang vertex] at (0.50,-0.87) {};
\node[triang vertex] at (1.00,-0.00) {};
\node[triang vertex] at (0.50,0.87) {};
\end{tikzpicture}
 & \begin{tikzpicture}[scale=30/100]
\fill[p1col2] (4,2)--(5,3)--(5,3)--(6,2)--cycle;
\fill[p1col3] (3,1)--(4,2)--(6,2)--(7,1)--cycle;
\fill[p1col4] (2,0)--(3,1)--(7,1)--(8,0)--cycle;
\fill[p1col1] (0,0)--(1,1)--(1,1)--(2,0)--cycle;
\end{tikzpicture}
 &
\parbox{13mm}{$\lvec{214}3$ \\ \begin{tikzpicture}[scale=30/100]
\fill[fill=p1col2] (0,0) rectangle (1,0.65);
\fill[fill=p1col1] (1.2,0) rectangle (2.2,0.4);
\fill[fill=p1col4] (2.4000000000000004,0) rectangle (3.4000000000000004,1.15);
\fill[fill=p1col3] (3.6000000000000005,0) rectangle (4.6000000000000005,0.9);
\end{tikzpicture}
 \\[2mm]} & \begin{tikzpicture}[scale=30/100]
\draw[tree edge] (2.00,-1) -- (1.00,-2);
\draw[tree edge] (2.00,-1) -- (3.00,-2);
\draw[tree edge] (5.00,-2) -- (4.50,-3);
\draw[tree edge] (5.00,-2) -- (5.50,-3);
\draw[tree edge] (6.00,-1) -- (5.00,-2);
\draw[tree edge] (6.00,-1) -- (7.00,-2);
\draw[tree edge] (4.00,0) -- (2.00,-1);
\draw[tree edge] (4.00,0) -- (6.00,-1);
\node[tree vertex,fill=p1col1] at (2.00,-1) {};
\node[tree vertex,fill=p1col3] at (5.00,-2) {};
\node[tree vertex,fill=p1col4] at (6.00,-1) {};
\node[tree vertex,fill=p1col2] at (4.00,0) {};
\end{tikzpicture}
 & \begin{tikzpicture}[scale=65/100]
\fill[p1col1] (-1.00,0.00)--(-0.50,0.87)--(0.50,0.87)--cycle;
\fill[p1col4] (1.00,-0.00)--(-1.00,0.00)--(0.50,0.87)--cycle;
\fill[p1col3] (0.50,-0.87)--(-1.00,0.00)--(1.00,-0.00)--cycle;
\fill[p1col2] (-0.50,-0.87)--(-1.00,0.00)--(0.50,-0.87)--cycle;
\node[triang vertex] at (-0.50,0.87) {};
\node[triang vertex] at (-1.00,0.00) {};
\node[triang vertex] at (-0.50,-0.87) {};
\node[triang vertex] at (0.50,-0.87) {};
\node[triang vertex] at (1.00,-0.00) {};
\node[triang vertex] at (0.50,0.87) {};
\end{tikzpicture}
 & \begin{tikzpicture}[scale=30/100]
\fill[p1col1] (1,1)--(2,2)--(2,2)--(3,1)--cycle;
\fill[p1col3] (5,1)--(6,2)--(6,2)--(7,1)--cycle;
\fill[p1col4] (4,0)--(5,1)--(7,1)--(8,0)--cycle;
\fill[p1col2] (0,0)--(1,1)--(3,1)--(4,0)--cycle;
\end{tikzpicture}
 \vspace{-4mm} \\
\parbox{13mm}{$\lvec{13}24$ \\ \begin{tikzpicture}[scale=30/100]
\fill[fill=p1col1] (0,0) rectangle (1,0.4);
\fill[fill=p1col3] (1.2,0) rectangle (2.2,0.9);
\fill[fill=p1col2] (2.4000000000000004,0) rectangle (3.4000000000000004,0.65);
\fill[fill=p1col4] (3.6000000000000005,0) rectangle (4.6000000000000005,1.15);
\end{tikzpicture}
 \\[2mm]} & \begin{tikzpicture}[scale=30/100]
\draw[tree edge] (5.00,-2) -- (4.50,-3);
\draw[tree edge] (5.00,-2) -- (5.50,-3);
\draw[tree edge] (7.00,-2) -- (6.50,-3);
\draw[tree edge] (7.00,-2) -- (7.50,-3);
\draw[tree edge] (6.00,-1) -- (5.00,-2);
\draw[tree edge] (6.00,-1) -- (7.00,-2);
\draw[tree edge] (4.00,0) -- (2.00,-1);
\draw[tree edge] (4.00,0) -- (6.00,-1);
\node[tree vertex,fill=p1col2] at (5.00,-2) {};
\node[tree vertex,fill=p1col4] at (7.00,-2) {};
\node[tree vertex,fill=p1col3] at (6.00,-1) {};
\node[tree vertex,fill=p1col1] at (4.00,0) {};
\end{tikzpicture}
 & \begin{tikzpicture}[scale=65/100]
\fill[p1col1] (-1.00,0.00)--(-0.50,0.87)--(0.50,0.87)--cycle;
\fill[p1col3] (0.50,-0.87)--(-1.00,0.00)--(0.50,0.87)--cycle;
\fill[p1col2] (-0.50,-0.87)--(-1.00,0.00)--(0.50,-0.87)--cycle;
\fill[p1col4] (1.00,-0.00)--(0.50,-0.87)--(0.50,0.87)--cycle;
\node[triang vertex] at (-0.50,0.87) {};
\node[triang vertex] at (-1.00,0.00) {};
\node[triang vertex] at (-0.50,-0.87) {};
\node[triang vertex] at (0.50,-0.87) {};
\node[triang vertex] at (1.00,-0.00) {};
\node[triang vertex] at (0.50,0.87) {};
\end{tikzpicture}
 & \begin{tikzpicture}[scale=30/100]
\fill[p1col2] (3,1)--(4,2)--(4,2)--(5,1)--cycle;
\fill[p1col4] (6,0)--(7,1)--(7,1)--(8,0)--cycle;
\fill[p1col3] (2,0)--(3,1)--(5,1)--(6,0)--cycle;
\fill[p1col1] (0,0)--(1,1)--(1,1)--(2,0)--cycle;
\end{tikzpicture}
 &
\parbox{13mm}{$4213$ \\ \begin{tikzpicture}[scale=30/100]
\fill[fill=p1col4] (0,0) rectangle (1,1.15);
\fill[fill=p1col2] (1.2,0) rectangle (2.2,0.65);
\fill[fill=p1col1] (2.4000000000000004,0) rectangle (3.4000000000000004,0.4);
\fill[fill=p1col3] (3.6000000000000005,0) rectangle (4.6000000000000005,0.9);
\end{tikzpicture}
 \\[2mm]} & \begin{tikzpicture}[scale=30/100]
\draw[tree edge] (1.00,-2) -- (0.50,-3);
\draw[tree edge] (1.00,-2) -- (1.50,-3);
\draw[tree edge] (3.00,-2) -- (2.50,-3);
\draw[tree edge] (3.00,-2) -- (3.50,-3);
\draw[tree edge] (2.00,-1) -- (1.00,-2);
\draw[tree edge] (2.00,-1) -- (3.00,-2);
\draw[tree edge] (4.00,0) -- (2.00,-1);
\draw[tree edge] (4.00,0) -- (6.00,-1);
\node[tree vertex,fill=p1col1] at (1.00,-2) {};
\node[tree vertex,fill=p1col3] at (3.00,-2) {};
\node[tree vertex,fill=p1col2] at (2.00,-1) {};
\node[tree vertex,fill=p1col4] at (4.00,0) {};
\end{tikzpicture}
 & \begin{tikzpicture}[scale=65/100]
\fill[p1col4] (1.00,-0.00)--(-0.50,0.87)--(0.50,0.87)--cycle;
\fill[p1col2] (-0.50,-0.87)--(-0.50,0.87)--(1.00,-0.00)--cycle;
\fill[p1col1] (-1.00,0.00)--(-0.50,0.87)--(-0.50,-0.87)--cycle;
\fill[p1col3] (0.50,-0.87)--(-0.50,-0.87)--(1.00,-0.00)--cycle;
\node[triang vertex] at (-0.50,0.87) {};
\node[triang vertex] at (-1.00,0.00) {};
\node[triang vertex] at (-0.50,-0.87) {};
\node[triang vertex] at (0.50,-0.87) {};
\node[triang vertex] at (1.00,-0.00) {};
\node[triang vertex] at (0.50,0.87) {};
\end{tikzpicture}
 & \begin{tikzpicture}[scale=30/100]
\fill[p1col1] (2,2)--(3,3)--(3,3)--(4,2)--cycle;
\fill[p1col3] (5,1)--(6,2)--(6,2)--(7,1)--cycle;
\fill[p1col2] (1,1)--(2,2)--(4,2)--(5,1)--cycle;
\fill[p1col4] (0,0)--(1,1)--(7,1)--(8,0)--cycle;
\end{tikzpicture}
 \vspace{-4mm} \\
\end{tabular}
\caption{231-avoiding permutations of length $n=4$ generated by Algorithm~J and the resulting Gray codes for Catalan families (binary trees, triangulations, Dyck paths), coinciding with the ordering described by Lucas, Roelants van Baronaigien, and Ruskey.}
\label{fig:cat4}
\end{figure}
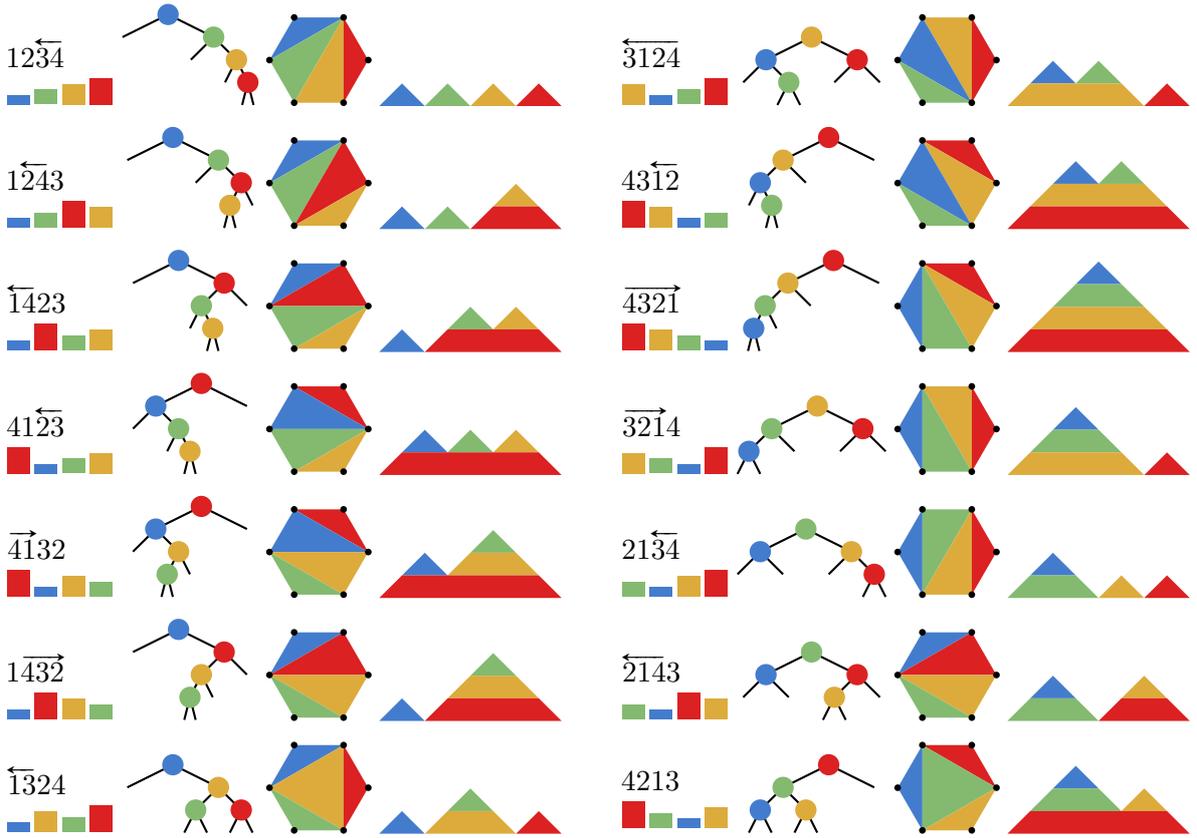

\subsection{Set partitions (Kaye)}

Consider the set~$X_n$ of partitions of the set~$[n]$ into nonempty subsets.
We can represent any such partition~$x=\{x_1,\ldots,x_k\}$, $x_i\subseteq [n]$, in a canonic way, by sorting all subsets in decreasing order of their minimum element, and the elements of each subset in increasing order.
Then $x$ is mapped to a permutation $f(x)\in S_n$ by writing out the elements from left to right in this canonic representation of~$x$.
For instance, the set partition $x=\{\{9\},\{6\},\{3,4,7\},\{1,2,5,8\}\}$ is encoded as the permutation $f(x)=963471258$.
Observe that $f(X_n)$ is the set of permutations with the property that for every descent~$a_i a_{i+1}$, $a_i>a_{i+1}$, in the permutation, no value left of~$a_i$ is smaller than~$a_{i+1}$.
This notion of pattern-avoidance can be described concisely by the vincular permutation pattern $1\ul{32}$, i.e., we have $f(X_n)=S_n(1\ul{32})$ (the formal definition of vincular patterns is given in the next section), which we will prove to be a zigzag language.
A jump of the value~$i$ in the permutation corresponds to moving the element~$i$ from its subset to the previous or next subset in the canonic representation, possibly creating a singleton set~$\{i\}$. 
Moreover, $f^{-1}(J(S_n(1\ul{32})))$ is exactly the ordering of set partitions described by Kaye~\cite{MR0443423}, which he showed can be implemented efficiently.
This ordering is shown in Figure~\ref{fig:part4}.
Applying $f^{-1}$ to Algorithm~J yields the following simple greedy algorithm to describe this order (see \cite{DBLP:conf/wads/Williams13}):
\textbf{J1.} Visit the set partition $\{\{1,\ldots,n\}\}$. \textbf{J2.} Move the largest possible element from its subset to the previous or next subset so as to obtain a previously unvisited partition; then visit this partition and repeat~J2.

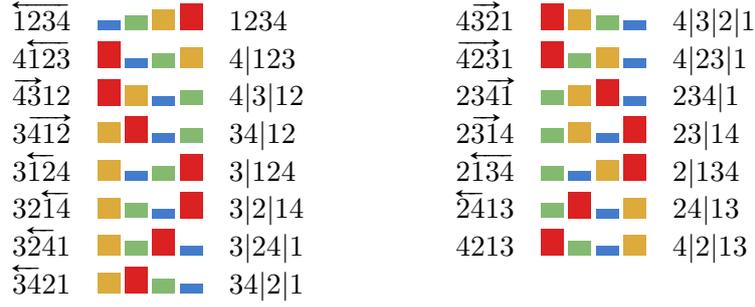
\begin{figure}
\begin{tabular}{ccl@{\hskip 20mm}ccl}
$\lvec{1234}$ & \begin{tikzpicture}[scale=30/100]
\fill[fill=p1col1] (0,0) rectangle (1,0.4);
\fill[fill=p1col2] (1.2,0) rectangle (2.2,0.65);
\fill[fill=p1col3] (2.4000000000000004,0) rectangle (3.4000000000000004,0.9);
\fill[fill=p1col4] (3.6000000000000005,0) rectangle (4.6000000000000005,1.15);
\end{tikzpicture}
 & $1234$ & $4\rvec{32}1$ & \begin{tikzpicture}[scale=30/100]
\fill[fill=p1col4] (0,0) rectangle (1,1.15);
\fill[fill=p1col3] (1.2,0) rectangle (2.2,0.9);
\fill[fill=p1col2] (2.4000000000000004,0) rectangle (3.4000000000000004,0.65);
\fill[fill=p1col1] (3.6000000000000005,0) rectangle (4.6000000000000005,0.4);
\end{tikzpicture}
 & $4|3|2|1$ \\
$4\lvec{123}$ & \begin{tikzpicture}[scale=30/100]
\fill[fill=p1col4] (0,0) rectangle (1,1.15);
\fill[fill=p1col1] (1.2,0) rectangle (2.2,0.4);
\fill[fill=p1col2] (2.4000000000000004,0) rectangle (3.4000000000000004,0.65);
\fill[fill=p1col3] (3.6000000000000005,0) rectangle (4.6000000000000005,0.9);
\end{tikzpicture}
 & $4|123$ & $\rvec{423}1$ & \begin{tikzpicture}[scale=30/100]
\fill[fill=p1col4] (0,0) rectangle (1,1.15);
\fill[fill=p1col2] (1.2,0) rectangle (2.2,0.65);
\fill[fill=p1col3] (2.4000000000000004,0) rectangle (3.4000000000000004,0.9);
\fill[fill=p1col1] (3.6000000000000005,0) rectangle (4.6000000000000005,0.4);
\end{tikzpicture}
 & $4|23|1$  \\
$\rvec{43}12$ & \begin{tikzpicture}[scale=30/100]
\fill[fill=p1col4] (0,0) rectangle (1,1.15);
\fill[fill=p1col3] (1.2,0) rectangle (2.2,0.9);
\fill[fill=p1col1] (2.4000000000000004,0) rectangle (3.4000000000000004,0.4);
\fill[fill=p1col2] (3.6000000000000005,0) rectangle (4.6000000000000005,0.65);
\end{tikzpicture}
 & $4|3|12$ & $23\rvec{41}$ & \begin{tikzpicture}[scale=30/100]
\fill[fill=p1col2] (0,0) rectangle (1,0.65);
\fill[fill=p1col3] (1.2,0) rectangle (2.2,0.9);
\fill[fill=p1col4] (2.4000000000000004,0) rectangle (3.4000000000000004,1.15);
\fill[fill=p1col1] (3.6000000000000005,0) rectangle (4.6000000000000005,0.4);
\end{tikzpicture}
 & $234|1$ \\
$3\rvec{412}$ & \begin{tikzpicture}[scale=30/100]
\fill[fill=p1col3] (0,0) rectangle (1,0.9);
\fill[fill=p1col4] (1.2,0) rectangle (2.2,1.15);
\fill[fill=p1col1] (2.4000000000000004,0) rectangle (3.4000000000000004,0.4);
\fill[fill=p1col2] (3.6000000000000005,0) rectangle (4.6000000000000005,0.65);
\end{tikzpicture}
 & $34|12$ & $2\rvec{31}4$ & \begin{tikzpicture}[scale=30/100]
\fill[fill=p1col2] (0,0) rectangle (1,0.65);
\fill[fill=p1col3] (1.2,0) rectangle (2.2,0.9);
\fill[fill=p1col1] (2.4000000000000004,0) rectangle (3.4000000000000004,0.4);
\fill[fill=p1col4] (3.6000000000000005,0) rectangle (4.6000000000000005,1.15);
\end{tikzpicture}
 & $23|14$  \\
$3\lvec{12}4$ & \begin{tikzpicture}[scale=30/100]
\fill[fill=p1col3] (0,0) rectangle (1,0.9);
\fill[fill=p1col1] (1.2,0) rectangle (2.2,0.4);
\fill[fill=p1col2] (2.4000000000000004,0) rectangle (3.4000000000000004,0.65);
\fill[fill=p1col4] (3.6000000000000005,0) rectangle (4.6000000000000005,1.15);
\end{tikzpicture}
 & $3|124$ & $2\lvec{134}$ & \begin{tikzpicture}[scale=30/100]
\fill[fill=p1col2] (0,0) rectangle (1,0.65);
\fill[fill=p1col1] (1.2,0) rectangle (2.2,0.4);
\fill[fill=p1col3] (2.4000000000000004,0) rectangle (3.4000000000000004,0.9);
\fill[fill=p1col4] (3.6000000000000005,0) rectangle (4.6000000000000005,1.15);
\end{tikzpicture}
 & $2|134$  \\
$32\lvec{14}$ & \begin{tikzpicture}[scale=30/100]
\fill[fill=p1col3] (0,0) rectangle (1,0.9);
\fill[fill=p1col2] (1.2,0) rectangle (2.2,0.65);
\fill[fill=p1col1] (2.4000000000000004,0) rectangle (3.4000000000000004,0.4);
\fill[fill=p1col4] (3.6000000000000005,0) rectangle (4.6000000000000005,1.15);
\end{tikzpicture}
 & $3|2|14$ & $\lvec{24}13$ & \begin{tikzpicture}[scale=30/100]
\fill[fill=p1col2] (0,0) rectangle (1,0.65);
\fill[fill=p1col4] (1.2,0) rectangle (2.2,1.15);
\fill[fill=p1col1] (2.4000000000000004,0) rectangle (3.4000000000000004,0.4);
\fill[fill=p1col3] (3.6000000000000005,0) rectangle (4.6000000000000005,0.9);
\end{tikzpicture}
 & $24|13$ \\
$3\lvec{24}1$ & \begin{tikzpicture}[scale=30/100]
\fill[fill=p1col3] (0,0) rectangle (1,0.9);
\fill[fill=p1col2] (1.2,0) rectangle (2.2,0.65);
\fill[fill=p1col4] (2.4000000000000004,0) rectangle (3.4000000000000004,1.15);
\fill[fill=p1col1] (3.6000000000000005,0) rectangle (4.6000000000000005,0.4);
\end{tikzpicture}
 & $3|24|1$ & $4213$ & \begin{tikzpicture}[scale=30/100]
\fill[fill=p1col4] (0,0) rectangle (1,1.15);
\fill[fill=p1col2] (1.2,0) rectangle (2.2,0.65);
\fill[fill=p1col1] (2.4000000000000004,0) rectangle (3.4000000000000004,0.4);
\fill[fill=p1col3] (3.6000000000000005,0) rectangle (4.6000000000000005,0.9);
\end{tikzpicture}
 & $4|2|13$ \\
$\lvec{34}21$ & \begin{tikzpicture}[scale=30/100]
\fill[fill=p1col3] (0,0) rectangle (1,0.9);
\fill[fill=p1col4] (1.2,0) rectangle (2.2,1.15);
\fill[fill=p1col2] (2.4000000000000004,0) rectangle (3.4000000000000004,0.65);
\fill[fill=p1col1] (3.6000000000000005,0) rectangle (4.6000000000000005,0.4);
\end{tikzpicture}
 & $34|2|1$ & & & \\
\end{tabular}
\caption{$1\ul{32}$-avoiding permutations of length $n=4$ generated by Algorithm~J and resulting Gray code for set partitions, coinciding with Kaye's Gray code.
Set partitions are denoted compactly, omitting curly brackets and commas in the canonic representation, and using vertical bars to separate subsets.}
\label{fig:part4}
\end{figure}

\begin{table}
\tikzstyle{mesh node}=[circle,minimum size=1.6pt,inner sep=0pt,draw=black,fill=black]
\small
\caption{Tame permutation patterns and corresponding combinatorial objects and orderings generated by Algorithm~J.
Some patterns are interesting in their own right and have no `natural' associated combinatorial objects, in which case the first two columns are merged.
The patterns with underscores, bars, boxes, and those drawn as grids with some shaded cells, are defined and explained in Sections~\ref{sec:vinc}--\ref{sec:mesh}.
See Table~\ref{tab:tame2} for more patterns.
}
\label{tab:tame1}
\renewcommand{\arraystretch}{0.9}
\begin{tabular}{|l|l|l|}
\hline
\textbf{Tame patterns} & \textbf{Combinatorial objects and ordering} & \textbf{References/OEIS} \cite{oeis} \\ \hline
none    & permutations by adjacent & \cite{MR0159764,DBLP:journals/cacm/Trotter62}, A000142 \\
        & transpositions \textrightarrow{} plain change order & \\ \hline
$231=2\ul{31}$   & \textit{Catalan families} & A000108 \\
        & \textbullet{} binary trees by rotations \textrightarrow{} Lucas- & \cite{MR1239499} \\
        & \hspace{4mm} -Roelants van Baronaigien-Ruskey order & \\
        & \textbullet{} triangulations by edge flips & \\
        & \textbullet{} Dyck paths by hill flips & \\ \hline
$1\ul{32}$ & \textit{Bell families} & A000110 \\
           & \textbullet{} set partitions by element & \cite{MR0443423, DBLP:conf/wads/Williams13} \\
           & \hspace{2.6mm} exchanges \textrightarrow{} Kaye's order & \\ \hline
$132\wedge 231=\ul{13}2\wedge 2\ul{31}$: & binary strings by bitflips & \cite{gray_1953}, A011782 \\
permutations without peaks & \textrightarrow{} reflected Gray code order (BRGC) & \\ \hline
$1342$ & forests of $\beta(0,1)$-trees & \cite{MR1485138,MR3441169}, \\
       &                               & A022558 \\ \hline
\multicolumn{2}{|l|}{$2143$: vexillary permutations} & \cite{MR815233}, A005802 \\ \hline
\multicolumn{2}{|l|}{conjunction of~$v_k$ tame patterns with $v_2=35, v_3=91, v_4=2346$} & \cite{MR3252657}, A224318, \\
\multicolumn{2}{|l|}{(see \cite{billey_website}): $k$-vexillary permutations $(k\geq 1)$} & A223034, A223905 \\ \hline
\multicolumn{2}{|l|}{$2143\wedge 3412$: skew-merged permutations} & \cite{MR1297387,MR1490467}, A029759 \\ \hline
\multicolumn{2}{|l|}{$2143\wedge 2413\wedge 3142$} & \cite{MR2609478,MR3211776}, A033321 \\ \hline 
\multicolumn{2}{|l|}{$2143\wedge 2413\wedge 3142\wedge 3412$: X-shaped permutations} & \cite{waton_2007,MR2785755}, A006012 \\ \hline
$2413\wedge 3142$: & \textit{Schr\"oder families} & A006318 \\
separable permutations & \textbullet{} slicing floorplans (=guillotine & \cite{MR624050, MR1620935, MR2233287} \\
                       & \hspace{2.6mm} partitions) & \\
                       & \textbullet{} topological drawings of $K_{2,n}$ & \cite{MR3840257} \\ \hline
$2\ul{41}3\wedge 3\ul{14}2$: Baxter & mosaic floorplans (=diagonal & \cite{DBLP:journals/todaes/YaoCCG03,MR2233287} \\
$2\ul{41}3\wedge 3\ul{41}2$: twisted Baxter & rectangulations=R-equivalent & \cite{MR2871762,MR3878132} \\
$2\ul{14}3\wedge 3\ul{14}2$ &  rectangulations) & A001181 \\ \hline
$2\ul{14}3\wedge 3\ul{41}2$ & S-equivalent rectangulations & \cite{MR3084577}, A214358 \\ \hline
$2\ul{14}3\wedge 3\ul{41}2\wedge 2413\wedge 3142$ & S-equivalent guillotine rectangulations & \cite{MR3084577}, A078482 \\ \hline
$3\ul{51}24\wedge 3\ul{51}42\wedge 24\ul{51}3\wedge 42\ul{51}3$: & generic rectangulations & \cite{MR2864445} \\ 
2-clumped permutations                                                                     & (=rectangular drawings) & \\ \hline
\multicolumn{2}{|l|}{conjunction of~$c_k$ tame patterns with $c_k=2(k/2)!(k/2+1)!$ for $k$ even} & \cite{MR2864445} \\ 
\multicolumn{2}{|l|}{and $c_k=2((k+1)/2)!^2$ for $k$ odd: $k$-clumped permutations} & \\ \hline
\multicolumn{2}{|l|}{$12543\!\wedge\! 13254\!\wedge\! 13524\!\wedge\! 13542\!\wedge\! 21543\!\wedge\! 125364\!\wedge\! 125634\!\wedge\! 215364\!\wedge\! 215634$} & \cite{MR4229590} \\
\multicolumn{2}{|l|}{${}\wedge 315264\wedge 315624\wedge 315642$: permutations with 0-1 Schubert polynomial} & \\ \hline
\multicolumn{2}{|l|}{$2143\wedge 2413\wedge 3412\wedge 314562\wedge 412563\wedge 415632\wedge 431562\wedge 512364$} & \cite{MR3881276} \\
\multicolumn{2}{|l|}{${}\wedge 512643\wedge 516432\wedge 541263\wedge 541632\wedge 543162$: widdershins permutations} & \\ \hline
\multicolumn{2}{|l|}{$243\ol{1}$ (A051295), $25\ol{3}14$ (A117106), $352\ol{4}1$ (A137534), $4\ol{2}513$ (A137535)} & \cite{MR2595489} (OEIS \\
\multicolumn{2}{|l|}{$425\ol{1}3$ (A110447), $\ol{4}2153$ (A137536), $25\ol{1}34$ (A137538), $\ol{4}1523$ (A137539)} &  shown on the left) \\
\multicolumn{2}{|l|}{$\ol{4}1253$ (A137540), $3524\ol{1}$ (A137542)} & \\ \hline
\multicolumn{2}{|l|}{$3\ol{1}52\ol{4}=3\ol{1}42\wedge 241\ol{3}$} & \cite{MR2712381,MR2652101}, \\
\multicolumn{2}{|l|}{}                                            & A098569 \\ \hline
\multicolumn{2}{|l|}{$\vcenter{\hbox{\boxp{2143}, \boxp{3142}}}$} & \cite{MR2973348} \\\hline
\multicolumn{2}{|l|}{$\vcenter{\hbox{\begin{tikzpicture}[scale=0.16]
\path [fill=gray!60] (2,1) rectangle (3,5);
\path [fill=gray!60] (1,3) rectangle (5,4);
\path [fill=white] (0,0) rectangle (1,-0.3);
\path [fill=white] (0,6) rectangle (1,6.3);
\mesh{5}{3,1,5,2,4}
\end{tikzpicture}
}}
{}\wedge\,{}
\vcenter{\hbox{\begin{tikzpicture}[scale=0.16]
\path [fill=gray!60] (3,1) rectangle (4,5);
\path [fill=gray!60] (1,2) rectangle (5,3);
\mesh{5}{2,4,1,5,3}
\end{tikzpicture}}}$\hspace{1ex}: \parbox{7.3cm}{permutations that characterize Schubert varieties which are Gorenstein}} & \cite{MR2264071}, A097483 \\ \hline
$\vcenter{\hbox{\begin{tikzpicture}[scale=0.16]
\path [fill=gray!60] (0,1) rectangle (4,2);
\path [fill=gray!60] (1,0) rectangle (2,4);
\mesh{3}{2,3,1}
\end{tikzpicture}
}}$
& $(\mathbf{2}+\mathbf{2})$-free posets & \parbox{4cm}{\begin{flushleft}\cite{parviainen_2009,MR2652101} A022493\end{flushleft}} \\ \hline
\end{tabular}
\end{table}

\section{Pattern-avoiding permutations}
\label{sec:pattern}

The first main application of our framework is the generation of pattern-avoiding permutations.
Our main results in this section are summarized in Theorem~\ref{thm:tame}, Theorem~\ref{thm:mesh} (and its corollaries Lemmas~\ref{lem:class}--\ref{lem:bivinc}), and in Table~\ref{tab:tame1}.
We emphasize that all our results can be generalized to bounding the number of appearances of patterns, where the special case with a bound of~0 appearances is pattern-avoidance; see Section~\ref{sec:mult} below.

\subsection{Preliminaries}
\label{sec:prelim-perm}

The following simple but powerful lemma follows immediately from the definition of zigzag languages given in Section~\ref{sec:zigzag}.
For any set~$L_n\seq S_n$ we define $p(L_n):=\{p(\pi)\mid \pi\in L_n\}$.

\begin{lemma}
\label{lem:zigzag-ops}
Let $L_n,M_n\seq S_n$, $n\geq 1$, be two zigzag languages of permutations.
Then $L_n\cup M_n$ and $L_n\cap M_n$ are also zigzag languages of permutations, and we have $p(L_n\cup M_n)=p(L_n)\cup p(M_n)$ and $p(L_n\cap M_n)=p(L_n)\cap p(M_n)$.
\end{lemma}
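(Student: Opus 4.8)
The plan is to establish the two set-theoretic identities first, and then to verify the two closure claims by induction on~$n$, in both cases exploiting the recursive definition of zigzag languages together with condition~(z).

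The union identity $p(L_n\cup M_n)=p(L_n)\cup p(M_n)$ needs no hypotheses at all: it is the general fact that the image of a union under the map~$p$ is the union of the images. The intersection identity is the first genuine step. The inclusion $p(L_n\cap M_n)\subseteq p(L_n)\cap p(M_n)$ holds for an arbitrary map, but the reverse inclusion is where condition~(z) enters, and it is the conceptual crux of the lemma, since for a general map images need not commute with intersection. Concretely, if $\tau\in p(L_n)\cap p(M_n)$, then $\tau\in p(L_n)=L_{n-1}$ and $\tau\in p(M_n)=M_{n-1}$, so condition~(z) applied to the zigzag language~$L_n$ yields $c_1(\tau)\in L_n$, and applied to~$M_n$ yields $c_1(\tau)\in M_n$; hence $c_1(\tau)\in L_n\cap M_n$, and therefore $\tau=p(c_1(\tau))\in p(L_n\cap M_n)$ (the child $c_n(\tau)$ would serve equally well). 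This gives both identities.

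Next I would prove that $L_n\cup M_n$ and $L_n\cap M_n$ are zigzag languages by induction on~$n$, with the trivial base case $n=0$, where every zigzag language equals $\{\varepsilon\}$. For the inductive step I would verify the two requirements of the recursive definition. First, by the identities just proved we have $p(L_n\cup M_n)=p(L_n)\cup p(M_n)$ and $p(L_n\cap M_n)=p(L_n)\cap p(M_n)$; since $p(L_n)$ and $p(M_n)$ are zigzag languages (being exactly the languages induced by the zigzag languages $L_n$ and $M_n$ via the rule $L_{n-1}:=p(L_n)$), the induction hypothesis at level $n-1$ shows that these unions and intersections are again zigzag languages. Second, I would check condition~(z): for $L_n\cup M_n$, any $\tau\in p(L_n)\cup p(M_n)$ lies in $p(L_n)$ or in $p(M_n)$, so condition~(z) for the corresponding language gives $c_1(\tau),c_n(\tau)\in L_n\cup M_n$; for $L_n\cap M_n$, any $\tau\in p(L_n)\cap p(M_n)$ lies in both, so condition~(z) for $L_n$ and for $M_n$ together give $c_1(\tau),c_n(\tau)\in L_n\cap M_n$. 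Both requirements holding, the induction closes.

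The only real obstacle is the intersection identity: equality of $p(L_n\cap M_n)$ with $p(L_n)\cap p(M_n)$ would fail for arbitrary subsets of~$S_n$, and the whole point is that the closure condition~(z) guarantees a canonical common preimage, namely the child $c_1(\tau)$ (or $c_n(\tau)$), lying in both languages. Once this is in hand, the closure claims are a routine unwinding of the recursive definition, and the induction proceeds without further complication.
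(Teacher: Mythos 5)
Your proof is correct and is precisely the argument the paper has in mind: the paper states that this lemma ``follows immediately from the definition'' and omits the details, and your write-up supplies exactly those details — the union identity is automatic, the nontrivial inclusion $p(L_n)\cap p(M_n)\subseteq p(L_n\cap M_n)$ is obtained via the common child $c_1(\tau)$ guaranteed by condition~(z), and the closure under union and intersection follows by a routine induction on~$n$. Nothing is missing.
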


We say that two sequences of integers $\sigma$ and~$\tau$ are \emph{order-isomorphic}, if their elements appear in the same relative order in both sequences.
For instance, $2576$ and $1243$ are order-isomorphic.
Given two permutations~$\pi\in S_n$ and~$\tau\in S_k$, we say that $\pi$ \emph{contains the pattern $\tau$}, if and only if $\pi=a_1\cdots a_n$ contains a subpermutation $a_{i_1}\cdots a_{i_k}$, $i_1<\cdots<i_k$, that is order-isomorphic to $\tau$.
We refer to such a subpermutation as a \emph{match} of~$\tau$ in~$\pi$.
If $\pi$ does not contain the pattern~$\tau$, then we say that \emph{$\pi$ avoids~$\tau$}.
For example, $\pi=6\colorbox{black!30}{\!35\!}41\colorbox{black!30}{\!2\!}$ contains the pattern $\tau=231$, as the highlighted entries form a match of~$\tau$ in~$\pi$.
On the other hand, $\pi=654123$ avoids $\tau=231$.
For any permutation~$\tau$, we let~$S_n(\tau)$ denote all permutations from~$S_n$ avoiding the pattern~$\tau$.

For propositional formulas~$F$ and~$G$ consisting of logical ANDs~$\wedge$, ORs~$\vee$, and patterns as variables, we define
\begin{align}
\begin{split}
\label{eq:zigzag-ops}
S_n(F\wedge G) &:= S_n(F)\cap S_n(G), \\
S_n(F\vee G)   &:= S_n(F)\cup S_n(G).
\end{split}
\end{align}
For instance, $S_n(\tau_1\wedge\cdots\wedge \tau_\ell)$ is the set of permutations avoiding each of the patterns~$\tau_1,\ldots,\tau_\ell$, and $S_n(\tau_1\vee\cdots\vee \tau_\ell)$ is the set of permutations avoiding at least one of the patterns~$\tau_1,\ldots,\tau_\ell$.

\begin{remark}
From the point of view of counting, we clearly have $|L_n\cup M_n|=|L_n|+|M_n|-|L_n\cap M_n|$, so the problem of counting the union of two zigzag languages can be reduced to counting the individual languages and the intersection.
However, from the point of view of exhaustive generation, we clearly do not want to take this approach, namely generate all permutations in~$L_n$, all permutations in~$M_n$, all permutations in~$L_n\cap M_n$, and then combine and reduce those lists.
This shows that the problem of generating languages like $S_n(\tau_1\vee\cdots\vee \tau_\ell)$ or $S_n(F)$ for more general formulas $F$ is genuinely interesting in our context.
We will see a few applications of this general setting below, and we feel that this direction of generalization deserves further investigation by the pattern-avoidance community.
\end{remark}

\subsection{Tame patterns}
\label{sec:tame}

We say that an infinite sequence of sets $L_0,L_1,\ldots$ is \emph{hereditary}, if $L_{i-1}=p(L_i)$ holds for all $i\geq 1$.
We say that a permutation pattern~$\tau$ is \emph{tame}, if $S_n(\tau)$, $n\geq 0$, is a hereditary sequence of zigzag languages.
The hereditary property ensures that for a given set $S_n(\tau)=:L_n$, we can check whether a permutation~$\pi$ is in the sets $L_{i-1}:=p(L_i)$ for $i=n,n-1,\ldots,1$ simply by checking for matches of the pattern~$\tau$ in~$\pi$.
See also the discussion in Section~\ref{sec:limits}.

The following theorem is an immediate consequence of Lemma~\ref{lem:zigzag-ops} and the definition~\eqref{eq:zigzag-ops}.

\begin{theorem}
\label{thm:tame}
Let $F$ be an arbitrary propositional formula consisting of logical ANDs~$\wedge$, ORs~$\vee$, and tame patterns as variables, then $S_n(F)$, $n\geq 0$, is a hereditary sequence of zigzag languages.
Consequently, all of these languages can be generated by Algorithm~J.
\end{theorem}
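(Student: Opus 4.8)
The plan is to prove the statement by structural induction on the formula~$F$. The base case is when $F=\tau$ is a single tame pattern: here the conclusion is exactly the definition of tameness given in Section~\ref{sec:tame}, so there is nothing to show. For the inductive step I would assume the claim for two subformulas~$F_1$ and~$F_2$ and establish it for the two ways of combining them, $F_1\wedge F_2$ and $F_1\vee F_2$. Throughout, the induction hypothesis is the full statement for~$F_1$ and~$F_2$, i.e.\ that $S_n(F_1)$ and $S_n(F_2)$ are zigzag languages for every~$n\geq 0$ and that the two sequences are hereditary.

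The heart of the argument is a direct computation using Lemma~\ref{lem:zigzag-ops}. Fix $n\geq 1$ and consider first $F=F_1\wedge F_2$. By the definition~\eqref{eq:zigzag-ops} we have $S_n(F)=S_n(F_1)\cap S_n(F_2)$. Since $S_n(F_1)$ and $S_n(F_2)$ are zigzag languages by induction, the first assertion of Lemma~\ref{lem:zigzag-ops} shows that their intersection $S_n(F)$ is again a zigzag language; this gives the zigzag property. For the hereditary property I would invoke the second assertion of the same lemma, namely that $p$ commutes with~$\cap$, and then apply the induction hypothesis $p(S_n(F_i))=S_{n-1}(F_i)$:
\[
p\big(S_n(F)\big)=p\big(S_n(F_1)\cap S_n(F_2)\big)=p\big(S_n(F_1)\big)\cap p\big(S_n(F_2)\big)=S_{n-1}(F_1)\cap S_{n-1}(F_2)=S_{n-1}(F).
\]
The case $F=F_1\vee F_2$ is handled by the identical computation with~$\cup$ in place of~$\cap$, using $S_n(F)=S_n(F_1)\cup S_n(F_2)$ and the union parts of Lemma~\ref{lem:zigzag-ops}. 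The remaining case $n=0$ is trivial, since $S_0(F)=\{\varepsilon\}$ for every formula~$F$, which is both the required zigzag language~$L_0$ and the anchor of the hereditary recursion. This completes the induction. The final sentence of the theorem then follows at once from Theorem~\ref{thm:jump}: every zigzag language is generated exactly once by Algorithm~J when initialized with the identity permutation.

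I do not expect a genuine obstacle, as the entire theorem is really the observation that the class of hereditary sequences of zigzag languages is closed under the operations~$\cap$ and~$\cup$, and Lemma~\ref{lem:zigzag-ops} packages precisely this closure together with the compatibility of~$p$ with these operations. The one point deserving a little care is that the zigzag property and the hereditary property must be carried through the induction \emph{simultaneously}: the hereditary hypothesis $p(S_n(F_i))=S_{n-1}(F_i)$ is exactly what is needed to convert the $p$-commutation of Lemma~\ref{lem:zigzag-ops} into a statement about the sets $S_{n-1}(F_i)$, so neither property can be verified in isolation.
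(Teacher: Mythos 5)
Your proposal is correct and matches the paper's intent exactly: the paper dispatches Theorem~\ref{thm:tame} as ``an immediate consequence of Lemma~\ref{lem:zigzag-ops} and the definition~\eqref{eq:zigzag-ops}'', and your structural induction on~$F$, carrying the zigzag and hereditary properties simultaneously via the closure and $p$-commutation parts of that lemma, is precisely the argument being left implicit. No gaps.
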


In the following we provide simple sufficient conditions guaranteeing that a pattern is tame (cf.~Section~\ref{sec:limits}).

\begin{lemma}
\label{lem:class}
If a pattern $\tau\in S_k$, $k\geq 3$, does not have the largest value~$k$ at the leftmost or rightmost position, then it is tame.
\end{lemma}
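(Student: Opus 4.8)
The plan is to prove, by induction on $n$, the two assertions that together constitute tameness: that $S_n(\tau)$ is a zigzag language, and that $p(S_n(\tau)) = S_{n-1}(\tau)$ (the hereditary identity). The base case $n=0$ is immediate, since $S_0(\tau) = \{\varepsilon\}$. For the inductive step, unwinding the recursive definition of zigzag language shows that, granted the inductive hypothesis that $S_{n-1}(\tau)$ is a zigzag language, the whole step reduces to verifying the closure condition~(z): for every $\sigma \in S_{n-1}(\tau)$ one has $c_1(\sigma) \in S_n(\tau)$ and $c_n(\sigma) \in S_n(\tau)$. So my strategy is to prove (z) directly from the hypothesis on $\tau$, and then read off both the zigzag property and the hereditary identity as consequences.

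First I would dispose of the easy half of the hereditary identity. Removing the largest entry $n$ from any $\pi \in S_n(\tau)$ yields a subpermutation of $\pi$, which therefore still avoids $\tau$; hence $p(S_n(\tau)) \subseteq S_{n-1}(\tau)$. This inclusion holds for every pattern and makes no use of the position of $k$ in $\tau$.

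The heart of the argument, and the only place the hypothesis enters, is condition~(z). Fix $\sigma = a_1 \cdots a_{n-1} \in S_{n-1}(\tau)$ and consider $c_1(\sigma) = n\, a_1 \cdots a_{n-1}$. Suppose for contradiction that $c_1(\sigma)$ contains a match of $\tau$. Since $n$ is the global maximum of $c_1(\sigma)$, if the entry $n$ belonged to this match it would have to play the role of the largest value $k$ of $\tau$; but by hypothesis $k$ does not occupy the leftmost position of $\tau$, so the match would require a selected entry strictly to the left of $n$, which is impossible because $n$ sits at the very front. Hence the entry $n$ is not used by the match, the match lies entirely inside $a_1 \cdots a_{n-1} = \sigma$, and $\sigma$ contains $\tau$ --- a contradiction. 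The symmetric argument applied to $c_n(\sigma) = a_1 \cdots a_{n-1}\, n$ uses instead that $k$ is not at the rightmost position of $\tau$. This establishes~(z).

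Finally I would close the induction. Condition~(z) gives in particular $c_1(\sigma) \in S_n(\tau)$ with $p(c_1(\sigma)) = \sigma$, so $S_{n-1}(\tau) \subseteq p(S_n(\tau))$; combined with the easy inclusion this yields $p(S_n(\tau)) = S_{n-1}(\tau)$, which is a zigzag language by the inductive hypothesis. Together with~(z), this is exactly what the recursive definition demands for $S_n(\tau)$ to be a zigzag language, and the hereditary identity is the equality just proved; this completes the induction. The only nontrivial point is the verification of~(z), and even that is short once one isolates the key observation that the freshly inserted global maximum can only match the value $k$ of $\tau$, while being pinned to an extreme position of the word --- the rest is bookkeeping.
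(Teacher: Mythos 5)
Your proof is correct. It is worth noting, though, that the paper does not prove Lemma~\ref{lem:class} directly: it derives it as the special case $B=\emptyset$ of Lemma~\ref{lem:bruhat}, which in turn is the special case $C=\emptyset$ of the general mesh-pattern criterion, Theorem~\ref{thm:mesh}. Your argument is essentially that general proof specialized to classical patterns: the induction on $n$, the observation that the freshly inserted global maximum can only play the role of the value $k$ in a match while being pinned to an extreme position (this is exactly condition~(i) of Theorem~\ref{thm:mesh}), and the two-part verification of the closure condition and the hereditary identity all appear in the same form there. What your direct route buys is brevity and self-containment for this one lemma; in particular the ``removal'' half of the hereditary identity is trivial for classical patterns, as you note, since a match in $p(\pi)$ is already a match in $\pi$. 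What the paper's route buys is that the single proof of Theorem~\ref{thm:mesh} simultaneously yields Lemmas~\ref{lem:class}--\ref{lem:bivinc}, and it is precisely in the mesh setting with $C\neq\emptyset$ that the removal step becomes genuinely delicate (deleting the highest point can empty a shaded cell and thereby \emph{create} a match, which is handled there by an exchange argument). One minor point you could make explicit: your reduction of the inductive step to condition~(z) quietly requires knowing that the set $L_{n-1}:=p(S_n(\tau))$ appearing in the recursive definition of zigzag language coincides with $S_{n-1}(\tau)$; your ordering of the steps (prove~(z) for all of $S_{n-1}(\tau)$ first, then deduce the equality $p(S_n(\tau))=S_{n-1}(\tau)$, then invoke the inductive hypothesis) avoids any circularity, but the reader has to check that this is the order in which the claims are actually established.
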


We prove Lemma~\ref{lem:class} in Section~\ref{sec:mesh-proofs}.

Table~\ref{tab:tame1} lists several tame patterns and the combinatorial objects encoded by the corresponding zigzag languages.
The bijections between those permutations and the combinatorial objects are well-known and are described in the listed papers (recall also Section~\ref{sec:examples}).
The ordering of $231$-avoiding permutations of length~$n=4$ generated by Algorithm~J, and the corresponding Gray codes for three different Catalan objects are shown in Figure~\ref{fig:cat4}.
We refer to the permutation patterns discussed so far as \emph{classical} patterns.
In the following we discuss some other important variants of permutation patterns appearing in the literature.

\subsection{Vincular patterns}
\label{sec:vinc}

Vincular patterns were introduced by Babson and Steingr\'{i}msson~\cite{MR1758852}.
In a \emph{vincular} pattern~$\tau$, there is exactly one underlined pair of consecutive entries, with the interpretation that a match of~$\tau$ in~$\pi$ requires that the underlined entries match adjacent positions in~$\pi$.
For instance, the permutation $\pi=\colorbox{black!30}{\!3\!}1\colorbox{black!30}{\!4\!}\colorbox{black!30}{\!2\!}$ contains the pattern~$\tau=231$, but it avoids the vincular pattern~$\tau=\ul{23}1$.

\begin{lemma}
\label{lem:vinc}
If a vincular pattern $\tau\in S_k$, $k\geq 3$, does not have the largest value~$k$ at the leftmost or rightmost position, and the largest value~$k$ is part of the vincular pair, then it is tame.
\end{lemma}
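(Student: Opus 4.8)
The plan is to show that a vincular pattern $\tau$ satisfying the stated hypotheses is tame, meaning $S_n(\tau)$ is a hereditary sequence of zigzag languages. There are two things to verify: first, the hereditary property $p(S_n(\tau)) = S_{n-1}(\tau)$, and second, the zigzag closure condition~(z), namely that if $\pi \in S_{n-1}(\tau)$ then both $c_1(\pi)$ and $c_n(\pi)$ avoid $\tau$. Since $\tau$ has length $k \geq 3$ and its largest value $k$ is not at either end, the permutation $\tau$ already satisfies the hypothesis of the classical Lemma~\ref{lem:class}; the new ingredient here is only the vincular constraint (one underlined adjacent pair), together with the extra assumption that $k$ belongs to that pair. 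I expect the argument to run parallel to the proof of Lemma~\ref{lem:class}, with the vincular adjacency requirement only making pattern-containment \emph{harder}, hence avoidance \emph{easier} to preserve.

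First I would verify heredity. The key observation is that inserting the value $n$ at the leftmost or rightmost end, or removing it, interacts cleanly with matches of $\tau$ because $\tau$ has length $k < n$ and its largest entry is internal. Concretely, for $n > k$ a match of $\tau$ in a permutation of $[n]$ never needs to use the globally largest value $n$ in a way that forces it to an extreme position of the match; so deleting $n$ from $\pi \in S_n$ cannot create a new match, and any match in $p(\pi)$ lifts to a match in $\pi$. This gives $p(S_n(\tau)) \subseteq S_{n-1}(\tau)$ and the reverse inclusion, establishing $p(S_n(\tau)) = S_{n-1}(\tau)$. The vincular underlining does not obstruct this step: it only restricts which subsequences count as matches, and deletion/insertion of $n$ at the boundary does not disturb the relevant adjacencies among the smaller entries.

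Next I would check condition~(z), which is the heart of the matter. Suppose $\pi \in S_{n-1}$ avoids $\tau$; I claim $c_1(\pi)$ and $c_n(\pi)$ also avoid $\tau$. Suppose for contradiction that $c_n(\pi)$ (say) contains a match of $\tau$. Since $\tau$ has its largest value $k$ strictly interior, in any match the entry playing the role of $k$ occupies a position with at least one match-entry to its left and at least one to its right. The inserted value $n$ sits at the extreme right end of $c_n(\pi)$, so if $n$ participates in the match it can only play the role of $k$ (being the largest), but then it would need match-entries to its right---impossible, as it is rightmost. Hence $n$ is not used, and the match lies entirely within $\pi$, contradicting $\pi \in S_n(\tau)$. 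The symmetric argument handles $c_1(\pi)$. \textbf{The role of the vincular hypothesis} enters precisely here: because $k$ is part of the underlined pair, the entry playing $k$ must be \emph{adjacent} (in $c_n(\pi)$) to its partner in the vincular pair; an endpoint-inserted $n$ has a neighbor on only one side, which is exactly the side where $k$ cannot be placed, reinforcing that $n$ cannot serve as $k$. This confirms that the inserted value is never usable, so avoidance is preserved.

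\textbf{The main obstacle} I anticipate is bookkeeping the vincular adjacency correctly in the insertion argument, and in particular ruling out the possibility that the newly inserted $n$ occupies a \emph{non-maximal} role in some match while its presence shifts adjacencies that let the vincular pair align. I would resolve this by noting that $n$ is the unique global maximum, so in any match it can only be mapped to the pattern's maximum $k$; no other role is available to it. Combined with the fact that $k$ is interior (needs neighbors on both sides within the match) and lies in the vincular pair (needs an \emph{immediate} neighbor on a specific side), the boundary placement of $n$ forecloses every possibility. Once this single case analysis is laid out cleanly, both the hereditary property and condition~(z) follow, and Theorem~\ref{thm:tame} together with the recursive definition of zigzag languages then yields that $\tau$ is tame.
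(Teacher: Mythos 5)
Your closure argument (condition~(z)) is fine, but your heredity step contains a genuine gap, and in fact you have attached the key hypothesis to the wrong half of the proof. In the closure step the assumption that $k$ belongs to the vincular pair is not needed at all: the inserted value $n$ sits at an extreme position of $c_1(\pi)$ or $c_n(\pi)$, so it cannot play the role of the interior value $k$, and a match avoiding $n$ survives intact in $\pi$ because inserting an entry at the boundary does not break any adjacency among the old entries. The place where the vincular hypothesis is indispensable is the inclusion $p(S_n(\tau))\subseteq S_{n-1}(\tau)$, i.e., the claim that deleting the value $n$ from a $\tau$-avoiding $\pi\in S_n$ cannot create a match. Your justification (``deletion/insertion of $n$ at the boundary does not disturb the relevant adjacencies'') is incorrect: in this step $n$ is \emph{not} at the boundary of $\pi$ --- it can sit anywhere --- and deleting it makes its two neighbours adjacent, which can create a brand-new vincular match. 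Indeed, your argument as written would equally ``prove'' heredity for a vincular pattern whose largest value is \emph{not} in the vincular pair, and that is false: $\pi=15243$ avoids $\ul{12}43$ (the only occurrence of the classical pattern $1243$ uses positions $1,3,4,5$, violating the adjacency requirement), yet $p(\pi)=1243$ contains $\ul{12}43$. So the hereditary property genuinely fails there, and no argument that ignores the location of the vincular pair can be correct.

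The missing idea is an exchange argument. Suppose $\pi$ avoids $\tau$ but $p(\pi)$ contains a match. As a subsequence of $\pi$ this match has the right relative order, so the only possible failure in $\pi$ is that the removed entry $n$ sat between the two entries matched to the vincular pair. Since $k$ belongs to that pair, one of those two entries is the largest entry of the match; replace it by $n$. Because $n$ exceeds every entry, the new subsequence is still order-isomorphic to $\tau$, its positions are still increasing, and the vincular pair is now adjacent in $\pi$ --- a match of $\tau$ in $\pi$, contradiction. With this inserted, your direct proof goes through. For comparison, the paper does not argue directly: it encodes the vincular pattern as a mesh pattern whose grid representation has one fully shaded column and invokes Theorem~\ref{thm:mesh} (via Lemma~\ref{lem:bivinc}); the exchange argument above is exactly what the proof of Theorem~\ref{thm:mesh} performs in that general setting.
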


We prove Lemma~\ref{lem:vinc} in Section~\ref{sec:mesh-proofs}.

Table~\ref{tab:tame1} also lists several tame vincular patterns and the combinatorial objects encoded by the corresponding zigzag languages, namely set partitions and different kinds of rectangulations.
The ordering of $1\ul{32}$-avoiding permutations of length $n=4$ generated by Algorithm~J, and the resulting Gray code for set partitions, is shown in Figure~\ref{fig:part4}.
The generated ordering of twisted Baxter permutations of length~$n=4$, and the resulting Gray code for diagonal rectangulations, is shown in Figure~\ref{fig:tbaxter4}.

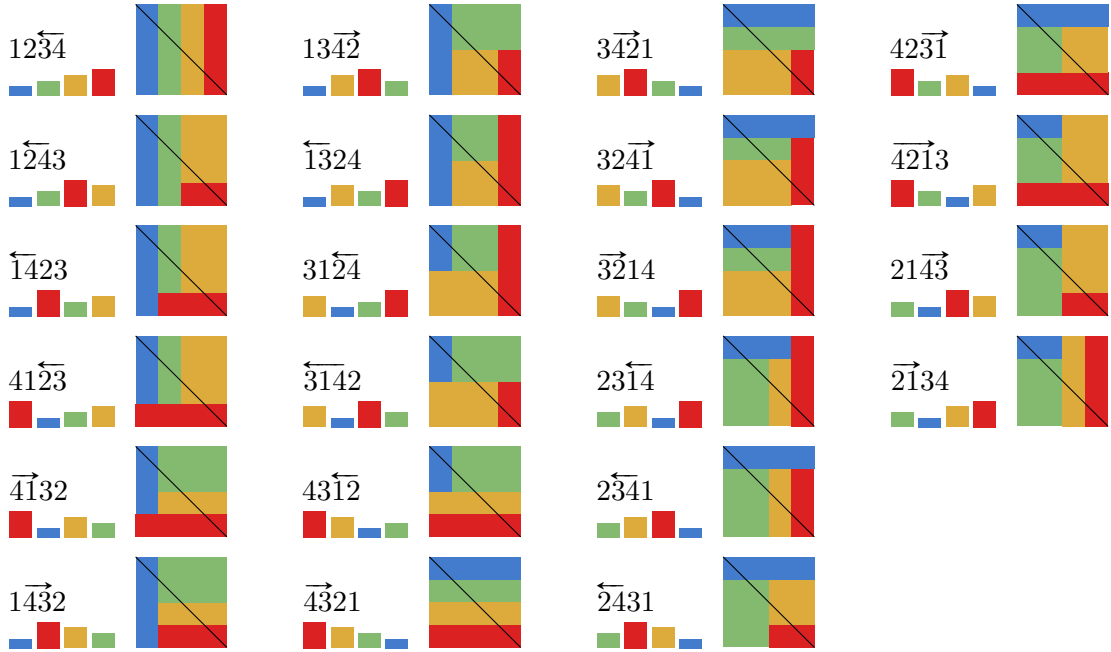
\begin{figure}
\begin{tabular}{cc@{\hskip 10mm}cc@{\hskip 10mm}cc@{\hskip 10mm}cc}
\parbox{13mm}{12\lvec{34} \\ \begin{tikzpicture}[scale=30/100]
\fill[fill=p1col1] (0,0) rectangle (1,0.4);
\fill[fill=p1col2] (1.2,0) rectangle (2.2,0.65);
\fill[fill=p1col3] (2.4000000000000004,0) rectangle (3.4000000000000004,0.9);
\fill[fill=p1col4] (3.6000000000000005,0) rectangle (4.6000000000000005,1.15);
\end{tikzpicture}
 \\[1.5mm]} & \begin{tikzpicture}[scale=30/100,font=\tiny]
\fill[p1col1] (0,4) rectangle (1,0);
\fill[p1col2] (1,4) rectangle (2,0);
\fill[p1col3] (2,4) rectangle (3,0);
\fill[p1col4] (3,4) rectangle (4,0);
\draw (0,4)--(4,0);
\end{tikzpicture}
 & 
\parbox{13mm}{13\rvec{42} \\ \begin{tikzpicture}[scale=30/100]
\fill[fill=p1col1] (0,0) rectangle (1,0.4);
\fill[fill=p1col3] (1.2,0) rectangle (2.2,0.9);
\fill[fill=p1col4] (2.4000000000000004,0) rectangle (3.4000000000000004,1.15);
\fill[fill=p1col2] (3.6000000000000005,0) rectangle (4.6000000000000005,0.65);
\end{tikzpicture}
 \\[1.5mm]} & \begin{tikzpicture}[scale=30/100,font=\tiny]
\fill[p1col1] (0,4) rectangle (1,0);
\fill[p1col3] (1,2) rectangle (3,0);
\fill[p1col4] (3,2) rectangle (4,0);
\fill[p1col2] (1,4) rectangle (4,2);
\draw (0,4)--(4,0);
\end{tikzpicture}
 & 
\parbox{13mm}{3\rvec{42}1 \\ \begin{tikzpicture}[scale=30/100]
\fill[fill=p1col3] (0,0) rectangle (1,0.9);
\fill[fill=p1col4] (1.2,0) rectangle (2.2,1.15);
\fill[fill=p1col2] (2.4000000000000004,0) rectangle (3.4000000000000004,0.65);
\fill[fill=p1col1] (3.6000000000000005,0) rectangle (4.6000000000000005,0.4);
\end{tikzpicture}
 \\[1.5mm]} & \begin{tikzpicture}[scale=30/100,font=\tiny]
\fill[p1col3] (0,2) rectangle (3,0);
\fill[p1col4] (3,2) rectangle (4,0);
\fill[p1col2] (0,3) rectangle (4,2);
\fill[p1col1] (0,4) rectangle (4,3);
\draw (0,4)--(4,0);
\end{tikzpicture}
 & 
\parbox{13mm}{42\rvec{31} \\ \begin{tikzpicture}[scale=30/100]
\fill[fill=p1col4] (0,0) rectangle (1,1.15);
\fill[fill=p1col2] (1.2,0) rectangle (2.2,0.65);
\fill[fill=p1col3] (2.4000000000000004,0) rectangle (3.4000000000000004,0.9);
\fill[fill=p1col1] (3.6000000000000005,0) rectangle (4.6000000000000005,0.4);
\end{tikzpicture}
 \\[1.5mm]} & \begin{tikzpicture}[scale=30/100,font=\tiny]
\fill[p1col4] (0,1) rectangle (4,0);
\fill[p1col2] (0,3) rectangle (2,1);
\fill[p1col3] (2,3) rectangle (4,1);
\fill[p1col1] (0,4) rectangle (4,3);
\draw (0,4)--(4,0);
\end{tikzpicture}
 \vspace{-4mm} \\
\parbox{13mm}{1\lvec{24}3 \\ \begin{tikzpicture}[scale=30/100]
\fill[fill=p1col1] (0,0) rectangle (1,0.4);
\fill[fill=p1col2] (1.2,0) rectangle (2.2,0.65);
\fill[fill=p1col4] (2.4000000000000004,0) rectangle (3.4000000000000004,1.15);
\fill[fill=p1col3] (3.6000000000000005,0) rectangle (4.6000000000000005,0.9);
\end{tikzpicture}
 \\[1.5mm]} & \begin{tikzpicture}[scale=30/100,font=\tiny]
\fill[p1col1] (0,4) rectangle (1,0);
\fill[p1col2] (1,4) rectangle (2,0);
\fill[p1col4] (2,1) rectangle (4,0);
\fill[p1col3] (2,4) rectangle (4,1);
\draw (0,4)--(4,0);
\end{tikzpicture}
 &
\parbox{13mm}{\lvec{13}24 \\ \begin{tikzpicture}[scale=30/100]
\fill[fill=p1col1] (0,0) rectangle (1,0.4);
\fill[fill=p1col3] (1.2,0) rectangle (2.2,0.9);
\fill[fill=p1col2] (2.4000000000000004,0) rectangle (3.4000000000000004,0.65);
\fill[fill=p1col4] (3.6000000000000005,0) rectangle (4.6000000000000005,1.15);
\end{tikzpicture}
 \\[1.5mm]} & \begin{tikzpicture}[scale=30/100,font=\tiny]
\fill[p1col1] (0,4) rectangle (1,0);
\fill[p1col3] (1,2) rectangle (3,0);
\fill[p1col2] (1,4) rectangle (3,2);
\fill[p1col4] (3,4) rectangle (4,0);
\draw (0,4)--(4,0);
\end{tikzpicture}
 &
\parbox{13mm}{32\rvec{41} \\ \begin{tikzpicture}[scale=30/100]
\fill[fill=p1col3] (0,0) rectangle (1,0.9);
\fill[fill=p1col2] (1.2,0) rectangle (2.2,0.65);
\fill[fill=p1col4] (2.4000000000000004,0) rectangle (3.4000000000000004,1.15);
\fill[fill=p1col1] (3.6000000000000005,0) rectangle (4.6000000000000005,0.4);
\end{tikzpicture}
 \\[1.5mm]} & \begin{tikzpicture}[scale=30/100,font=\tiny]
\fill[p1col3] (0,2) rectangle (3,0);
\fill[p1col2] (0,3) rectangle (3,2);
\fill[p1col4] (3,3) rectangle (4,0);
\fill[p1col1] (0,4) rectangle (4,3);
\draw (0,4)--(4,0);
\end{tikzpicture}
 &
\parbox{13mm}{\rvec{421}3 \\ \begin{tikzpicture}[scale=30/100]
\fill[fill=p1col4] (0,0) rectangle (1,1.15);
\fill[fill=p1col2] (1.2,0) rectangle (2.2,0.65);
\fill[fill=p1col1] (2.4000000000000004,0) rectangle (3.4000000000000004,0.4);
\fill[fill=p1col3] (3.6000000000000005,0) rectangle (4.6000000000000005,0.9);
\end{tikzpicture}
 \\[1.5mm]} & \begin{tikzpicture}[scale=30/100,font=\tiny]
\fill[p1col4] (0,1) rectangle (4,0);
\fill[p1col2] (0,3) rectangle (2,1);
\fill[p1col1] (0,4) rectangle (2,3);
\fill[p1col3] (2,4) rectangle (4,1);
\draw (0,4)--(4,0);
\end{tikzpicture}
 \vspace{-4mm} \\
\parbox{13mm}{\lvec{14}23 \\ \begin{tikzpicture}[scale=30/100]
\fill[fill=p1col1] (0,0) rectangle (1,0.4);
\fill[fill=p1col4] (1.2,0) rectangle (2.2,1.15);
\fill[fill=p1col2] (2.4000000000000004,0) rectangle (3.4000000000000004,0.65);
\fill[fill=p1col3] (3.6000000000000005,0) rectangle (4.6000000000000005,0.9);
\end{tikzpicture}
 \\[1.5mm]} & \begin{tikzpicture}[scale=30/100,font=\tiny]
\fill[p1col1] (0,4) rectangle (1,0);
\fill[p1col4] (1,1) rectangle (4,0);
\fill[p1col2] (1,4) rectangle (2,1);
\fill[p1col3] (2,4) rectangle (4,1);
\draw (0,4)--(4,0);
\end{tikzpicture}
 &
\parbox{13mm}{31\lvec{24} \\ \begin{tikzpicture}[scale=30/100]
\fill[fill=p1col3] (0,0) rectangle (1,0.9);
\fill[fill=p1col1] (1.2,0) rectangle (2.2,0.4);
\fill[fill=p1col2] (2.4000000000000004,0) rectangle (3.4000000000000004,0.65);
\fill[fill=p1col4] (3.6000000000000005,0) rectangle (4.6000000000000005,1.15);
\end{tikzpicture}
 \\[1.5mm]} & \begin{tikzpicture}[scale=30/100,font=\tiny]
\fill[p1col3] (0,2) rectangle (3,0);
\fill[p1col1] (0,4) rectangle (1,2);
\fill[p1col2] (1,4) rectangle (3,2);
\fill[p1col4] (3,4) rectangle (4,0);
\draw (0,4)--(4,0);
\end{tikzpicture}
 &
\parbox{13mm}{\rvec{32}14 \\ \begin{tikzpicture}[scale=30/100]
\fill[fill=p1col3] (0,0) rectangle (1,0.9);
\fill[fill=p1col2] (1.2,0) rectangle (2.2,0.65);
\fill[fill=p1col1] (2.4000000000000004,0) rectangle (3.4000000000000004,0.4);
\fill[fill=p1col4] (3.6000000000000005,0) rectangle (4.6000000000000005,1.15);
\end{tikzpicture}
 \\[1.5mm]} & \begin{tikzpicture}[scale=30/100,font=\tiny]
\fill[p1col3] (0,2) rectangle (3,0);
\fill[p1col2] (0,3) rectangle (3,2);
\fill[p1col1] (0,4) rectangle (3,3);
\fill[p1col4] (3,4) rectangle (4,0);
\draw (0,4)--(4,0);
\end{tikzpicture}
 &
\parbox{13mm}{21\rvec{43} \\ \begin{tikzpicture}[scale=30/100]
\fill[fill=p1col2] (0,0) rectangle (1,0.65);
\fill[fill=p1col1] (1.2,0) rectangle (2.2,0.4);
\fill[fill=p1col4] (2.4000000000000004,0) rectangle (3.4000000000000004,1.15);
\fill[fill=p1col3] (3.6000000000000005,0) rectangle (4.6000000000000005,0.9);
\end{tikzpicture}
 \\[1.5mm]} & \begin{tikzpicture}[scale=30/100,font=\tiny]
\fill[p1col2] (0,3) rectangle (2,0);
\fill[p1col1] (0,4) rectangle (2,3);
\fill[p1col4] (2,1) rectangle (4,0);
\fill[p1col3] (2,4) rectangle (4,1);
\draw (0,4)--(4,0);
\end{tikzpicture}
 \vspace{-4mm} \\
\parbox{13mm}{41\lvec{23} \\ \begin{tikzpicture}[scale=30/100]
\fill[fill=p1col4] (0,0) rectangle (1,1.15);
\fill[fill=p1col1] (1.2,0) rectangle (2.2,0.4);
\fill[fill=p1col2] (2.4000000000000004,0) rectangle (3.4000000000000004,0.65);
\fill[fill=p1col3] (3.6000000000000005,0) rectangle (4.6000000000000005,0.9);
\end{tikzpicture}
 \\[1.5mm]} & \begin{tikzpicture}[scale=30/100,font=\tiny]
\fill[p1col4] (0,1) rectangle (4,0);
\fill[p1col1] (0,4) rectangle (1,1);
\fill[p1col2] (1,4) rectangle (2,1);
\fill[p1col3] (2,4) rectangle (4,1);
\draw (0,4)--(4,0);
\end{tikzpicture}
 &
\parbox{13mm}{\lvec{314}2 \\ \begin{tikzpicture}[scale=30/100]
\fill[fill=p1col3] (0,0) rectangle (1,0.9);
\fill[fill=p1col1] (1.2,0) rectangle (2.2,0.4);
\fill[fill=p1col4] (2.4000000000000004,0) rectangle (3.4000000000000004,1.15);
\fill[fill=p1col2] (3.6000000000000005,0) rectangle (4.6000000000000005,0.65);
\end{tikzpicture}
 \\[1.5mm]} & \begin{tikzpicture}[scale=30/100,font=\tiny]
\fill[p1col3] (0,2) rectangle (3,0);
\fill[p1col1] (0,4) rectangle (1,2);
\fill[p1col4] (3,2) rectangle (4,0);
\fill[p1col2] (1,4) rectangle (4,2);
\draw (0,4)--(4,0);
\end{tikzpicture}
 &
\parbox{13mm}{23\lvec{14} \\ \begin{tikzpicture}[scale=30/100]
\fill[fill=p1col2] (0,0) rectangle (1,0.65);
\fill[fill=p1col3] (1.2,0) rectangle (2.2,0.9);
\fill[fill=p1col1] (2.4000000000000004,0) rectangle (3.4000000000000004,0.4);
\fill[fill=p1col4] (3.6000000000000005,0) rectangle (4.6000000000000005,1.15);
\end{tikzpicture}
 \\[1.5mm]} & \begin{tikzpicture}[scale=30/100,font=\tiny]
\fill[p1col2] (0,3) rectangle (2,0);
\fill[p1col3] (2,3) rectangle (3,0);
\fill[p1col1] (0,4) rectangle (3,3);
\fill[p1col4] (3,4) rectangle (4,0);
\draw (0,4)--(4,0);
\end{tikzpicture}
 &
\parbox{13mm}{\rvec{21}34 \\ \begin{tikzpicture}[scale=30/100]
\fill[fill=p1col2] (0,0) rectangle (1,0.65);
\fill[fill=p1col1] (1.2,0) rectangle (2.2,0.4);
\fill[fill=p1col3] (2.4000000000000004,0) rectangle (3.4000000000000004,0.9);
\fill[fill=p1col4] (3.6000000000000005,0) rectangle (4.6000000000000005,1.15);
\end{tikzpicture}
 \\[1.5mm]} & \begin{tikzpicture}[scale=30/100,font=\tiny]
\fill[p1col2] (0,3) rectangle (2,0);
\fill[p1col1] (0,4) rectangle (2,3);
\fill[p1col3] (2,4) rectangle (3,0);
\fill[p1col4] (3,4) rectangle (4,0);
\draw (0,4)--(4,0);
\end{tikzpicture}
 \vspace{-4mm} \\
\parbox{13mm}{\rvec{41}32 \\ \begin{tikzpicture}[scale=30/100]
\fill[fill=p1col4] (0,0) rectangle (1,1.15);
\fill[fill=p1col1] (1.2,0) rectangle (2.2,0.4);
\fill[fill=p1col3] (2.4000000000000004,0) rectangle (3.4000000000000004,0.9);
\fill[fill=p1col2] (3.6000000000000005,0) rectangle (4.6000000000000005,0.65);
\end{tikzpicture}
 \\[1.5mm]} & \begin{tikzpicture}[scale=30/100,font=\tiny]
\fill[p1col4] (0,1) rectangle (4,0);
\fill[p1col1] (0,4) rectangle (1,1);
\fill[p1col3] (1,2) rectangle (4,1);
\fill[p1col2] (1,4) rectangle (4,2);
\draw (0,4)--(4,0);
\end{tikzpicture}
 &
\parbox{13mm}{43\lvec{12} \\ \begin{tikzpicture}[scale=30/100]
\fill[fill=p1col4] (0,0) rectangle (1,1.15);
\fill[fill=p1col3] (1.2,0) rectangle (2.2,0.9);
\fill[fill=p1col1] (2.4000000000000004,0) rectangle (3.4000000000000004,0.4);
\fill[fill=p1col2] (3.6000000000000005,0) rectangle (4.6000000000000005,0.65);
\end{tikzpicture}
 \\[1.5mm]} & \begin{tikzpicture}[scale=30/100,font=\tiny]
\fill[p1col4] (0,1) rectangle (4,0);
\fill[p1col3] (0,2) rectangle (4,1);
\fill[p1col1] (0,4) rectangle (1,2);
\fill[p1col2] (1,4) rectangle (4,2);
\draw (0,4)--(4,0);
\end{tikzpicture}
 &
\parbox{13mm}{2\lvec{34}1 \\ \begin{tikzpicture}[scale=30/100]
\fill[fill=p1col2] (0,0) rectangle (1,0.65);
\fill[fill=p1col3] (1.2,0) rectangle (2.2,0.9);
\fill[fill=p1col4] (2.4000000000000004,0) rectangle (3.4000000000000004,1.15);
\fill[fill=p1col1] (3.6000000000000005,0) rectangle (4.6000000000000005,0.4);
\end{tikzpicture}
 \\[1.5mm]} & \begin{tikzpicture}[scale=30/100,font=\tiny]
\fill[p1col2] (0,3) rectangle (2,0);
\fill[p1col3] (2,3) rectangle (3,0);
\fill[p1col4] (3,3) rectangle (4,0);
\fill[p1col1] (0,4) rectangle (4,3);
\draw (0,4)--(4,0);
\end{tikzpicture}
 & & \vspace{-4mm} \\
\parbox{13mm}{1\rvec{43}2 \\ \begin{tikzpicture}[scale=30/100]
\fill[fill=p1col1] (0,0) rectangle (1,0.4);
\fill[fill=p1col4] (1.2,0) rectangle (2.2,1.15);
\fill[fill=p1col3] (2.4000000000000004,0) rectangle (3.4000000000000004,0.9);
\fill[fill=p1col2] (3.6000000000000005,0) rectangle (4.6000000000000005,0.65);
\end{tikzpicture}
 \\[1.5mm]} & \begin{tikzpicture}[scale=30/100,font=\tiny]
\fill[p1col1] (0,4) rectangle (1,0);
\fill[p1col4] (1,1) rectangle (4,0);
\fill[p1col3] (1,2) rectangle (4,1);
\fill[p1col2] (1,4) rectangle (4,2);
\draw (0,4)--(4,0);
\end{tikzpicture}
 &
\parbox{13mm}{\rvec{43}21 \\ \begin{tikzpicture}[scale=30/100]
\fill[fill=p1col4] (0,0) rectangle (1,1.15);
\fill[fill=p1col3] (1.2,0) rectangle (2.2,0.9);
\fill[fill=p1col2] (2.4000000000000004,0) rectangle (3.4000000000000004,0.65);
\fill[fill=p1col1] (3.6000000000000005,0) rectangle (4.6000000000000005,0.4);
\end{tikzpicture}
 \\[1.5mm]} & \begin{tikzpicture}[scale=30/100,font=\tiny]
\fill[p1col4] (0,1) rectangle (4,0);
\fill[p1col3] (0,2) rectangle (4,1);
\fill[p1col2] (0,3) rectangle (4,2);
\fill[p1col1] (0,4) rectangle (4,3);
\draw (0,4)--(4,0);
\end{tikzpicture}
 &
\parbox{13mm}{\lvec{24}31 \\ \begin{tikzpicture}[scale=30/100]
\fill[fill=p1col2] (0,0) rectangle (1,0.65);
\fill[fill=p1col4] (1.2,0) rectangle (2.2,1.15);
\fill[fill=p1col3] (2.4000000000000004,0) rectangle (3.4000000000000004,0.9);
\fill[fill=p1col1] (3.6000000000000005,0) rectangle (4.6000000000000005,0.4);
\end{tikzpicture}
 \\[1.5mm]} & \begin{tikzpicture}[scale=30/100,font=\tiny]
\fill[p1col2] (0,3) rectangle (2,0);
\fill[p1col4] (2,1) rectangle (4,0);
\fill[p1col3] (2,3) rectangle (4,1);
\fill[p1col1] (0,4) rectangle (4,3);
\draw (0,4)--(4,0);
\end{tikzpicture}
 & & \vspace{-4mm} \\
\end{tabular}
\caption{Twisted Baxter permutations ($2\ul{41}3\wedge 3\ul{41}2$-avoiding) for $n=4$ generated by Algorithm~J and resulting Gray code for diagonal rectangulations.}
\label{fig:tbaxter4}
\end{figure}

\subsection{Barred patterns}
\label{sec:barred}

Barred permutation patterns were first considered by West~\cite{MR2716312}.
A \emph{barred} pattern is a pattern~$\tau$ with a number of overlined entries, e.g., $\tau=25\ol{3}41$.
Let $\tau'$ be the permutation obtained by removing the bars in~$\tau$, and let $\tau^-$ be the permutation that is order-isomorphic to the non-barred entries in~$\tau$.
In our example, we have $\tau'=25341$ and $\tau^-=2431$.
A permutation~$\pi$ \emph{contains} a barred pattern~$\tau$ if and only if it contains a match of~$\tau^-$ that cannot be extended to a match of~$\tau'$ by adding entries of~$\pi$ at the positions specified by the barred entries.
For instance, $\pi=\colorbox{black!30}{\!35\!}2\colorbox{black!30}{\!41\!}$ contains $\tau=25\ol{3}41$, as the highlighted entries form a match of $\tau^-=2431$ that cannot be extended to a match of $\tau'=25341$.
We clearly have $S_n(\tau^-)\seq S_n(\tau)$.

The following lemma gives a sufficient condition for a single-barred pattern to be tame.

\begin{lemma}
\label{lem:barred}
If for a single-barred pattern $\tau\in S_k$, $k\geq 4$, the permutation $\tau^-\in S_{k-1}$ does not have the largest value~$k-1$ at the leftmost or rightmost position, and the barred entry in~$\tau$ is smaller than~$k$ or at a position next to the entry~$k-1$, then $\tau$ is tame.
\end{lemma}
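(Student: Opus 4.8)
The plan is to verify the two defining features of tameness directly, and to combine them with the trivial base case so that, by unrolling the recursive definition of a zigzag language, each $S_n(\tau)$ becomes a zigzag language and the sequence $S_n(\tau)$, $n\ge 0$, becomes hereditary. Concretely, I would show (z) that for every $\sigma\in S_{n-1}(\tau)$ both $c_1(\sigma)$ and $c_n(\sigma)$ lie in $S_n(\tau)$, and the hereditary identity $p(S_n(\tau))=S_{n-1}(\tau)$. Throughout I write $b$ for the value of the single barred entry of $\tau$, so that the hypothesis splits into the case $b<k$ and the case $b=k$ with the entries $k$ and $k-1$ adjacent in $\tau$.

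First I would establish property~(z), using only the hypothesis that $\tau^-\in S_{k-1}$ does not carry its largest value $k-1$ at either end, exactly as in Lemma~\ref{lem:class}. Since the inserted value $n$ is the global maximum of $c_1(\sigma)$ and occupies the leftmost position, any match of $\tau^-$ that used $n$ would force the largest value $k-1$ of $\tau^-$ into the leftmost position of the pattern, which is excluded; hence every match of $\tau^-$ in $c_1(\sigma)$ already lives in $\sigma$ and therefore extends to $\tau'$ inside $\sigma\seq c_1(\sigma)$, so $c_1(\sigma)\in S_n(\tau)$. The symmetric argument for the rightmost position handles $c_n(\sigma)$. Property~(z) immediately yields one inclusion of the hereditary identity, namely $S_{n-1}(\tau)\seq p(S_n(\tau))$, because $\sigma=p(c_n(\sigma))$.

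The substance of the proof is the reverse inclusion $p(S_n(\tau))\seq S_{n-1}(\tau)$. I would take $\pi\in S_n(\tau)$, set $\sigma:=p(\pi)$, and suppose for contradiction that $\sigma$ contains $\tau$, so there is a match $M$ of $\tau^-$ in $\sigma$ that cannot be extended to $\tau'$ within $\sigma$. As $M$ is also a match of $\tau^-$ in $\pi$ and $\pi$ avoids $\tau$, the match $M$ does extend to a match of $\tau'$ in $\pi$; the single extending (barred) entry cannot lie in $\sigma$, as otherwise the same extension would already work in $\sigma$, so it must be the only value of $\pi$ absent from $\sigma$, namely $n$. If $b<k$ this is already impossible: the barred entry is not the maximum of $\tau'$, whereas $n$ is the global maximum of $\pi$ and would have to play the maximal role, contradiction. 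This settles the case $b<k$.

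The hard part, which I expect to be the main obstacle, is the case $b=k$, where $n$ genuinely can play the barred entry and I must exploit the adjacency of $k$ and $k-1$. Let $m^\ast$ be the entry of $M$ playing the value $k-1$ (the maximum of $M$). In the match of $\tau'$ in $\pi$, the entry $n$ (playing $k$) is adjacent in the subsequence to $m^\ast$, since $k$ and $k-1$ occupy adjacent positions of $\tau'$. The key observation is that deleting either of the two consecutive top values from the block $k\,(k-1)$ (or $(k-1)\,k$) of $\tau'$ yields the \emph{same} pattern $\tau^-$, because the two non-deleted blocks of $\tau'$ are unaffected and a single maximal value is left at the same relative slot. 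Hence, starting from the match of $\tau'$ in $\pi$ and deleting $m^\ast$ instead of $n$, I obtain a new match of $\tau^-$ in $\pi$ whose largest entry is now $n$. Since nothing in $\pi$ exceeds $n$, this new match cannot be extended to $\tau'$, so $\pi$ contains $\tau$, contradicting $\pi\in S_n(\tau)$. This contradiction completes the reverse inclusion and, together with property~(z) and the base case, shows that $\tau$ is tame.
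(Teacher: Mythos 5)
Your proof is correct, but it takes a genuinely different route from the paper. The paper disposes of this lemma in two lines: a single-barred pattern with barred entry $b$ at position $a$ is the mesh pattern $(\tau^-,\{(a-1,b-1)\})$ with a single shaded cell, so conditions~(iii) and~(iv) of Theorem~\ref{thm:mesh} hold vacuously and conditions~(i) and~(ii) are exactly the hypotheses of the lemma; all the work is thus delegated to the general mesh-pattern theorem. You instead give a self-contained argument in the language of barred patterns. Your closure step~(z) is the same as for classical patterns, and your heredity step correctly isolates the only dangerous situation (the extending entry must be $n$, which forces $b=k$) and resolves it by trading the match's top entry $m^*$ for $n$, using the adjacency of $k$ and $k-1$ in $\tau'$ to see that the two deletions yield order-isomorphic copies of $\tau^-$. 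It is worth noting that this exchange is precisely the specialization, to a single shaded cell in the top row, of the exchange of the points $P$ and $Q$ in the proof of Theorem~\ref{thm:mesh}; so the two proofs share their combinatorial core, but yours avoids the mesh-pattern formalism entirely (at the cost of not yielding the other corollaries, Lemmas~\ref{lem:class}--\ref{lem:bivinc}, for free), while the paper's reduction buys uniformity across all those pattern types at the cost of first establishing the more technical general theorem.
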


We prove Lemma~\ref{lem:barred} in Section~\ref{sec:mesh-proofs}.
See Table~\ref{tab:tame1} for several examples of tame single-barred patterns that were studied by Pudwell~\cite{MR2595489}.
As we will show in Section~\ref{sec:mbarred} below, in many cases patterns with multiple bars can be reduced to single-barred patterns.

\subsection{Boxed patterns}

\emph{Boxed patterns} were introduced in the paper by Avgustinovich, Kitaev, and Valyuzhenich~\cite{MR2973348}.
A permutation $\pi$ \emph{contains} the boxed pattern~$\boxp{\tau}$ if and only if it contains a match of~$\tau$ such that no entry of~$\pi$ at a position between the matched ones has a value between the smallest and largest value of the match.
For example, the permutation $\pi=\colorbox{black!30}{\!43\!}1\colorbox{black!30}{\!7\!}928\colorbox{black!30}{\!6\!}5$ contains the boxed pattern~$\boxp{2143}$, as the highlighted entries of~$\pi$ form a match of~$2143$, and the entries~$1,9,2,8$ are either smaller than~3 or larger than~7.
On the other hand, the permutation $\pi'=\colorbox{black!30}{\!3\!}5\colorbox{black!30}{\!16\!}2\colorbox{black!30}{\!4\!}$ avoids~$\boxp{2143}$, as the only possible match of~$2143$ is at the highlighted positions in~$\pi'$, but the entries~5 and~2 are both between~1 and~6. 

\begin{lemma}
\label{lem:boxed}
Given a boxed pattern~$\boxp{\tau}$ with $\tau\in S_k$ and $k\geq 3$, if $\tau$ does not have the largest value~$k$ at the leftmost or rightmost position, then it is tame.
\end{lemma}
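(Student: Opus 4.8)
The plan is to verify directly, from the recursive definition of zigzag languages in Section~\ref{sec:zigzag}, that the sequence $(S_n(\boxp{\tau}))_{n\geq 0}$ is hereditary and that each term is a zigzag language. By the definition of \emph{tame} it suffices to establish, for every $n\geq 1$, two facts: (A) the hereditary identity $p(S_n(\boxp{\tau}))=S_{n-1}(\boxp{\tau})$; and (B) the closure condition~(z), namely $c_1(\sigma),c_n(\sigma)\in S_n(\boxp{\tau})$ whenever $\sigma\in S_{n-1}(\boxp{\tau})$. Together with the base case $S_0(\boxp{\tau})=\{\varepsilon\}$, an induction on~$n$ then shows each $S_n(\boxp{\tau})$ is a zigzag language, exactly as in the analogous proof of Lemma~\ref{lem:class}. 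Note also that~(B) supplies the inclusion $S_{n-1}(\boxp{\tau})\seq p(S_n(\boxp{\tau}))$ in~(A) via $\sigma=p(c_n(\sigma))$, so only the reverse inclusion of~(A) needs a separate argument.

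The central observation, which I would isolate first, is that the global maximum can never obstruct a box. Concretely: if a permutation~$\rho$ contains a boxed match of~$\tau$ all of whose entries have value less than~$n$, then inserting (or retaining) an entry of value~$n$ at any position does not spoil this match, since $n$ exceeds the match's maximum and hence never lies ``between'' the smallest and largest value of the match, so it can never be an obstructing entry inside the box. Applying this gives the remaining inclusion $p(S_n(\boxp{\tau}))\seq S_{n-1}(\boxp{\tau})$ of~(A): if $p(\pi)$ contained a boxed match of~$\tau$, then reinserting the value~$n$ at its original position in~$\pi$ would preserve both the order-isomorphism to~$\tau$ and the boxed condition, so $\pi$ would contain $\boxp{\tau}$; contrapositively, $\pi\in S_n(\boxp{\tau})$ forces $p(\pi)\in S_{n-1}(\boxp{\tau})$.

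For~(B), let $\sigma\in S_{n-1}(\boxp{\tau})$ and suppose for contradiction that $\pi:=c_n(\sigma)$ contains a boxed match of~$\tau$, where the inserted value~$n$ sits at the rightmost position of~$\pi$. If the match does not use~$n$, then all matched positions lie in~$\sigma$ and $n$ lies to their right, so $n$ is never between two matched positions; the boxed condition therefore already holds in~$\sigma$, contradicting $\sigma\in S_{n-1}(\boxp{\tau})$. If the match does use~$n$, then, since $n$ is the largest value of~$\pi$ and occupies its last position, it must play the role of the largest value~$k$ of~$\tau$ and be the last entry of the match; hence $k$ occupies the rightmost position of~$\tau$, contradicting the hypothesis. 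The argument for $c_1(\sigma)$ is identical with ``rightmost'' replaced by ``leftmost'', and uses that $\tau$ does not have~$k$ at the leftmost position. This proves~(z) and completes~(A).

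The proof is genuinely short; the only place where care is required is the boxed-pattern-specific bookkeeping in the central observation, namely checking that the extremal value~$n$ can never serve as an obstruction inside a box (because its value is too large) and can participate in a match only from an end position (because of where it is inserted). This is precisely where the hypothesis that $\tau$ does not have~$k$ at the leftmost or rightmost position is used, and where the argument departs from the purely positional reasoning that already suffices for classical patterns in Lemma~\ref{lem:class}.
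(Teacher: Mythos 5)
Your proof is correct, but it takes a different route from the paper. The paper disposes of Lemma~\ref{lem:boxed} in two sentences by encoding the boxed pattern $\boxp{\tau}$ as the mesh pattern $(\tau,C)$ with $C=\{(i,j)\mid 1\leq i,j\leq k-1\}$ and invoking Theorem~\ref{thm:mesh}: since no cell in the topmost row is shaded, conditions~(ii)--(iv) hold vacuously, and condition~(i) is exactly the hypothesis on the position of~$k$. You instead verify the hereditary identity and the closure condition~(z) directly by induction, which is a self-contained and more elementary argument; your ``central observation'' that the value~$n$ can never obstruct a box is precisely the combinatorial content of the statement that the corresponding mesh pattern has no shaded cells in the top row, and your case analysis for $c_1(\sigma)$ and $c_n(\sigma)$ mirrors the first half of the induction step in the proof of Theorem~\ref{thm:mesh}. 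What the paper's reduction buys is uniformity — one theorem covers Lemmas~\ref{lem:class}--\ref{lem:bivinc} simultaneously — whereas your argument buys transparency for this specific case and avoids the exchange argument of Theorem~\ref{thm:mesh} entirely (which is never needed here, since the obstruction case cannot arise). All the individual steps check out: the reinsertion argument for $p(S_n(\boxp{\tau}))\seq S_{n-1}(\boxp{\tau})$ correctly notes that $n$ exceeds the maximum of any match living in $p(\pi)$, and the two cases for a hypothetical boxed match in $c_n(\sigma)$ (match avoids~$n$, hence descends to~$\sigma$; match uses~$n$, hence forces~$k$ to the rightmost position of~$\tau$) are exhaustive and each yields the required contradiction.
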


We prove Lemma~\ref{lem:boxed} in Section~\ref{sec:mesh-proofs}.
Table~\ref{tab:tame1} shows two tame boxed patterns studied in~\cite{MR2973348}.

\subsection{Patterns with Bruhat restrictions}

\emph{Patterns with Bruhat restrictions} were introduced by Woo and Yong~\cite{MR2264071}.
Such a pattern is a pair $(\tau,B)$, where $\tau\in S_k$ and $B\seq [k]^2$ is a set of pairs of indices $(a,b)$ with $a<b$ and $\tau(a)<\tau(b)$ such that for all $i\in\{a+1,\ldots,b-1\}$ we either have $\tau(i)<\tau(a)$ or $\tau(i)>\tau(b)$.
A permutation~$\pi$ \emph{contains} this pattern if and only if it contains a match of~$\tau$, and for any pair of entries~$\pi(i_a)$ and~$\pi(i_b)$ that are matched by a corresponding pair of entries~$\tau(a)$ and~$\tau(b)$ with $(a,b)\in B$, we have that $\pi(i)<\pi(i_a)$ or $\pi(i)>\pi(i_b)$ for all $i\in\{i_a+1,\ldots,i_b-1\}$.

\begin{lemma}
\label{lem:bruhat}
Given a pattern with Bruhat restrictions $(\tau,B)$ with $\tau\in S_k$ and $k\geq 3$, if $\tau$ does not have the largest value~$k$ at the leftmost or rightmost position, then it is tame.
\end{lemma}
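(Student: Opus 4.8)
The plan is to follow the template already used for the classical case in Lemma~\ref{lem:class}, verifying the two defining properties of a tame pattern by induction on $n$. Write $L_n := S_n((\tau,B))$. The base case $n=0$ is immediate, since $L_0 = \{\varepsilon\}$ is a zigzag language. For the inductive step it suffices to establish two facts: (i) [hereditary] $p(L_n) = L_{n-1}$, and (ii) [condition~(z)] for every $\sigma \in L_{n-1}$ both $c_1(\sigma)$ and $c_n(\sigma)$ lie in $L_n$. Indeed, (i) identifies $p(L_n)$ with $L_{n-1}$, which is a zigzag language by the induction hypothesis, and together with (ii) this is exactly the recursive definition of a zigzag language; since (i) holds for every $n$, the sequence is hereditary and $(\tau,B)$ is tame in the sense of the definitions preceding Theorem~\ref{thm:tame}. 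Note moreover that the inclusion $L_{n-1} \subseteq p(L_n)$ of (i) follows from (ii), because $\sigma = p(c_1(\sigma))$ and $c_1(\sigma) \in L_n$; so the real work is (ii) and the reverse inclusion $p(L_n) \subseteq L_{n-1}$.

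For (ii), suppose $\sigma \in S_{n-1}$ avoids $(\tau,B)$ and consider $\pi := c_1(\sigma) = n\,\sigma$. Assume for contradiction that $\pi$ contains $(\tau,B)$, and fix a match $M = (i_1 < \cdots < i_k)$. A match of $(\tau,B)$ is in particular order-isomorphic to $\tau$ and satisfies the Bruhat gap condition for every pair in $B$. If $M$ does not use position~$1$, then its positions all lie in $\{2,\dots,n\}$, and both the matched entries and the entries filling the gaps between matched positions coincide with those of $\sigma$ (position~$1$ is leftmost, hence never strictly between two matched positions); thus $M$ would already be a match of $(\tau,B)$ in $\sigma$, a contradiction. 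Hence $i_1 = 1$, so $n = \pi(i_1)$ plays the role of $\tau(1)$ in $M$; being the globally largest entry, this forces $\tau(1) = k$, contradicting the hypothesis that $k$ is not at the leftmost position of $\tau$. The symmetric argument applied to $\pi := c_n(\sigma) = \sigma\,n$ forces a match to use the rightmost position and hence $\tau(k) = k$, which is likewise excluded. Thus both $c_1(\sigma)$ and $c_n(\sigma)$ avoid $(\tau,B)$.

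For the remaining inclusion $p(L_n) \subseteq L_{n-1}$ of (i), take $\pi \in L_n$ and suppose, for contradiction, that $\sigma := p(\pi)$ had a match $M$ of $(\tau,B)$; I claim the same $k$ positions form a match of $(\tau,B)$ in $\pi$. Order-isomorphism is clear, since reinserting the largest value $n$ (which is not among the matched entries) does not disturb the relative order of the matched entries. The one point requiring care — and the main obstacle relative to Lemma~\ref{lem:class} — is the Bruhat condition: for a pair $(a,b) \in B$, the gap between the matched positions $i_a, i_b$ may, in $\pi$, contain the reinserted value $n$ in addition to the entries already present in $\sigma$. For the old entries the condition is inherited from the match in $\sigma$, and for the value $n$ we have $n > \pi(i_b)$ automatically, so the ``above $\tau(b)$'' alternative of the gap condition holds. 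Hence $M$ is a match of $(\tau,B)$ in $\pi$, contradicting $\pi \in L_n$. This establishes (i), completing the induction and the proof.
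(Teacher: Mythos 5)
Your proof is correct, but it takes a genuinely different route from the paper. The paper proves Lemma~\ref{lem:bruhat} in two lines by encoding the Bruhat-restricted pattern $(\tau,B)$ as the mesh pattern $(\tau,C)$ with $C=\bigcup_{(a,b)\in B}\{(i,j)\mid a\leq i<b \wedge \tau(a)\leq j<\tau(b)\}$, observing that no shaded cell lies in the top row (so conditions~(ii)--(iv) of Theorem~\ref{thm:mesh} hold vacuously) and that condition~(i) is exactly the hypothesis on the position of~$k$. You instead verify the hereditary property and the closure condition~(z) directly by induction, which is essentially the argument inside the proof of Theorem~\ref{thm:mesh} specialized to this case. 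Both arguments are sound; the two key observations you need --- that a match of $(\tau,B)$ in $c_1(\sigma)$ or $c_n(\sigma)$ using the new boundary point would force $\tau$ to have its largest value at the boundary, and that the reinserted global maximum~$n$ automatically satisfies the ``above $\pi(i_b)$'' alternative of the Bruhat gap condition and so never destroys a match when passing from $p(\pi)$ to $\pi$ --- are exactly the content that makes the mesh-pattern conditions trivial in the paper's reduction. What the paper's route buys is brevity and uniformity (the same reduction handles barred, boxed, vincular and bivincular patterns, and exhibits Lemma~\ref{lem:class} as the case $B=\emptyset$); what your route buys is a self-contained and more transparent proof that does not require setting up the grid-representation machinery of Section~\ref{sec:mesh}. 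One cosmetic remark: in your last paragraph ``the same $k$ positions'' should read ``the same $k$ entries,'' since reinserting~$n$ shifts positions by one; this does not affect the argument.
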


We prove Lemma~\ref{lem:bruhat} in Section~\ref{sec:mesh-proofs}.
Note that Lemma~\ref{lem:bruhat} does not impose any restrictions on the set~$B$, and that it hence generalizes Lemma~\ref{lem:class}, which corresponds to the case $B=\emptyset$.
Table~\ref{tab:tame1} shows two patterns with Bruhat restriction studied in~\cite{MR2264071}.

\subsection{Bivincular patterns}

\emph{Bivincular patterns} were introduced by Bousquet-M\'{e}lou, Claesson, Dukes, and Kitaev~\cite{MR2652101}.
Such a pattern is a pair $(\tau,B)$, where $\tau\in S_k$ is a vincular pattern and $B\seq [k-1]$.
A permutation~$\pi$ \emph{contains} this pattern if and only if it contains a match of the vincular pattern~$\tau$ (respecting the adjacency condition for the vincular pair), and in this match, the entries at positions $\tau^{-1}(i)$ and $\tau^{-1}(i+1)$ are consecutive values in~$\pi$ for all $i\in B$.

\begin{lemma}
\label{lem:bivinc}
Given a bivincular pattern $(\tau,B)$ with $\tau\in S_k$ and $k\geq 3$, if the vincular pattern~$\tau$ satisfies the conditions in Lemma~\ref{lem:vinc} and if $k-1\notin B$, then it is tame.
\end{lemma}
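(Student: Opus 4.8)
The plan is to mirror the proof of Lemma~\ref{lem:vinc}, carrying along the extra value-adjacency constraints imposed by~$B$. Let $L_n$ denote the set of permutations in~$S_n$ that avoid the bivincular pattern~$(\tau,B)$, and recall that any match of~$(\tau,B)$ is in particular a match of the underlying vincular pattern~$\tau$, so a permutation avoiding the vincular pattern~$\tau$ automatically avoids $(\tau,B)$. To show $(\tau,B)$ is tame I must prove, by induction on~$n$, two things: the zigzag closure condition~(z), namely $c_1(\sigma),c_n(\sigma)\in L_n$ for every $\sigma\in L_{n-1}$; and the hereditary identity $p(L_n)=L_{n-1}$. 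The base cases $n<k$ are immediate, since then $L_n=S_n$.

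For condition~(z) I would argue exactly as in the vincular case. Inserting~$n$ at the leftmost or rightmost position of~$\sigma$ preserves the relative positions (hence all positional adjacencies) and the values (hence all value-adjacencies) of the original entries. Therefore any match of~$(\tau,B)$ in $c_1(\sigma)$ or $c_n(\sigma)$ that does not use the entry~$n$ would already be a match in~$\sigma$, contradicting $\sigma\in L_{n-1}$; and any match that does use~$n$ must assign~$n$ the role of the largest pattern value~$k$, since~$n$ is the global maximum. But~$n$ sits at a boundary position, whereas~$k$ occupies neither the leftmost nor the rightmost position of~$\tau$ (the first condition of Lemma~\ref{lem:vinc}), a contradiction. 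Hence $c_1(\sigma),c_n(\sigma)\in L_n$, which simultaneously yields~(z) and the inclusion $L_{n-1}\seq p(L_n)$.

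The reverse inclusion $p(L_n)\seq L_{n-1}$ is the crux, and the only place where the hypothesis $k-1\notin B$ enters. Suppose $\pi\in L_n$ yet $p(\pi)$ contains a bivincular match of~$(\tau,B)$. Deleting the maximum value~$n$ leaves the relative order of all remaining values unchanged, so every value-adjacency of the match survives in~$\pi$; every positional adjacency survives as well, except possibly the one inside the vincular pair, which is destroyed precisely when~$n$ was inserted between the two vincular entries of~$\pi$. In that one remaining case I invoke that~$k$ belongs to the vincular pair (the second condition of Lemma~\ref{lem:vinc}): the entry matched to~$k$ is one of the two vincular entries, $n$~lies adjacent to its partner in~$\pi$, and since~$n$ exceeds every matched value I may recast the match in~$\pi$ by promoting~$n$ to the role of~$k$. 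This substitution preserves all relative orders and all positional adjacencies, so it gives a vincular match of~$\tau$ in~$\pi$; it then remains to check the constraints indexed by~$B$. The only constraint that could involve the value~$k$ is the one indexed by $i=k-1$, forcing the entries matched to~$k-1$ and~$k$ to be consecutive values, and $k-1\notin B$ excludes it; every other constraint concerns a pair of matched entries both distinct from the promoted~$k$-entry, and their values are untouched by the substitution. Hence the recast match is a genuine bivincular match of~$(\tau,B)$ in~$\pi$, contradicting $\pi\in L_n$.

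Combining the two inclusions gives $p(L_n)=L_{n-1}$, and together with condition~(z) and the inductive hypothesis that $L_{n-1}$ is a zigzag language, this shows that $L_n$ is a zigzag language and that the whole sequence is hereditary; that is, $(\tau,B)$ is tame. The main obstacle is the reverse hereditary inclusion: one must verify that restoring the broken vincular adjacency by promoting~$n$ to the role of~$k$ creates no new violation of a constraint in~$B$, and it is exactly the hypothesis $k-1\notin B$ that guarantees the promoted entry carries no attached value-adjacency constraint.
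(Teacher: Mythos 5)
Your proof is correct, but it takes a genuinely different route from the paper's. The paper proves Lemma~\ref{lem:bivinc} as a two-line corollary of Theorem~\ref{thm:mesh}: a bivincular pattern $(\tau,B)$ with vincular pair at positions $a,a+1$ is encoded as the mesh pattern $(\tau,C)$ with $C=\{a\}\times\{0,\ldots,k\}$ (the shaded column enforcing positional adjacency) together with the rows $\{0,\ldots,k\}\times\{b\}$ for $b\in B$ (enforcing value adjacency); the hypotheses of Lemma~\ref{lem:vinc} give conditions~(i) and~(ii) of the theorem, and $k-1\notin B$ is precisely what keeps a shaded row out of row~$k-1$, which is what conditions~(iii) and~(iv) forbid. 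You instead re-run the inductive argument from scratch: closure under $c_1,c_n$ because the new boundary maximum can only play the role of~$k$, and the hereditary inclusion $p(L_n)\seq L_{n-1}$ via the exchange that promotes~$n$ into the match when it sits between the two vincular entries. That exchange is exactly the point-swap at the heart of the proof of Theorem~\ref{thm:mesh}, specialized to this case, and your verification of it is complete — in particular you correctly observe that reinstating the largest value never breaks a value-adjacency constraint, and that the only constraint the promoted entry could carry is the one indexed by $k-1$, which is excluded. What each approach buys: the paper's reduction is shorter and exhibits the lemma as a special case of a uniform criterion, while your direct argument is self-contained and makes the role of the hypothesis $k-1\notin B$ more transparent, at the cost of duplicating work the paper has already packaged into Theorem~\ref{thm:mesh}.
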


We prove Lemma~\ref{lem:bivinc} in Section~\ref{sec:mesh-proofs}.
Note that Lemma~\ref{lem:bivinc} generalizes Lemma~\ref{lem:vinc}, which corresponds to the case $B=\emptyset$.
In Table~\ref{tab:tame1}, $(\mathbf{2}+\mathbf{2})$-free posets are mentioned as an example of a combinatorial class that is in bijection to permutations avoiding a tame bivincular pattern.

\subsection{Mesh patterns}
\label{sec:mesh}

In the following we take a geometric viewpoint on permutations.
For any pair of real numbers $P=(a,b)$, we define $P_x:=a$ and $P_y:=b$.
Moreover, for any mapping $\alpha:[n]\rightarrow B$ and any subset $I\seq [n]$ with $|I|=k$ we write $\alpha|_I:[k]\rightarrow B$ for the function defined by $\alpha|_I(i):=\alpha(j)$, where $j$ is the $i$th smallest element in~$I$, for all $i\in[k]$.
\begin{wrapfigure}{r}{0.26\textwidth}
\flushright
\begin{tikzpicture}[scale=0.4]
\mesh{5}{1,4,3,5,2}
\node at (-0.6,0.3) {\small $c_\alpha(0,0)$};
\node at (6.7,0.3) {\small $c_\alpha(5,0)$};
\node at (-0.6,5.6) {\small $c_\alpha(0,5)$};
\node at (6.7,5.6) {\small $c_\alpha(5,5)$};
\node at (1,1.6) {\tiny $\alpha(1)$};
\node at (2,4.6) {\tiny $\alpha(2)$};
\node at (3,3.6) {\tiny $\alpha(3)$};
\node at (4,5.6) {\tiny $\alpha(4)$};
\node at (5,2.6) {\tiny $\alpha(5)$};
\end{tikzpicture}
\end{wrapfigure}
The \emph{grid representation} of a permutation $\pi\in S_n$ is a mapping $\alpha:[n]\rightarrow \mathbb{Z}^2$ such that the sequence $\alpha(1)_x,\ldots,\alpha(n)_x$ is strictly increasing, and the sequence $\alpha(1)_y,\ldots,\alpha(n)_y$ is order-isomorphic to~$\pi$.
This representation is unique up to shifts that preserve the relative order of the values $\alpha(1)_y,\ldots,\alpha(n)_y$.
We can think of the grid representation of~$\pi$ as a graphical representation of the permutation matrix.
For instance, the permutation $\pi=14352$ has the grid representation shown on the right.
A \emph{cell} in the grid representation~$\alpha$ is a connected region from $\mathbb{R}^2\setminus \big(\alpha([n])_x\times \mathbb{R} \;\cup\; \mathbb{R}\times \alpha([n])_y\big)$, and we denote these cells by $c_\alpha(i,j)$, $i,j\in \{0,\ldots,n\}$, where the first index~$i$ increases with~$x$, and the second index~$j$ increases with~$y$, as shown in the figure on the right.
By definition, every cell is a Cartesian product of two open intervals.
For instance, we have $c_\alpha(1,0)=\left]\alpha(1)_x,\alpha(2)_x\right[\times\left]-\infty,\alpha(1)_y\right[$.
If the grid representation~$\alpha$ is clear from the context, we sometimes refer to a cell~$c_\alpha(i,j)$ simply by its index~$(i,j)$.

\begin{wrapfigure}{r}{0.2\textwidth}
\vspace{-7mm}
\flushright
\begin{tikzpicture}[scale=0.4]
\path [fill=gray!60] (0,1) rectangle (1,2);
\path [fill=gray!60] (4,3) rectangle (5,4);
\mesh{5}{1,4,3,5,2}
\node[anchor=west] at (0,-0.6) {$\sigma=(\tau,C)$};
\node[anchor=west] at (0,-1.68) {$\tau=14352$};
\node[anchor=west] at (0,-2.9) {$C=\{(0,1),(4,3)\}$};
\end{tikzpicture}
\vspace{-13mm}
\end{wrapfigure}
Mesh patterns were introduced by Br\"{a}nd\'{e}n and Claesson~\cite{MR2795782}, and they generalize all the aforementioned types of patterns.

A \emph{mesh pattern} is a pair $\sigma=(\tau,C)$, $\tau\in S_k$, with $C\seq \{0,\ldots,k\}^2$.
Each pair~$(a,b)\in C$ encodes the cell with index~$(a,b)$ in the grid representation of~$\tau$, and in our figures we visualize~$\sigma$ by drawing the grid representation of~$\tau$ and by shading exactly the cells indexed by~$C$.
For instance, the mesh pattern $\sigma=(\tau,C)=(14352,\{(0,1),(4,3)\})$ has the graphical representation shown on the right.

A permutation~$\pi\in S_n$ \emph{contains} the mesh pattern~$\sigma=(\tau,C)$, $\tau\in S_k$, if the grid representation~$\alpha$ of~$\pi$ admits a subset~$I\seq [n]$, $|I|=k$, such that $\beta:=\alpha|_I$ is the grid representation of~$\tau$, and the cell $c_\beta(i,j)$ contains no points from $\alpha([n])$ for all $(i,j)\in C$.
In this case, we refer to $\beta$ as a \emph{match} of the mesh pattern~$\sigma$ in the grid representation~$\alpha$ of~$\pi$.
Note that the first condition in the definition of mesh pattern containment is equivalent to requiring that the subpermutation of~$\pi$ on the indices in~$I$ is order-isomorphic to~$\tau$, while the second condition requires that the cells of~$C$ in the grid representation of~$\tau$ in~$\pi$ contain no points from~$\pi$.

\begin{wrapfigure}{r}{0.38\textwidth}
\vspace{-7mm}
\flushright
\begin{tikzpicture}[scale=0.4]
\mesh{5}{1,4,3,5,2}
\node[anchor=west] at (0.5,-0.6) {$\pi=14352$};
\end{tikzpicture}
\hspace{2mm}
\begin{tikzpicture}[scale=0.4]
\path [fill=gray!60] (4,3) rectangle (6,5);
\mesh{6}{1,5,3,6,4,2}
\node[anchor=west] at (0.5,-0.6) {$\pi'=153642$};
\node[marked mesh node] (1) at (1,1) {};
\node[marked mesh node] (1) at (2,5) {};
\node[marked mesh node] (1) at (3,3) {};
\node[marked mesh node] (1) at (4,6) {};
\node[marked mesh node] (1) at (6,2) {};
\end{tikzpicture}
\vspace{-5mm}
\end{wrapfigure}
For example, consider the grid representation~$\alpha$ of each of the two permutations~$\pi=14352$ and~$\pi'=153642$ shown on the right.
While $\pi$ clearly contains the mesh pattern~$\sigma$ from before, the permutation $\pi'$ avoids it, as the only choice for $I\seq[5]$ such that $\beta:=\alpha|_I$ is the grid representation of~$\tau$ is $I=\{1,2,3,4,6\}$ (marked points in the figure), but then the cell $c_\beta(4,3)$ (shaded gray) contains the point~$\alpha(5)$ (non-marked).

The following main theorem of this section implies all the lemmas about classical, vincular, barred patterns, etc.\ stated in the previous sections.

\begin{theorem}
\label{thm:mesh}
Let $\sigma=(\tau,C)$, $\tau\in S_k$, $k\geq 3$, be a mesh pattern, and let $i$ be the position of the largest value~$k$ in $\tau$.
If the pattern satisfies each of the following four conditions, then it is tame:
\begin{enumerate}[label=(\roman*),leftmargin=8mm, noitemsep, topsep=3pt plus 3pt]
\item $i$ is different from~1 and~$k$.
\item For all $a\in \{0,\ldots,k\}\setminus \{i-1,i\}$, we have $(a,k)\notin C$.
\item If $(i-1,k)\in C$, then for all $a\in \{0,\ldots,k\}\setminus\{i-1\}$ we have $(a,k-1)\notin C$ and for all $b\in \{0,\ldots,k-2\}$ we have that $(i,b)\in C$ implies $(i-1,b)\in C$.
\item If $(i,k)\in C$, then for all $a\in \{0,\ldots,k\}\setminus\{i\}$ we have $(a,k-1)\notin C$ and for all $b\in \{0,\ldots,k-2\}$ we have that $(i-1,b)\in C$ implies $(i,b)\in C$.
\end{enumerate}
\end{theorem}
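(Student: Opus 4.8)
The plan is to prove by induction on $n$ that $S_n(\sigma)$ is a zigzag language and that the sequence $S_0(\sigma),S_1(\sigma),\ldots$ is hereditary, i.e.\ that $p(S_n(\sigma))=S_{n-1}(\sigma)$. Granting the inductive hypothesis that $S_{n-1}(\sigma)$ is a zigzag language, the recursive definition of zigzag languages shows that it suffices to establish two things: (a) the hereditary identity $p(S_n(\sigma))=S_{n-1}(\sigma)$, and (b) the closure property~(z), namely $c_1(\pi),c_n(\pi)\in S_n(\sigma)$ for every $\pi\in S_{n-1}(\sigma)$. Note that (b) already yields the inclusion $S_{n-1}(\sigma)\subseteq p(S_n(\sigma))$, since $p(c_n(\pi))=\pi$, so the genuine content of (a) is the reverse inclusion: deleting the largest entry preserves avoidance of~$\sigma$. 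The base case $n=0$ is trivial, as $S_0(\sigma)=\{\varepsilon\}$.

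I would dispatch~(b) first, using only condition~(i). If $c_1(\pi)$ contained $\sigma$, take any match~$\beta$. If $\beta$ does not use the inserted value~$n$, then its shaded cells, being empty in $c_1(\pi)$, are a fortiori empty in~$\pi$, so $\beta$ is already a match in~$\pi$, contradicting $\pi\in S_{n-1}(\sigma)$. Hence $\beta$ uses~$n$; but $n$ is simultaneously the leftmost and the largest point of $c_1(\pi)$, so within~$\beta$ it must play the role of both the first entry of~$\tau$ and the value~$k$, forcing~$k$ to sit at position~$1$ of~$\tau$, i.e.\ $i=1$, contrary to~(i). The argument for $c_n(\pi)$ is the mirror image and forces $i=k$, again excluded by~(i); note that no shading constraint is needed here, as the order-isomorphism alone produces the contradiction.

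For the hard inclusion in~(a), I would begin with a match~$\beta$ of~$\sigma$ in $p(\pi)$ and try to produce a match in~$\pi$. Since~$n$ is the global maximum, when $\beta$ is viewed inside~$\pi$ the point~$n$ lies in the top row of the grid of~$\beta$, i.e.\ in some cell $c_\beta(a,k)$. If $(a,k)\notin C$, then~$n$ violates no shading constraint and $\beta$ is itself a match in~$\pi$; by~(ii) the only obstructions are $a=i-1$ or $a=i$ with that cell shaded. In either obstructed case I would perform a \emph{swap}: replace the top matched point~$q$ (realizing value~$k$ of~$\tau$) by~$n$. Because~$n$ lies in column $i-1$ or~$i$ and is the global maximum, the new point set is still order-isomorphic to~$\tau$, with~$n$ realizing value~$k$ at position~$i$; call this candidate~$\beta'$. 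It remains to verify that every shaded cell of~$\beta'$ is empty in~$\pi$.

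The main obstacle is precisely this verification, and conditions (iii) and (iv) are tailored to it. Passing from $\beta$ to $\beta'$ moves only the $i$-th vertical grid line (from~$q$'s abscissa to~$n$'s) and the top horizontal line (up to~$n$'s ordinate), so the lower $k-1$ horizontal lines are unchanged. I would then track the finitely many non-matched points whose cell changes. First, the displaced point~$q$ lands in cell $(i,k-1)$ in case~(iii) and in cell $(i-1,k-1)$ in case~(iv), which is excluded from~$C$ by the clause ``$(a,k-1)\notin C$ for all $a\neq i-1$'' (resp.\ $a\neq i$). Second, points in the thin strip between~$q$ and~$n$ merely switch between columns $i-1$ and~$i$ while retaining their row~$b$; for $b=k-1$ the top clause again applies, and for $b\le k-2$ the implication ``$(i-1,b)\in C\Rightarrow(i,b)\in C$'' (resp.\ its mirror) transports emptiness from the old cell to the new one via its contrapositive. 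Third, non-matched points above~$q$ outside the strip drop from row~$k$ to row~$k-1$ and are covered by the top clause, once one observes that any such point in column~$i$ would already occupy the shaded cell filled by~$n$ and is therefore absent from $p(\pi)$. Assembling these cases confirms that $\beta'$ is a genuine match in~$\pi$, yielding the reverse inclusion and completing the induction; the lemmas for classical, vincular, barred, boxed, Bruhat, and bivincular patterns then follow by encoding each as a mesh pattern and checking that (i)--(iv) specialize to their hypotheses.
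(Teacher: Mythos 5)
Your proposal is correct and follows essentially the same route as the paper: induction on $n$, with condition~(i) handling closure under $c_1,c_n$, and the hereditary direction established by swapping the inserted maximum~$n$ into the role of the top point of the match, with conditions~(ii)--(iv) guaranteeing that the finitely many cells affected by moving the $i$-th vertical line and the top horizontal line remain admissible. The only differences are cosmetic (you construct the match in~$\pi$ directly rather than arguing by contradiction, and you track displaced points rather than enlarged cells; in your third case the relevant column is $i-1$ in the case governed by~(iii) and $i$ in the mirror case, but this does not affect the argument).
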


The conditions in Theorem~\ref{thm:mesh} can be understood in the grid representation of~$(\tau,C)$ as follows; see the left hand side of Figure~\ref{fig:mesh}:
Condition~(i) asserts that the highest point of~$\tau$ must not be the leftmost or rightmost point (the two crossed out grid points in the figure are forbidden).
Condition~(ii) asserts that none of the cells in the topmost row (above the points) must be shaded, with the possible exception of the cells next to the highest point (solid crossed out cells in the figure).
Condition~(iii) asserts that if the cell~$(i-1,k)$ to the top left of the highest point is shaded (dark gray cell in the figure), then none of the cells in the row below except possibly~$(i-1,k-1)$ must be shaded (dotted crossed out cells), and if one of the cells strictly below~$(i,k-1)$ is shaded (dark gray questions marks), then the cell to the left of it must also be shaded (indicated by an arrow to the left).
Symmetrically, condition~(iv) asserts that if the cell~$(i,k)$ to the top right of the highest point is shaded (light gray cell in the figure), then none of the cells in the row below except possibly~$(i,k-1)$ must be shaded (dashed crossed out cells), and if one of the cells strictly below~$(i-1,k-1)$ is shaded (light gray questions marks), then the cell to the right of it must also be shaded (indicated by an arrow to the right).

To illustrate the conditions in Theorem~\ref{thm:mesh} further, consider the following mesh patterns~$\sigma_k=(3241,C_k)$, $k=1,\ldots,7$, with different sets~$C_k$:
\vspace{2mm}

\begin{center}
\begin{tikzpicture}[scale=0.4]
\path [fill=gray!60] (1,4) rectangle (2,5);
\mesh{4}{3,2,4,1}
\draw[mesh line] (1.1,4.1)--(1.9,4.9) (1.1,4.9)--(1.9,4.1);
\node[anchor=west] at (1.5,-0.6) {$\sigma_1$};
\end{tikzpicture} \hspace{1mm}
\begin{tikzpicture}[scale=0.4]
\path [fill=gray!60] (0,0) rectangle (1,2);
\path [fill=gray!60] (0,3) rectangle (2,4);
\path [fill=gray!60] (3,3) rectangle (4,4);
\path [fill=gray!60] (2,2) rectangle (3,3);
\path [fill=gray!60] (2,0) rectangle (5,1);
\mesh{4}{3,2,4,1}
\node[anchor=west] at (1.5,-0.6) {$\sigma_2$};
\end{tikzpicture} \hspace{1mm}
\begin{tikzpicture}[scale=0.4]
\path [fill=gray!60] (1,3) rectangle (2,4);
\path [fill=gray!60] (2,4) rectangle (3,5);
\mesh{4}{3,2,4,1}
\draw[mesh line] (1.1,3.1)--(1.9,3.9) (1.1,3.9)--(1.9,3.1);
\node[anchor=west] at (1.5,-0.6) {$\sigma_3$};
\end{tikzpicture} \hspace{1mm}
\begin{tikzpicture}[scale=0.4]
\path [fill=gray!60] (2,3) rectangle (3,5);
\path [fill=gray!60] (2,2) rectangle (4,3);
\path [fill=gray!60] (2,1) rectangle (3,2);
\path [fill=gray!60] (3,0) rectangle (4,1);
\mesh{4}{3,2,4,1}
\draw [->,>=stealth] (3.5,0.5) -- (2.5,0.5);
\node[anchor=west] at (1.5,-0.6) {$\sigma_4$};
\end{tikzpicture} \hspace{1mm}
\begin{tikzpicture}[scale=0.4]
\path [fill=gray!60] (2,3) rectangle (3,5);
\path [fill=gray!60] (2,1) rectangle (3,2);
\path [fill=gray!60] (2,0) rectangle (4,1);
\mesh{4}{3,2,4,1}
\node[anchor=west] at (1.5,-0.6) {$\sigma_5$};
\end{tikzpicture} \hspace{1mm}
\begin{tikzpicture}[scale=0.4]
\path [fill=gray!60] (2,4) rectangle (4,5);
\path [fill=gray!60] (2,1) rectangle (3,2);
\path [fill=gray!60] (2,0) rectangle (4,1);
\mesh{4}{3,2,4,1}
\draw [->,>=stealth] (2.5,1.5) -- (3.5,1.5);
\node[anchor=west] at (1.5,-0.6) {$\sigma_6$};
\end{tikzpicture} \hspace{1mm}
\begin{tikzpicture}[scale=0.4]
\path [fill=gray!60] (2,4) rectangle (4,5);
\path [fill=gray!60] (2,1) rectangle (5,2);
\path [fill=gray!60] (0,0) rectangle (4,1);
\mesh{4}{3,2,4,1}
\node[anchor=west] at (1.5,-0.6) {$\sigma_7$};
\end{tikzpicture}
\end{center}

All of these mesh patterns satisfy condition~(i).
The pattern~$\sigma_1$ violates condition~(ii) due to the cell~$(1,4)\in C_1$ (crossed in the figure), while all other patterns satisfy this condition.
The pattern~$\sigma_2$ satisfies conditions~(iii) and~(iv) trivially, as the premises of each of the two conditions are not satisfied.
The pattern~$\sigma_3$ violates the first part of condition~(iii) due to the cell~$(1,3)\in C_3$ (crossed).
The pattern~$\sigma_4$ satisfies the first part of condition~(iii), but violates the second part due to the cells~$(3,0)\in C_4$ and~$(2,0)\notin C_4$ (connected by an arrow in the figure).
The pattern~$\sigma_5$ satisfies conditions~(iii) and~(iv).
The pattern~$\sigma_6$ violates condition~(iv) due to the cells~$(2,1)\in C_6$ and $(3,1)\notin C_6$ (arrow).
Finally, the pattern~$\sigma_7$ satisfies conditions~(iii) and~(iv).

\begin{figure}
\includegraphics{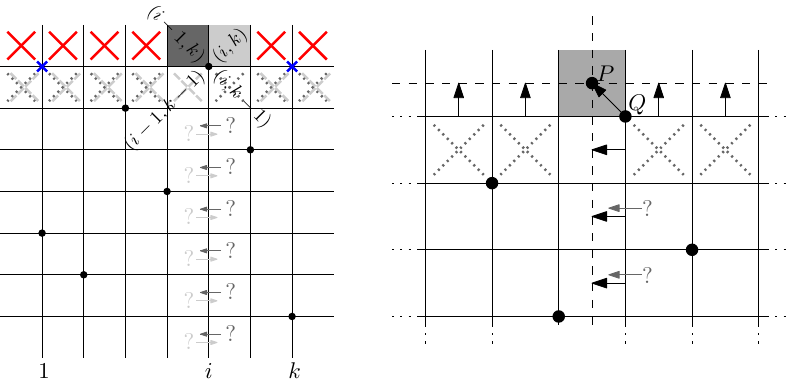}
\caption{Illustration of the four conditions in Theorem~\ref{thm:mesh} (left) and how they are used in the proof of the theorem (right).}
\label{fig:mesh}
\end{figure}

\begin{proof}
We show that if $\sigma=(\tau,C)$ satisfies the four conditions of the theorem, then $S_n(\sigma)$, $n\geq 0$, is a hereditary sequence of zigzag languages.
We argue by induction on~$n$.
Note that $S_0(\sigma)=S_0=\{\varepsilon\}$ is a zigzag language by definition, so the induction basis is clear.
For the induction step let $n\geq 1$.
We first show that if $\pi\in S_{n-1}(\sigma)$, then $c_1(\pi),c_n(\pi)\in S_n(\sigma)$.
As $c_1(\pi)$ and~$c_n(\pi)$ are obtained from~$\pi$ by inserting the new largest value~$n$ at the leftmost or rightmost position, respectively, the grid representation of these two permutations differs from the grid representation of~$\pi$ by adding a new highest point at the leftmost or rightmost position.
However, as $\pi$ avoids~$\sigma$ by assumption, condition~(i) guarantees that both $c_1(\pi)$ and~$c_n(\pi)$ also avoid~$\sigma$, which is what we wanted to show.

To complete the induction step, we now show that if $\pi\in S_n(\sigma)$, then $p(\pi)\in S_{n-1}(\sigma)$.
Recall that $p(\pi)$ is obtained from $\pi$ by removing the largest value~$n$, so in the grid representation, we remove the highest point~$P$.
Our assumption is that $\pi$ avoids the pattern~$\sigma$, and we need to show that removing the highest point does not create a match of the pattern~$\sigma$.
For the sake of contradiction, suppose that removing~$P$ creates a match of the pattern~$\sigma$ in~$p(\pi)$.
Let $Q$ be the highest point in this match of the pattern~$\sigma$ in~$p(\pi)$.
By condition~(ii), we are in exactly one of the following two symmetric cases:
(a) the cell~$(i-1,k)$ is in~$C$ and $P$ lies inside this cell of~$\sigma$ in this match of the pattern;
(b) the cell~$(i,k)$ is in~$C$ and $P$ lies inside this cell of~$\sigma$ in this match of the pattern.
We first consider case~(a), which is illustrated on the right hand side of Figure~\ref{fig:mesh}:
We claim that we can exchange the point~$Q$ for the point~$P$ in the match of the pattern~$\sigma$, and obtain another match of~$\sigma$ in~$\pi$, which would contradict the assumption that $\pi$ avoids~$\sigma$.
Indeed, this exchange operation strictly enlarges only the cells~$(a,k-1)$ for all $a\in\{0,\ldots,k\}\setminus\{i-1\}$ and the cells~$(i,b)$ for all $b\in\{0,\ldots,k-2\}$.
The first set of cells are not in~$C$ by the first part of condition~(iii).
The second set of cells are either not in~$C$, or if they are, then the corresponding cells to the left of it are also in~$C$ by the second part of condition~(iii).
Moreover, after the exchange the cell~$(i,k)$ contains no point from~$\pi$, as $P$ is the highest point (this is of course only relevant if $(i,k)\in C$).
Furthermore, after the exchange the cell~$(i-1,k-1)$ contains at most those points from~$\pi$ that were in the same cell before the exchange (clearly $P$ is the only point inside the cell~$(i-1,k)$).
So we indeed obtain a match of~$\sigma$ in $\pi$, a contradiction.

In the symmetric case~(b), we apply the same exchange argument, using condition~(iv) instead of~(iii).
This completes the proof.
\end{proof}

\subsection{Proof of Lemmas~\ref{lem:class}--\ref{lem:bivinc}}
\label{sec:mesh-proofs}

With Theorem~\ref{thm:mesh} in hand, the proofs of Lemmas~\ref{lem:class}--\ref{lem:bivinc} are straightforward.
As noted before, Lemma~\ref{lem:bruhat} generalizes Lemma~\ref{lem:class}, and Lemma~\ref{lem:bivinc} generalizes Lemma~\ref{lem:vinc}, so we only need to prove Lemmas~\ref{lem:barred}--\ref{lem:bivinc}.

\begin{proof}[Proof of Lemma~\ref{lem:barred}]
Note that a barred pattern~$\tau\in S_k$ with a single barred entry~$b$ at position~$a$ corresponds to the mesh pattern $\sigma=(\tau^-,\{(a-1,b-1)\})$, i.e., in the grid representation of $\sigma$ a single cell is shaded.
It follows that conditions~(iii) and (iv) of Theorem~\ref{thm:mesh} are trivially satisfied, and conditions~(i) and~(ii) translate into the conditions in the lemma.
\end{proof}

\begin{proof}[Proof of Lemma~\ref{lem:boxed}]
A boxed pattern~$\boxp{\tau}$, $\tau\in S_k$, corresponds to the mesh pattern $\sigma=(\tau,C)$ with $C:=\{(i,j)\mid 1\leq i,j\leq k-1\}$, i.e., in the grid representation of~$\sigma$, all cells inside the bounding box of the points from~$\tau$ are shaded.
It follows that conditions~(ii)--(iv) of Theorem~\ref{thm:mesh} are trivially satisfied, and condition~(i) corresponds exactly to the condition in the lemma.
\end{proof}

\begin{proof}[Proof of Lemma~\ref{lem:bruhat}]
A pattern with Bruhat restrictions~$(\tau,B)$ corresponds to the mesh pattern $\sigma=(\tau,C)$ where $C$ is the union of all the sets $R(a,b):=\{(i,j)\mid a\leq i<b\wedge \tau(a)\leq j<\tau(b)\}$ for $(a,b)\in B$, i.e., in the grid representation of~$\sigma$, certain rectangles of cells inside the bounding box of the points from~$\tau$ are shaded.
It follows that conditions~(ii)--(iv) of Theorem~\ref{thm:mesh} are trivially satisfied, and condition~(i) corresponds exactly to the condition in the lemma.
\end{proof}

\begin{proof}[Proof of Lemma~\ref{lem:bivinc}]
A vincular pattern~$\tau\in S_k$ where the entries at positions~$a$ and $a+1$ are underlined corresponds to the mesh pattern $\sigma=(\tau,C)$ with $C:=\{a\}\times\{0,\ldots,k\}$, i.e., in the grid representation of~$\sigma$, an entire column of cells is shaded.
For the bivincular pattern~$(\tau,B)$ we also have to add the sets $\{0,\ldots,k\}\times \{b\}$ for all $b\in B$ to the set of cells~$C$, i.e., in the grid representation we also have to shade the corresponding rows of cells.
By the conditions stated in Lemma~\ref{lem:vinc}, conditions~(i) and~(ii) of Theorem~\ref{thm:mesh} are satisfied.
By the condition~$k-1\notin B$, conditions~(iii) and~(iv) of the theorem are also satisfied, proving that the mesh pattern~$\sigma$ is tame.
\end{proof}

\subsection{Patterns with multiplicities}
\label{sec:mult}

All the aforementioned notions and results in this section generalize straightforwardly to bounding the number of appearances of a pattern.
Formally, a \emph{counted pattern} is a pair $\sigma=(\tau,c)$, where $\tau$ is a mesh pattern, and~$c$ is a non-negative integer.
Moreover, $S_n(\sigma)$ denotes the set of all permutations from~$S_n$ that contain \emph{at most}~$c$ matches of the pattern~$\tau$, where the special case $c=0$ is pattern-avoidance (cf.~\cite{MR1422065}).

By Theorem~\ref{thm:tame}, we can form propositional formulas~$F$ consisting of logical ANDs~$\wedge$, ORs~$\vee$, and tame counted patterns $(\tau_i,c_i)$ as variables, with possibly different counts $c_i$ for each variable.
The tameness of each~$(\tau_i,c_i)$ can be checked by verifying whether the patterns~$\tau_i$ satisfy the conditions stated in Theorem~\ref{thm:mesh} or its corollaries Lemmas~\ref{lem:class}--\ref{lem:bivinc}.
We then obtain a hereditary zigzag language~$S_n(F)$ that can be generated by Algorithm~J.

A somewhat contrived example for such a language would be $F=\big((231,3)\wedge (2143,5)\big)\vee (3\ul{14}2,2)$, the language of permutations that contain at most~3 matches of the pattern~$231$ AND at most~5 matches of the pattern~$2143$, OR at most~2 matches of the vincular pattern~$3\ul{14}2$.

\section{Algebra with patterns}
\label{sec:algebra}

In this section we significantly extend the methods described in the previous section, by applying geometric transformations to permutation patterns, and by describing some other types of patterns as conjunctions and disjunctions of suitable mesh patterns (recall Theorem~\ref{thm:tame}).
Two particularly relevant additional types of permutations covered in this section are monotone and geometric grid classes; see~Theorem~\ref{thm:geo-grid} below.

\subsection{Elementary transformations}
\label{sec:trans}

We now consider three important \emph{elementary transformations} of permutations that are important in the context of pattern-avoidance, as they preserve the cardinality of the set $S_n(F)$.
Each of them corresponds to a geometric transformation of the grid representation of each of the patterns~$\tau=a_1\cdots a_k$ in the formula~$F$, and together these transformations form the dihedral group $D_4$ of symmetries of a regular 4-gon:
\begin{itemize}[itemsep=0ex,parsep=0.5ex,leftmargin=2.5ex]
\item \emph{Reversal}, defined as $\rev(\tau):=a_k\cdots a_1$.
This corresponds to a vertical reflection of the grid representation.
\item \emph{Complementation}, defined as $\cpl(\tau)_i=k+1-a_i$ for all $i=1,\ldots,k$.
This corresponds to a horizontal reflection of the grid representation.
\item \emph{Inversion}, defined by $\inv(\tau)_{\tau(i)}=i$ for all $i=1,\ldots,k$.
This corresponds to a diagonal reflection of the grid representation along the south-west to north-east diagonal.
\end{itemize}
Note that a clockwise 90-degree rotation is obtained as $\rot(\tau):=\inv(\rev(\tau))=\cpl(\inv(\tau))$.
Clearly, all these operations generalize to mesh patterns~$(\tau,C)$, by applying the aforementioned geometric transformations to the grid representation of~$(\tau,C)$.
These operations and their relations are illustrated in Figure~\ref{fig:trans} for $(\tau,C)=(14352,\{(1,0),(1,1),(3,3),(4,3)\})$.

\begin{figure}
\includegraphics{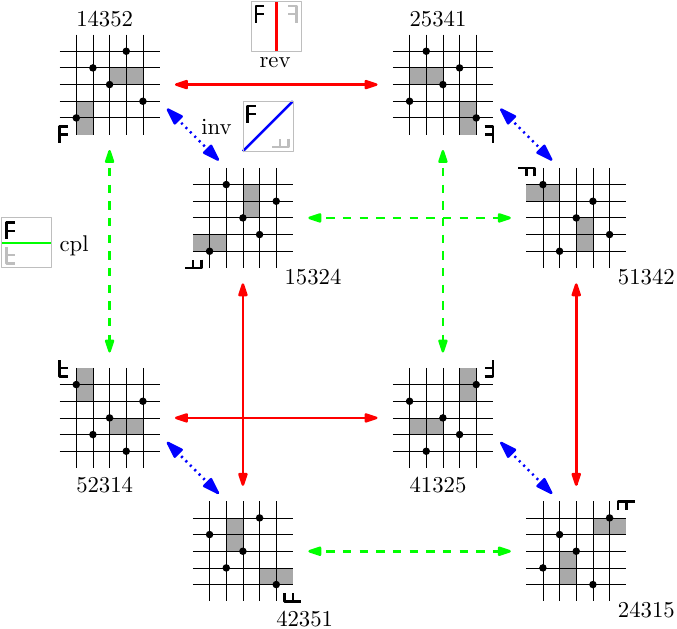}
\caption{Elementary transformations between permutations.}
\label{fig:trans}
\end{figure}

The following lemma is immediate.

\begin{lemma}
\label{lem:trans}
Given any composition~$h$ of the elementary transformations reversal, complementation and inversion, and any propositional formula~$F$ consisting of logical ANDs~$\wedge$, ORs~$\vee$, and mesh patterns $\tau_1,\ldots,\tau_\ell$ as variables, then the sets of pattern-avoiding permutations~$S_n(F)$ and~$S_n(h(F))$ are in bijection under~$h$ for all~$n\geq 1$, where the formula~$h(F)$ is obtained from~$F$ by replacing every pattern~$\tau_i$ by~$h(\tau_i)$ for all $i=1,\ldots,\ell$.
\end{lemma}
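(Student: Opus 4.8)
The plan is to reduce the whole statement to a single equivariance fact about one mesh pattern, and then bootstrap to arbitrary propositional formulas by structural induction. First I would record the trivial but essential point that $\rev$, $\cpl$, and $\inv$ are each bijections of $S_n$ (indeed involutions), so their composition $h$ is again a bijection $S_n\to S_n$. Consequently, proving that $S_n(F)$ and $S_n(h(F))$ are in bijection under~$h$ amounts to proving the set identity $h(S_n(F))=S_n(h(F))$.

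The key step is the base case, namely that for a \emph{single} mesh pattern $\tau=(\tau',C)$ and any $\pi\in S_n$ one has $\pi$ contains $\tau$ if and only if $h(\pi)$ contains $h(\tau)$; equivalently $h(S_n(\tau))=S_n(h(\tau))$. This holds because each generator is defined precisely as a rigid symmetry of the grid representation: $\rev$ is a vertical reflection, $\cpl$ a horizontal reflection, and $\inv$ a diagonal reflection. Such a symmetry carries the grid representation~$\alpha$ of~$\pi$ to the grid representation of~$h(\pi)$, sends any $k$-element subset $I\seq[n]$ to a subset witnessing order-isomorphism with $h(\tau')$ instead of $\tau'$, and permutes the cell indices in the correspondingly predictable way (for instance, on $S_k$ the cell $(a,b)$ is sent to $(k-a,b)$ under $\rev$, to $(a,k-b)$ under $\cpl$, and to $(b,a)$ under $\inv$), which is exactly the rule defining $h(\tau)=(h(\tau'),h(C))$. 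Since a match of a mesh pattern is characterized entirely by the relative positions of the chosen points together with the emptiness of the cells indexed by~$C$—all of which data a rigid symmetry of the plane preserves—applying $h$ carries a match of $\tau$ in $\pi$ to a match of $h(\tau)$ in $h(\pi)$, and applying $h^{-1}$ gives the converse. It suffices to verify this for the three generators, as the property ``$\pi$ contains $\tau$ iff $g(\pi)$ contains $g(\tau)$'' is visibly closed under composition of the~$g$'s.

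Finally I would extend from single patterns to formulas by structural induction on $F$, using~\eqref{eq:zigzag-ops}: since $S_n(F\wedge G)=S_n(F)\cap S_n(G)$ and $S_n(F\vee G)=S_n(F)\cup S_n(G)$, and since an injection commutes with finite intersections and unions, we get $h\big(S_n(F)\cap S_n(G)\big)=h(S_n(F))\cap h(S_n(G))=S_n(h(F))\cap S_n(h(G))=S_n(h(F\wedge G))$, and likewise for $\vee$, where we use that $h$ acts only on the pattern variables so that $h(F\wedge G)=h(F)\wedge h(G)$ and $h(F\vee G)=h(F)\vee h(G)$. This yields $h(S_n(F))=S_n(h(F))$, as required. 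The only place demanding genuine care—and hence the ``main obstacle,'' such as it is—is the bookkeeping in the base case: tracking how each reflection permutes the cell indices of~$C$ and confirming that the empty-cell conditions of a match transform into exactly the empty-cell conditions of the image pattern. This is geometrically transparent (it is why the lemma is ``immediate''), but it is the single point where one must be explicit, and it is also what makes the argument apply uniformly to classical, vincular, barred, boxed, Bruhat-restricted, and bivincular patterns, since all of these were realized as mesh patterns earlier in this section.
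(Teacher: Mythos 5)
Your proposal is correct and follows exactly the route the paper intends: the paper declares the lemma ``immediate'' precisely because each elementary transformation is by definition a rigid symmetry of the grid representation, so it carries matches of~$\tau_i$ in~$\pi$ to matches of~$h(\tau_i)$ in~$h(\pi)$, and the extension to formulas via~\eqref{eq:zigzag-ops} is the same bookkeeping you describe. Your write-up simply makes explicit the cell-index transformations and the induction on the formula that the paper leaves to the reader.
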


Lemma~\ref{lem:trans} is very useful for the purpose of exhaustive generation, because even if $\tau_i$ is not tame, then maybe $h(\tau_i)$ is.
So even if we cannot apply Algorithm~J to generate $S_n(\tau)$ directly, we may be able to generate $S_n(h(\tau_i))$, and then apply $h^{-1}$ to the resulting permutations.
For instance, $\tau=213$ is not tame, as the largest entry appears at the rightmost position.
However, $\cpl(\tau)=231$ is tame by Lemma~\ref{lem:class}, and so we can use Algorithm~J to generate $S_n(\cpl(\tau))$.

As another example, consider so-called \emph{2-stack sortable permutations} introduced by West~\cite{MR2716312} and later counted in~\cite{MR1168135, MR1401000, MR1630680}.
These permutations are characterized by the pattern-avoidance formula $F=\tau_1\wedge \tau_2$ with $\tau_1:=2341$ and $\tau_2:=3\ol{5}241$ ($\tau_2$ is a barred pattern).
Unfortunately, $\tau_2$ is not tame (the barred entry~$\ol{5}$ is not at a position next to the entry~4; recall Lemma~\ref{lem:barred}), so Algorithm~J cannot be used directly for generating~$S_n(F)$.
However, applying rotation, $h(\tau):=\rot(\tau)=\inv(\rev(\tau))$, yields two tame patterns $h(\tau_1)=1432$ and $h(\tau_2)=1352\ol{4}$ and the formula $h(F)=h(\tau_1)\wedge h(\tau_2)$, which can be used for generating $S_n(h(F))$ via Algorithm~J:
\vspace{2mm}

\begin{center}
\begin{tikzpicture}[scale=0.4]
\begin{scope}
\mesh{4}{2,3,4,1}
\node[anchor=west] at (0,-0.6) {$\tau_1=2341$};
\end{scope}
\begin{scope}[shift={(7,0)}]
\path [fill=gray!60] (1,4) rectangle (2,5);
\mesh{4}{3,2,4,1}
\node[anchor=west] at (0,-0.6) {$\tau_2=3\ol{5}241$};
\end{scope}
\node at (6,2) {$\wedge$};
\node at (24,2) {$\wedge$};
\draw[->] (13,2) -- node[above] {$h=\rot$} ++ (4,0);
\begin{scope}[shift={(18,0)}]
\mesh{4}{1,4,3,2}
\node[anchor=west] at (-1,-0.6) {$\rot(\tau_1)=1432$};
\end{scope}
\begin{scope}[shift={(25,0)}]
\path [fill=gray!60] (4,3) rectangle (5,4);
\mesh{4}{1,3,4,2}
\node[anchor=west] at (0,-0.6) {$\rot(\tau_2)=1352\ol{4}$};
\end{scope}
\end{tikzpicture}
\end{center}

Table~\ref{tab:tame2} lists several more permutations patterns that have been studied in the literature and that can be turned into tame patterns by such elementary transformations.

\begin{table}[h!]
\tikzstyle{mesh node}=[circle,minimum size=1.6pt,inner sep=0pt,draw=black,fill=black]
\small
\caption{Permutation patterns that become tame through elementary transformations, plus corresponding combinatorial objects.
}
\label{tab:tame2}
\renewcommand{\arraystretch}{0.9}
\begin{tabular}{|l|l|l|}
\hline
\textbf{Tame patterns} & \textbf{Combinatorial objects} & \textbf{References/OEIS} \cite{oeis} \\ \hline
$\rot(2341\!\wedge\! 3\ol{5}241)\!=\!1432\!\wedge\! 1352\ol{4}$: & rooted non-separable planar maps & \cite{MR2716312, MR1168135, MR1401000} \\
2-stack sortable permutations, & & \cite{MR1630680}, A000139 \\
$2413\wedge 41\ol{3}52$, $2413\wedge 45\ol{3}12$, & & \cite{MR1394948} \\ 
$2413\wedge 21\ol{3}54$, $3241\wedge \ol{2}4153$, & & \\
$\rot^{-1}(2413\wedge 5\ol{1}324)=2413\wedge 1534\ol{2}$, & & \\
$\cpl(2413\wedge \ol{4}2315)=3142\wedge\ol{2}4351$, & & \\
$\cpl(2314\wedge \ol{4}2513)=3241\wedge\ol{2}4153$, & & \\
$\cpl(3214\wedge \ol{2}4135)=2341\wedge \ol{4}2531$, & & \\
$\rev(2413\wedge 41\ol{3}52)=3142\wedge 2\ul{41}3$ & & \cite{MR2502604} \\ \hline
\multicolumn{2}{|l|}{conjunction of 20 patterns $\tau_i$ with tame $\cpl(\tau_i)$:} & \cite{MR2240774}, A245233 \\
\multicolumn{2}{|l|}{permutations generated by a stack of depth two and an infinite stack} & \\ \hline
\multicolumn{2}{|l|}{$\inv(132\wedge 312)=132\wedge 231$: Gilbreath permutations} & \cite{MR2028287,MR2858033} \\ \hline
\multicolumn{2}{|l|}{$\rot(3\ol{1}\ul{42}\wedge 3\ol{1}\ul{24})=\rot\Big(
\vcenter{\hbox{\begin{tikzpicture}[scale=0.16]
\path [fill=gray!60] (1,0) rectangle (2,1);
\path [fill=gray!60] (2,0) rectangle (3,4);
\path [fill=white] (0,0) rectangle (1,-0.3);
\path [fill=white] (0,4) rectangle (1,4.3);
\mesh{3}{2,3,1}
\end{tikzpicture}
}}
{}\!\wedge\,{}
\vcenter{\hbox{\begin{tikzpicture}[scale=0.16]
\path [fill=gray!60] (1,0) rectangle (2,1);
\path [fill=gray!60] (2,0) rectangle (3,4);
\mesh{3}{2,1,3}
\end{tikzpicture}}}
\Big)={}
\vcenter{\hbox{\begin{tikzpicture}[scale=0.16]
\path [fill=gray!60] (0,2) rectangle (1,3);
\path [fill=gray!60] (0,1) rectangle (4,2);
\mesh{3}{1,3,2}
\end{tikzpicture}}}
{}\,\wedge\,{}
\vcenter{\hbox{\begin{tikzpicture}[scale=0.16]
\path [fill=gray!60] (0,2) rectangle (1,3);
\path [fill=gray!60] (0,1) rectangle (4,2);
\mesh{3}{2,3,1}
\end{tikzpicture}}}
$\hspace{1ex}:} & \cite{MR2710744,MR2732833}, A129698 \\
\multicolumn{2}{|l|}{permutations that uniquely encode pile configurations in patience sorting} & \\ \hline
\end{tabular}
\end{table}

\subsection{Partially ordered patterns}

Partially ordered patterns were introduced by Kitaev~\cite{MR2163450}.
A \emph{partially ordered pattern (POP)} is a partially ordered set $P=([k],\prec)$, and we say that a permutation~$\pi$ \emph{contains} this pattern if and only if it contains a subpermutation $a_{i_1}\cdots a_{i_k}$, $i_1<\cdots<i_k$, such that $k\prec l$ in the partial order implies that $a_{i_k}<a_{i_l}$.
In particular, if $\prec$ is a linear order, then this is equivalent to classical pattern avoidance.
However, some other constraints can be expressed much more conveniently using POPs.
For instance, avoiding the POP
\begin{center}
\begin{tikzpicture}
\node at (-1,-0.4) {$P_1=$};
\node[poset node] (2) at (0,0) {2};
\node[poset node] (1) at (-0.3,-0.8) {1};
\node[poset node] (3) at (0.3,-0.8) {3};
\draw (2) -- (1);
\draw (2) -- (3);
\end{tikzpicture}
\end{center}
is equivalent to avoiding peaks in the permutation, so $S_n(P_1)$ is the set of permutations without peaks discussed before, which satisfies~$|S_n(P_1)|=2^{n-1}$.

More generally, the POP~$P_k$ defined by
\begin{center}
\begin{tikzpicture}
\node at (-1,-0.4) {$P_1=$};
\node[poset node] (2) at (0,0) {2};
\node[poset node] (1) at (-0.3,-0.8) {1};
\node[poset node] (3) at (0.3,-0.8) {3};
\draw (2) -- (1);
\draw (2) -- (3);
\end{tikzpicture}\raisebox{5mm}[5mm][0mm]{,}\hspace{1mm}
\begin{tikzpicture}
\node at (-1,-0.4) {$P_2=$};
\node[poset node] (2) at (0,0) {2};
\node[poset node] (4) at (0.6,0) {4};
\node[poset node] (1) at (-0.3,-0.8) {1};
\node[poset node] (3) at (0.3,-0.8) {3};
\node[poset node] (5) at (0.9,-0.8) {5};
\draw (2) -- (1);
\draw (2) -- (3);
\draw (4) -- (3);
\draw (4) -- (5);
\end{tikzpicture}\raisebox{5mm}[5mm][0mm]{,}\hspace{1mm}
\begin{tikzpicture}
\node at (-1,-0.4) {$P_3=$};
\node[poset node] (2) at (0,0) {2};
\node[poset node] (4) at (0.6,0) {4};
\node[poset node] (6) at (1.2,0) {6};
\node[poset node] (1) at (-0.3,-0.8) {1};
\node[poset node] (3) at (0.3,-0.8) {3};
\node[poset node] (5) at (0.9,-0.8) {5};
\node[poset node] (7) at (1.5,-0.8) {7};
\draw (2) -- (1);
\draw (2) -- (3);
\draw (4) -- (3);
\draw (4) -- (5);
\draw (6) -- (5);
\draw (6) -- (7);
\end{tikzpicture}\raisebox{5mm}[5mm][0mm]{$,\ldots,$}\hspace{1mm}
\begin{tikzpicture}
\node at (-1,-0.4) {$P_k=$};
\node[poset node] (2) at (0,0) {2};
\node[poset node] (4) at (0.6,0) {4};
\node[poset node] (1) at (-0.3,-0.8) {1};
\node[poset node] (3) at (0.3,-0.8) {3};
\node[poset node] (5) at (0.9,-0.8) {5};
\node at (1.35,-0.8) {$\ldots$};
\node[poset node] (20) at (3.0,0) {$2k$};
\node[poset node] (10) at (2.3,-0.8) {$2k-1$};
\node[poset node] (30) at (3.7,-0.8) {$2k+1$};
\draw (2) -- (1);
\draw (2) -- (3);
\draw (4) -- (3);
\draw (4) -- (5);
\draw (20) -- (10);
\draw (20) -- (30);
\end{tikzpicture}
\end{center}
realizes the language~$S_n(P_k)$ of permutations with at most $k-1$ peaks.

We let $L(P)$ denote the set of all linear extensions of the poset~$P$, and for any linear extension~$x\in L(P)$, we consider the inverse permutation of~$x$, as the $i$th entry of~$\inv(x)$ denotes the position of~$i$ in~$x$.
Moreover, $\inv(x)\in S_k$, so $\inv(x)$ is a classical pattern.

\begin{lemma}
\label{lem:pop}
For any partially ordered pattern~$P=([k],\prec)$, we have
\begin{equation}
S_n(P)=\bigcap_{x\in L(P)} S_n(\inv(x))=S_n\Big(\bigwedge\nolimits_{x\in L(P)}\inv(x)\Big).
\end{equation}
In particular, if the poset~$P$ does not have~1 or~$k$ as a maximal element, then $P$ is tame.
\end{lemma}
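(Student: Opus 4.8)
The plan is to first establish the displayed set identity by reducing POP-containment to classical-pattern containment, and then to deduce tameness from Lemma~\ref{lem:class} together with Theorem~\ref{thm:tame}.

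For the identity, fix $\pi=a_1\cdots a_n$ and a choice of indices $i_1<\cdots<i_k$; the subpermutation $a_{i_1}\cdots a_{i_k}$ is order-isomorphic to a unique classical pattern $\rho\in S_k$. I would observe that this subpermutation witnesses containment of $P=([k],\prec)$ precisely when $p\prec q$ implies $a_{i_p}<a_{i_q}$, which by order-isomorphism is equivalent to $p\prec q\Rightarrow\rho(p)<\rho(q)$. Thus $\rho$, read as a map $([k],\prec)\to([k],<)$, must be order-preserving, i.e.\ $\rho$ is exactly a linear extension of $P$ written as a rank function. Since the rank function of the linear extension listed (bottom to top) as $x\in L(P)$ is precisely $\inv(x)$, the set of admissible patterns $\rho$ equals $\{\inv(x)\mid x\in L(P)\}$. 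Hence $\pi$ contains $P$ iff $\pi$ contains $\inv(x)$ for some $x\in L(P)$; passing to complements and using~\eqref{eq:zigzag-ops} yields
\[
S_n(P)=\bigcap_{x\in L(P)}S_n(\inv(x))=S_n\Big(\bigwedge\nolimits_{x\in L(P)}\inv(x)\Big).
\]

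For tameness, by Theorem~\ref{thm:tame} it suffices to show that every pattern $\inv(x)$, $x\in L(P)$, is tame, and for this I would invoke Lemma~\ref{lem:class}. The largest value $k$ of $\inv(x)$ sits at the position $j$ with $\inv(x)(j)=k$, i.e.\ at the element $j$ whose position in $x$ is $k$; this is the top element of the linear extension $x$, which is a maximal element of $P$. Conversely, every maximal element of $P$ occurs as the top of some linear extension. Therefore the largest value of $\inv(x)$ avoids the leftmost and rightmost positions for all $x\in L(P)$ exactly when no maximal element of $P$ equals $1$ or $k$; note this hypothesis already forces $k\geq 3$, since for $k\le 2$ at least one of $1,k$ is always maximal. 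Under the assumption, each $\inv(x)\in S_k$ with $k\geq 3$ has its largest value away from both ends, so Lemma~\ref{lem:class} makes each $\inv(x)$ tame, and Theorem~\ref{thm:tame} then shows $S_n\big(\bigwedge_{x\in L(P)}\inv(x)\big)=S_n(P)$ is a hereditary sequence of zigzag languages, i.e.\ $P$ is tame.

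The main obstacle is bookkeeping of the two dual representations of a linear extension --- as an ordered list $x$ versus as its rank function $\inv(x)$ --- and getting the inversion on the correct side so that ``$p\prec q\Rightarrow\rho(p)<\rho(q)$'' matches ``$\rho$ is the rank function of a linear extension.'' The one genuinely substantive point is identifying the position of the value $k$ in $\inv(x)$ with the top (a maximal element) of $x$; once this is pinned down, the equivalence between the end-position condition of Lemma~\ref{lem:class} and the ``no maximal element equals $1$ or $k$'' hypothesis is immediate, and everything else is the routine translation above.
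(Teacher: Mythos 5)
Your proof is correct and follows essentially the same route as the paper: the set identity is unwound directly from the definition of POP containment together with~\eqref{eq:zigzag-ops}, and tameness is obtained by locating the largest value of each $\inv(x)$ at the position given by the top (maximal) element of the linear extension~$x$, then invoking Lemma~\ref{lem:class} and Theorem~\ref{thm:tame}. You simply spell out in more detail what the paper dispatches as ``immediate,'' and your observation that the hypothesis forces $k\geq 3$ is a small but valid addition.
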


\begin{proof}
The first part of the lemma follows immediately from the definition of POPs and from~\eqref{eq:zigzag-ops}.
To prove the second part, suppose that $P$ does not have~1 or~$k$ as a maximal element.
Then in any linear extension~$x\in L(P)$, 1 and $k$ will not appear at the last position, and so in the inverse permutation~$\inv(x)$, the largest entry~$k$ will neither be at position~1 nor at position~$k$.
We can hence apply Lemma~\ref{lem:class}, and using Theorem~\ref{thm:tame} we obtain that $P$ is tame.
\end{proof}

For instance, for the POP~$P_1$ from before we have $L(P_1)=\{132,312\}$, and so $P_1=132\wedge 231$, and for the POP~$P_2$ we have $L(P_2)=\{13254,13524,13542,15324,15342,31254,\ldots\}$, a set of 16 linear extensions in total, so $P_2=13254\wedge 14253\wedge 15243\wedge 14352\wedge 15342\wedge 23145\wedge\cdots$.

Moreover, we can create counted POPs with multiplicity~$c$ (recall Section~\ref{sec:mult}), by taking the OR of conjunctions of counted classical patterns as described by Lemma~\ref{lem:pop}, over all number partitions of~$c$ into the corresponding number of parts.
For instance, the counted POP~$\sigma=(P_1,c)$, which realizes the zigzag language $S_n(\sigma)$ of permutations with at most $c$ triples of values forming a peak, is obtained by considering the partitions $c=c+0=(c-1)+1=\cdots=1+(c-1)=0+c$, resulting in the formula
\begin{equation*}
 (P_1,c)=\big((132,c)\wedge (231,0)\big)\vee \big((132,c-1)\wedge (231,1)\big)\vee \cdots \vee\big((132,0)\wedge (231,c)\big)
\end{equation*}
with counted classical patterns on the right-hand side.

\subsection{Barred patterns with multiple bars}
\label{sec:mbarred}

Some patterns with multiple bars can be reduced to single-barred patterns (to which Lemma~\ref{lem:barred} applies) as shown by the following lemma.

\begin{lemma}[cf.~\cite{MR2924751}]
\label{lem:mbarred}
Let $\tau\in S_k$, $k\geq 5$, be a pattern with $b\geq 2$ bars, such that no two barred entries are at neighboring positions or have adjacent values.
Let $\widetilde{\tau}_1,\ldots,\widetilde{\tau}_b\in S_{k-b+1}$ be the permutations with a single barred entry that are order-isomorphic to the sequences obtained from $\tau$ by removing all but one barred entry.
Then we have
\begin{equation*}
S_n(\tau)=\bigcap_{1\leq i\leq b} S_n(\widetilde{\tau}_i)=S_n\Big(\bigwedge\nolimits_{1\leq i\leq b} \widetilde{\tau}_i \Big).
\end{equation*}
Consequently, if $\tau^-\in S_{k-b}$ does not have the largest value~$k-b$ at the leftmost or rightmost position, and the largest barred entry in~$\tau$ is smaller than~$k$ or at a position next to the entry~$k-1$, then $\tau$ is tame.
\end{lemma}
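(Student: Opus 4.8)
The plan is to establish the set identity first and then deduce tameness from Lemma~\ref{lem:barred} and Theorem~\ref{thm:tame}. First I would record the key structural observation that for each $i$ the single-barred pattern $\widetilde\tau_i$ satisfies $(\widetilde\tau_i)^-=\tau^-$, since deleting the one remaining bar recovers exactly the non-barred skeleton of $\tau$. Hence a permutation $\pi$ lies in $\bigcap_i S_n(\widetilde\tau_i)$ iff every match $M$ of $\tau^-$ in $\pi$ can be extended, for each $i$ separately, to a match of $(\widetilde\tau_i)'$; whereas $\pi\in S_n(\tau)$ iff every match $M$ of $\tau^-$ can be extended \emph{simultaneously} at all $b$ barred slots to a match of $\tau'$. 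So the whole identity reduces to a local claim about a fixed match $M$: that $M$ extends to $\tau'$ iff it extends to each $(\widetilde\tau_i)'$ individually. The second equality in the displayed formula is then just the definition~\eqref{eq:zigzag-ops}.

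For the local claim, the forward implication is immediate: a simultaneous extension $M\cup\{P_1,\ldots,P_b\}$ of $M$ to $\tau'$ restricts, for each $i$, to the configuration $M\cup\{P_i\}$, which is a match of the sub-pattern $(\widetilde\tau_i)'$. The backward implication is the heart of the argument. Given for each $i$ an extension point $P_i$ witnessing a match of $(\widetilde\tau_i)'$, each $P_i$ already sits in the correct cell of the grid of $M$, i.e.\ it has the correct position and value relative to every point of $M$; the only thing that can fail is the relative order between two inserted points $P_i,P_j$. Here both hypotheses enter. If the barred entries $i,j$ occupy positions $p_i<p_j$ in $\tau'$, then since no two barred entries are neighboring there is a non-barred entry of $\tau'$ strictly between positions $p_i$ and $p_j$; its point in $M$ lies to the right of $P_i$ and to the left of $P_j$, forcing $P_i$ left of $P_j$ as required. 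Symmetrically, if the barred values satisfy $v_i<v_j$, the no-adjacent-values hypothesis yields a non-barred entry with value strictly between $v_i$ and $v_j$, whose point in $M$ separates $P_i$ and $P_j$ vertically. Thus all pairwise position- and value-orders among the $P_i$ are correct (and the $P_i$ are pairwise distinct and disjoint from $M$, lying in distinct columns and rows), so $M\cup\{P_1,\ldots,P_b\}$ is a genuine match of $\tau'$, proving $\pi\in S_n(\tau)$.

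For the \emph{consequently} clause, I would check that each single-barred $\widetilde\tau_i\in S_{k-b+1}$ meets the hypotheses of Lemma~\ref{lem:barred}, so that every $\widetilde\tau_i$ is tame; then Theorem~\ref{thm:tame} makes $S_n\big(\bigwedge_i\widetilde\tau_i\big)=S_n(\tau)$ a hereditary sequence of zigzag languages, i.e.\ $\tau$ is tame. The condition that $(\widetilde\tau_i)^-=\tau^-$ not have its maximum $k-b$ at an end is exactly the first hypothesis on $\tau^-$. For the barred-entry condition I would split on whether the value $k$ is barred. If the largest barred value is $<k$, then $k$ is a non-barred entry kept in every $\widetilde\tau_i$, so each $\widetilde\tau_i$'s barred value is non-maximal and the first branch of Lemma~\ref{lem:barred} applies. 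If instead $k$ is barred (the remaining case permitted by the hypothesis), then by no-adjacent-values $k-1$ is non-barred and is the largest non-barred value; hence for every barred entry other than the value-$k$ one the reduced barred value stays below the retained value $k-1$ and is again non-maximal, while for the value-$k$ entry $i_0$ the barred value becomes the maximum $k-b+1$ of $\widetilde\tau_{i_0}$, whose second-largest value $k-b$ is the image of the value-$(k-1)$ entry; since that entry sits next to the value-$k$ entry in $\tau$ and deleting the other (elsewhere-located) barred entries preserves this adjacency, the second branch of Lemma~\ref{lem:barred} applies to $\widetilde\tau_{i_0}$.

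The main obstacle is the backward direction of the local claim: one must argue that independently chosen extension points can be amalgamated into a single match of $\tau'$, and it is precisely here that both hypotheses---non-neighboring positions and non-adjacent values---are needed, each supplying a non-barred separating entry that fixes one of the two coordinate orders between any pair of inserted points. A secondary, bookkeeping-heavy point is the case analysis in the last paragraph, where one must track carefully how each barred value reduces inside the various $\widetilde\tau_i$ and confirm that the adjacency used in Lemma~\ref{lem:barred} survives the deletions.
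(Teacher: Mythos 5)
Your proposal is correct and follows essentially the same route as the paper: reduce the multi-barred pattern to the conjunction of the single-barred patterns $\widetilde{\tau}_i$, then verify the hypotheses of Lemma~\ref{lem:barred} for each $\widetilde{\tau}_i$ (splitting on whether the value $k$ is barred) and invoke Theorem~\ref{thm:tame}. Your amalgamation argument for the backward direction of the set identity--using a non-barred entry at position $p_i+1$ and a non-barred value $v_i+1$ to separate any two inserted points--is a correct and welcome elaboration of a step the paper merely asserts as an observation.
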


\begin{proof}
To prove the first part, observe that when no two barred entries are at neighboring positions or have adjacent values, then the definition of barred pattern avoidance is equivalent to avoiding each of the single-barred patterns $\widetilde{\tau}_1,\ldots,\widetilde{\tau}_b$, so the claim follows using~\eqref{eq:zigzag-ops}.

To prove the second part we show that each of the single-barred patterns $\widetilde{\tau}_1,\ldots,\widetilde{\tau}_b$ satisfies the conditions of Lemma~\ref{lem:barred}.
Indeed, we know that $\tau^-=(\widetilde{\tau}_i)^-\in S_{k-b}$, $1\leq i\leq b$, does not have the largest value~$k-b$ at the leftmost or rightmost position.
Moreover, if $\widetilde{\tau}_i$ is obtained from $\tau$ by removing all but the largest barred entry, then the barred entry in $\widetilde{\tau}_i\in S_{k-b+1}$ is either smaller than $k-b+1$ or at a position next to the entry $k-b$.
To see this note that if the largest entry~$k$ in~$\tau$ is barred, then the second largest entry~$k-1$ is not barred by the assumption that no two barred entries have adjacent values.
For the same reason, if $\widetilde{\tau}_i$ is obtained from $\tau$ by removing barred entries including the largest one, then the barred entry in $\widetilde{\tau}_i\in S_{k-b+1}$ is smaller than $k-b+1$.
Consequently, we can apply Lemma~\ref{lem:barred} to each of the patterns $\widetilde{\tau}_1,\ldots,\widetilde{\tau}_b$, and complete the proof by applying Theorem~\ref{thm:tame}.
\end{proof}

Lemma~\ref{lem:mbarred} applies for instance to the tame pattern~$3\ol{1}52\ol{4}=3\ol{1}42\wedge 241\ol{3}$ listed in Table~\ref{tab:tame1}.

\subsection{Weak avoidance of barred patterns and dotted patterns}
\label{sec:weak}

Weak pattern avoidance and dotted patterns were introduced by Baril~\cite{MR2836813} (see also~\cite{MR3147205}).
Given a single-barred pattern~$\tau$, we define $\tau'$ and $\tau^-$ as in Section~\ref{sec:barred}, and we say that a permutation~$\pi$ \emph{weakly contains}~$\tau$, if and only if it contains a match of~$\tau^-$ that cannot be extended to a match of~$\tau'$ by adding one entry of~$\pi$, not necessarily at the position specified by the barred entry.
Otherwise, we say that $\pi$ \emph{weakly avoids} $\tau$.
We let $S_n^w(\tau)$ denote the set of permutations that weakly avoid~$\tau$.
For instance, $\pi=1243$ contains the barred pattern $\tau=12\ol{3}$, as for the increasing pair~$24$ in~$\pi$, there is no entry in~$\pi$ to the right that extends this pair to an increasing triple.
However, $\pi$ weakly avoids~$\tau$, as each of the increasing pairs~$12$, $14$, $13$, $24$ and~$23$ can be extended to an increasing triple, by adding an entry from~$\pi$ to the right, middle, middle, left and left of this pair, respectively.

The relation between weak pattern avoidance and mesh pattern avoidance is captured by the following lemma.

\begin{lemma}
\label{lem:weak}
Let $\tau\in S_k$ be a single-barred pattern with barred entry~$b$.
Consider the longest increasing or decreasing substring of consecutive values in~$\tau'$ that contains~$b$, and let~$r$ and~$s$ be the start and end indices of this substring.
Let $\sigma$ be the mesh pattern defined by $\sigma:=(\tau^-,C)$ with
\begin{equation}
\label{eq:Cweak}
C := \big\{\big(r+i-1,\tau'(r+i)-1\big)\mid 0\leq i\leq s-r\big\}.
\end{equation}
Then we have $S_n^w(\tau)=S_n(\sigma)$.
\end{lemma}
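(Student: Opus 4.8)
The plan is to show that, for one fixed match of $\tau^-$ in $\pi$, the mesh condition ``all cells of $C$ are empty'' is exactly the negation of ``this match can be extended to a match of $\tau'$ by adding one further entry of $\pi$''. Once this is in place, weak containment of $\tau$ (the existence of a $\tau^-$-match that cannot be so extended) becomes literally the same as containment of the mesh pattern $\sigma=(\tau^-,C)$ (the existence of a $\tau^-$-match all of whose $C$-cells are empty), and the asserted identity $S_n^w(\tau)=S_n(\sigma)$ follows immediately.

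First I would fix notation: let $p$ be the position of the barred entry in $\tau$, so $\tau'(p)=b$, and recall that $\tau^-$ is the standardization of $\tau'$ with the entry at position~$p$ deleted. Given the grid representation $\alpha$ of $\pi$ and a $\tau^-$-match $\beta:=\alpha|_I$, a further point $P\in\alpha([n])\setminus\beta$ extends the match to a match of $\tau'$ precisely when $P$ lies in a cell $c_\beta(i,j)$ for which placing a point into that cell turns $\beta$ into an order-isomorphic copy of $\tau'$. Thus the entire statement reduces to the purely combinatorial claim
\begin{equation*}
C=\{(i,j) : \text{inserting one point into } c_\beta(i,j)\ \text{makes}\ \beta\ \text{order-isomorphic to}\ \tau'\}.
\end{equation*}
To prove this I would note that inserting a point into $c_\beta(i,j)$ produces the permutation gotten from $\tau^-$ by adding a new entry at relative position $i+1$ and relative value $j+1$; this equals $\tau'$ exactly when $\tau'(i+1)=j+1$ and deleting position $i+1$ from $\tau'$ returns $\tau^-$. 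Hence I must determine all positions $t$ of $\tau'$ whose deletion yields $\tau^-$, the associated cells then being $(t-1,\tau'(t)-1)$. The crux is the combinatorial fact that deleting position $t$ and deleting position $p$ give the same standardized pattern if and only if $t$ lies in the maximal run of contiguous positions on which $\tau'$ is monotone with consecutive values, i.e.\ exactly the substring delimited by $r$ and $s$. I would establish this by the adjacent step: for two neighboring positions, deletion yields the same pattern iff their values differ by $1$, and iterating shows the deletion-invariant positions form precisely the maximal monotone consecutive-value run through~$p$ (and that this run is unique, since a value can continue into consecutive values in only one direction). Reading off $(t-1,\tau'(t)-1)$ for $r\le t\le s$ then reproduces the set $C$ of~\eqref{eq:Cweak}.

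Finally I would translate back. Since points of a match lie on the grid lines and never inside cells, ``no point of $\alpha([n])$ outside $\beta$ lies in a $C$-cell'' is the same as ``no existing point of $\pi$ extends $\beta$ to a match of $\tau'$''; the former is exactly the defining condition for $\beta$ to be a match of the mesh pattern $\sigma=(\tau^-,C)$, and the latter is non-extendability. Quantifying over all $\tau^-$-matches shows that $\pi$ weakly contains $\tau$ iff $\pi$ contains $\sigma$, hence $S_n^w(\tau)=S_n(\sigma)$. I expect the main obstacle to be the characterization of the deletion-invariant positions as the monotone consecutive-value run, together with verifying that no cell outside $C$ can reproduce $\tau'$; everything else is a direct unwinding of the definitions of weak avoidance and of mesh-pattern containment.
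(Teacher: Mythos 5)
Your proposal is correct, and it reorganizes the argument relative to the paper's proof. The paper proves the two inclusions separately and geometrically: for one direction it notes that a point of $\pi$ lying in a cell of $C$ would join the points matched to positions $r,\ldots,s-1$ of $\tau^-$ into an increasing consecutive-value run of length $s-r+1$, hence extend the match to one of $\tau'$; for the converse it takes a match of the mesh pattern and runs a case analysis on where a potential extending point could sit (left of, right of, above, or below the cells of $C$) to rule out every cell outside $C$. You instead prove a single cell-level biconditional — the cells whose filling turns the $\tau^-$-match into a $\tau'$-match are exactly those indexed by $C$ — and let both set inclusions follow formally. Your key lemma, that deleting position $t$ from $\tau'$ reproduces $\tau^-$ precisely when $t$ lies in the maximal monotone consecutive-value run through the barred position, is the combinatorial content that the paper's case analysis establishes implicitly; isolating it as a statement about $\tau'$ alone, independent of $\pi$, is arguably cleaner and makes the role of $r$ and $s$ transparent. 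The one step you flag as outstanding — that no position outside $[r,s]$ deletes to $\tau^-$ — is genuinely needed (it is the counterpart of the paper's four-way case analysis), but it is routine: if deleting positions $t_1<t_2$ yields the same standardized pattern, comparing the ranks landing at each intermediate position forces $|\tau'(i+1)-\tau'(i)|=1$ for all $t_1\le i<t_2$, and a chain of consecutive-value steps on distinct values is automatically monotone, contradicting the maximality of $[r,s]$. With that supplied, your proof is complete and equivalent in strength to the paper's.
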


Table~\ref{tab:weak} illustrates the mesh pattern~$\sigma=(\tau^-,C)$ defined in Lemma~\ref{lem:weak} for six different single-barred patterns.

\begin{table}
\tikzstyle{mesh node}=[circle,minimum size=2.4pt,inner sep=0pt,draw=black,fill=black]
\caption{Illustration of Lemma~\ref{lem:weak}.}
\label{tab:weak}
\begin{tabular}{cccllc}
\textbf{$\tau$} & \textbf{$\tau'$} & \textbf{$\tau^-$} & $r$ & $s$ & $\sigma$ \\ \hline
$\ol{1}45632$ & 145632 & 34521 & 1 & 1 &
{$\vcenter{\hbox{\begin{tikzpicture}[scale=0.2]
\path [fill=gray!60] (0,0) rectangle (1,1);
\mesh{5}{3,4,5,2,1}
\end{tikzpicture}
}}$}
\\ [-1em]
\\ \hline
\\ [-1em]
$1\ol{4}5632$ & 145632 & 14532 & 2 & 4 & \multirow{3}{*}{
{$\vcenter{\hbox{\begin{tikzpicture}[scale=0.2]
\path [fill=gray!60] (1,3) rectangle (2,4);
\path [fill=gray!60] (2,4) rectangle (3,5);
\path [fill=gray!60] (3,5) rectangle (4,6);
\mesh{5}{1,4,5,3,2}
\end{tikzpicture}
}}$}
}
\\
$14\ol{5}632$ & 145632 & 14532 & 2 & 4 & \\
$145\ol{6}32$ & 145632 & 14532 & 2 & 4 & \\ \hline
& & & & & \multirow{3}{*}{
{$\vcenter{\hbox{\begin{tikzpicture}[scale=0.2]
\path [fill=gray!60] (4,2) rectangle (5,3);
\path [fill=gray!60] (5,1) rectangle (6,2);
\mesh{5}{1,3,4,5,2}
\end{tikzpicture}
}}$}
}
\\ [-0.7em]
$1456\ol{3}2$ & 145632 & 13452 & 5 & 6 &
\\
$14563\ol{2}$ & 145632 & 13452 & 5 & 6 & \\
\end{tabular}
\end{table}

\begin{proof}
We only consider the case that the longest substring of consecutive values in~$\tau'$ that contains~$b$ is increasing, as the other case is symmetric.

We first show that if a permutation~$\pi$ weakly contains~$\tau$, then it also contains~$\sigma$.
For this consider a match of~$\tau^-$ that cannot be extended to a match of~$\tau'$ in the grid representation of~$\pi$; see the left hand side of Figure~\ref{fig:weak}.
Consider the $s-r$ points $P_r,\ldots,P_{s-1}$ of~$\pi$ to which the entries at positions~$r,\ldots,s-1$ of~$\tau^-$ are matched.
We know that they form an increasing sequence, and all other points in this match are below or above them.
Now consider the $s-r+1$ cells in~$C$ defined in~\eqref{eq:Cweak} between and around these points.
We need to show that none of them contains any points of~$\pi$, demonstrating that this is a match of the mesh pattern~$\sigma$.
Indeed, if one of these regions did contain a point~$Q$ from~$\pi$, then $Q$ together with $P_r,\ldots,P_{s-1}$ would form an increasing sequence of length $s-r+1$, i.e., $Q$ would extend the match of~$\tau^-$ to a match of~$\tau'$ in~$\pi$, a contradiction.

It remains to show that if a permutation~$\pi$ contains the mesh pattern~$\sigma$, then it weakly contains~$\tau$.
For this consider a match of $\sigma=(\tau^-,C)$ in the grid representation of~$\pi$; see the right hand side of Figure~\ref{fig:weak}.
We label the points to which the entries of~$\tau^-$ are matched by~$P_1,\ldots,P_{k-1}$.
By the definition of~$\sigma$, the points $P_r,\ldots,P_{s-1}$ form a longest increasing sequence of consecutive values in this match.
In particular, $P_{r-1}$ is not located at the bottom left corner of the leftmost cell of~$C$, and $P_s$ is not located at the top right corner of the rightmost cell of~$C$, so there is a point~$P_a$, $a\in[k-1]\setminus\{r-1,\ldots,s-1\}$, on the same height as the lower boundary of the leftmost cell of~$C$, and a point~$P_b$, $b\in[k-1]\setminus\{r,\ldots,s\}$, on the same height as the upper boundary of the rightmost cell of~$C$.
We know that no point of~$\pi$ lies within any of the cells in~$C$, but we also need to show that a point~$Q$ of~$\pi$ contained in any of the other cells cannot be used to extend this match of~$\tau^-$ to to a match of~$\tau'$.
For this we distinguish four cases:
Suppose that $Q$ is contained in a cell~$L$ to the left of the cells in~$C$.
Then $P_1,\ldots,P_{k-1}$ together with~$Q$ is not a match of~$\tau'$, as the longest increasing substring of consecutive values $P_r,\ldots,P_{s-1}$ contains only $s-r$ points.
A symmetric argument works if $Q$ is contained in a cell~$R$ to the right of the cells in~$C$.
Now suppose that~$Q$ is contained in a cell~$A$ above a cell from~$C$, but not above the rightmost one, or below a cell from~$C$, but not below the leftmost one.
In this case the points $P_r,\ldots,P_{s-1}$ together with~$Q$ do not form an increasing substring, so this is not a match of~$\tau'$.
It remains to consider the case that~$Q$ is contained in a cell~$B$ above the rightmost cell from~$C$, or below the leftmost cell from~$C$.
In this case the points $P_r,\ldots,P_{s-1}$ together with~$Q$ form an increasing substring, but the values are not consecutive, as $Q$ is separated from $P_r,\ldots,P_{s-1}$ by the point~$P_b$ or $P_a$, respectively, so this is not match of~$\tau'$ either.
This completes the proof.

\begin{figure}
\includegraphics{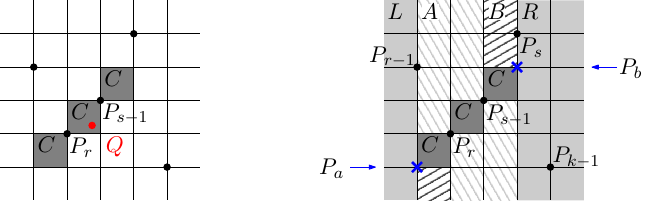}
\caption{Illustration of the proof of Lemma~\ref{lem:weak}.}
\label{fig:weak}
\end{figure}

\end{proof}

Baril~\cite{MR2836813} also introduced the notion of a \emph{dotted} pattern~$\tau\in S_k$, which is a pattern with some entries at positions $I\seq [k]$ that have a dot above them.
A permutation~$\pi$ \emph{avoids} the pattern~$\tau$ if and only if $\pi$ weakly avoids every single-barred pattern obtained by putting a bar above every entry not in~$I$, i.e., we have
\begin{equation}
\label{eq:dotted}
  S_n(\tau):=\bigcap_{j\in [k]\setminus I} S_n^w(\tau_j),
\end{equation}
where $\tau_j$ is the barred pattern obtained by putting the bar above entry~$j$.

\begin{lemma}
Let $\tau\in S_k$, $k\geq 3$, be a dotted pattern with dots over all positions~$I\seq [k]$.
If $\tau$ does not have the largest value~$k$ at the leftmost or rightmost position, and all entries in the longest increasing or decreasing substring of consecutive values including~$k$ are dotted, then $\tau$ is tame.
\end{lemma}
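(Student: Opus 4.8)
The plan is to follow exactly the route used for Lemmas~\ref{lem:barred}--\ref{lem:bivinc}: rewrite $S_n(\tau)$ as $S_n(F)$ for a conjunction $F$ of mesh patterns, check that every conjunct is tame via Theorem~\ref{thm:mesh}, and then invoke Theorem~\ref{thm:tame}. The reduction is already half-done by the definitions at hand. Equation~\eqref{eq:dotted} gives $S_n(\tau)=\bigcap_{j\in[k]\setminus I}S_n^w(\tau_j)$, and Lemma~\ref{lem:weak} converts each weak-avoidance set into a mesh-pattern-avoidance set, $S_n^w(\tau_j)=S_n(\sigma_j)$ with $\sigma_j=\big((\tau_j)^-,C_j\big)$ and $C_j$ the single monotone staircase of cells from~\eqref{eq:Cweak}. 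Using~\eqref{eq:zigzag-ops} this yields $S_n(\tau)=S_n\big(\bigwedge_{j\in[k]\setminus I}\sigma_j\big)$, so it suffices to show that \emph{each} $\sigma_j$ is tame. Theorem~\ref{thm:tame} then concludes.

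So the real content is to verify the four conditions of Theorem~\ref{thm:mesh} for every $\sigma_j$, and here is where the two hypotheses enter. Let $T$ denote the longest increasing or decreasing substring of consecutive values containing $k$, and let $i$ be the position of $k$ in $\tau$; by assumption $2\le i\le k-1$ and all positions of $T$ lie in $I$. First I would record the structural facts about $\sigma_j$ for a non-dotted index $j$. Since $j\notin I$ we have $j\notin T$, so in particular $j\ne i$; hence deleting position $j$ does not remove $k$, and $k$ survives as the unique largest value of $(\tau_j)^-\in S_{k-1}$, sitting at position $i$ or $i-1$. Next, $C_j$ is the staircase tracking the run $T_j$ through the barred value $\tau(j)$; because $T$ is the \emph{longest} monotone consecutive-value run through $k$ and $j\notin T$, the run $T_j$ cannot reach the value $k$ (otherwise $T_j$ would be a monotone consecutive-value run through $k$ containing $j$, forcing $j\in T$). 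Consequently the staircase $C_j$ stays strictly below the top row of the grid representation of $\sigma_j$.

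With these facts in place the verification of conditions~(ii)--(iv) should be routine: since $C_j$ is a single staircase that never occupies the topmost row $(a,k-1)$, condition~(ii) holds vacuously, and since at most the one column-pair adjacent to a step of the staircase can be shaded, the "shift to the left/right" implications of~(iii) and~(iv) hold because $C_j$ is itself a connected diagonal chain of cells (each shaded cell has its neighbour along the monotone direction shaded as well). I expect the main obstacle to be condition~(i), namely that the largest value of $(\tau_j)^-$ is interior, i.e. not at position $1$ or $k-1$. This is exactly the step that must consume the hypothesis ``every entry of $T$ is dotted'': the only way that deleting a non-dotted position $j$ could push $k$ to a boundary is $i=2,\ j=1$ (or symmetrically $i=k-1,\ j=k$), and the argument I would give is that a boundary position neighbouring the peak would have to extend the monotone run $T$ and therefore lie in $T$, whence it is dotted and unavailable as a barred index $j$. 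Pinning down this implication carefully is the delicate point of the proof and the place where the interaction between ``$k$ interior'' and ``$T$ dotted'' is genuinely needed; once condition~(i) is secured for all $j$, every $\sigma_j$ is tame by Theorem~\ref{thm:mesh}, and Theorem~\ref{thm:tame} shows $S_n(\tau)$ is a hereditary sequence of zigzag languages, i.e.\ $\tau$ is tame.
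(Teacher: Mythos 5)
Your proposal follows the paper's route exactly: rewrite $S_n(\tau)$ via \eqref{eq:dotted} and Lemma~\ref{lem:weak} as $S_n$ of a conjunction of mesh patterns $\sigma_j=\big((\tau_j)^-,C_j\big)$, verify the conditions of Theorem~\ref{thm:mesh} for each conjunct, and conclude with Theorem~\ref{thm:tame}. Your argument that $C_j$ contains no cell in the topmost row --- any monotone consecutive-value run reaching the value $k$ is a sub-run of $T$, so it would force $j\in T\subseteq I$ --- is correct and is precisely what the paper uses: it gives condition~(ii) and makes the premises of conditions~(iii) and~(iv) vacuous (your extra appeal to the staircase structure of $C_j$ for (iii)/(iv) is not needed).

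The step you yourself flag as delicate, condition~(i), is where your argument over-claims. You assert that ``a boundary position neighbouring the peak would have to extend the monotone run $T$ and therefore lie in $T$.'' That implication holds only when the entry at that boundary position has value $k-1$: if $k$ sits at position $i=2$ and $\tau(1)\neq k-1$, then position~$1$ does \emph{not} extend the consecutive-value run $T$, need not be dotted, and is therefore available as the barred index $j=1$; deleting it places the largest value of $(\tau_j)^-$ at the leftmost position, so condition~(i) of Theorem~\ref{thm:mesh} fails for that conjunct --- and condition~(i) is necessary for tameness of a mesh pattern, as discussed in Section~\ref{sec:limits}. Concretely, for $\tau=35142$ with only the entry $5$ dotted the hypotheses of the lemma hold, yet $\ol{3}5142$ reduces to a mesh pattern on $4132$ whose largest value is leftmost, and one checks that $4132=c_1(132)$ is excluded from $S_4(\tau)$ while $132\in p(S_4(\tau))$, so closure fails. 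To be fair, the paper's own proof makes the very same move --- it only excludes a barred neighbour of $k$ in the sub-case where that neighbour carries the value $k-1$ (``if the second largest entry $k-1$ is next to $k$, then it is also not barred'') and leaves the remaining sub-case untreated --- so your writeup faithfully reproduces the paper's argument, including this unaddressed case; but as stated, the sentence ``would have to extend the run'' is not justified, and this is the one genuine gap in the proof.
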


\begin{proof}
Let~$r$ and~$s$ be the start and end indices of the longest increasing or decreasing substring of consecutive values including~$k$.
The conditions of the lemma imply that $[r,s] \seq I$.

By~\eqref{eq:dotted}, Theorem~\ref{thm:tame}, and Lemma~\ref{lem:weak}, to prove that $\tau$ is tame it is sufficient to show this for the mesh pattern $\sigma=(\rho^-,C)$ with $C$ defined in~\eqref{eq:Cweak} for every single-barred pattern~$\rho$ obtained from $\tau'$ by placing a bar over an entry at a position $[k]\setminus I$.
As $[r,s]\seq I$, we obtain in particular that the largest value~$k$ in~$\rho$ is not barred, and if the second largest entry~$k-1$ is next to~$k$, then it is also not barred.
Combining this with the assumption that the largest value~$k$ is not at the leftmost or rightmost position in~$\rho$, we obtain that in~$\rho^-$, the largest entry $k-1$ is not at the leftmost or rightmost position.
Moreover, we obtain from the definition~\eqref{eq:Cweak} that $C$ does not have any cells in the topmost row, i.e., $(i,k-1)\notin C$ for $i=0,\ldots,k-1$.
Therefore, applying Theorem~\ref{thm:mesh} shows that $\sigma$ is tame, completing the proof.
\end{proof}

\subsection{Monotone and geometric grid classes}
\label{sec:grid}

Monotone grid classes of permutations were introduced by Huczynska and Vatter~\cite{MR2240760}.
To define them, we consider a matrix~$M$ with entries from~$\{0,+1,-1\}$, indexed first by columns from left to right, and then by rows from bottom to top.
A permutation~$\pi$ of~$[n]$ (for any~$n\geq 0$) is in the \emph{monotone grid class of~$M$}, denoted~$\Grid(M)$, if we can place the points labelled from~1 to~$n$ from bottom to top into a rectangular grid that has as many rows and columns as the matrix~$M$, and reading the labels from left to right will yield~$\pi$, subject to the following two conditions:
No two points are placed on the same horizontal or vertical line.
Moreover, for each cell~$(x,y)$ in the grid, if $M_{x,y}=0$, then the cell contains no points, if $M_{x,y}=+1$, then the points in this cell are increasing, and if $M_{x,y}=-1$, then the points in this cell are decreasing.
This definition is illustrated in the top part of Figure~\ref{fig:grid-geo}.
Based on this, we define $\Grid_n(M):=\Grid(M)\cap S_n$.
It is an open problem whether $\Grid(M)$ is characterized by finitely many forbidden patterns for any~$M$ (cf.~\cite{MR3548800}).

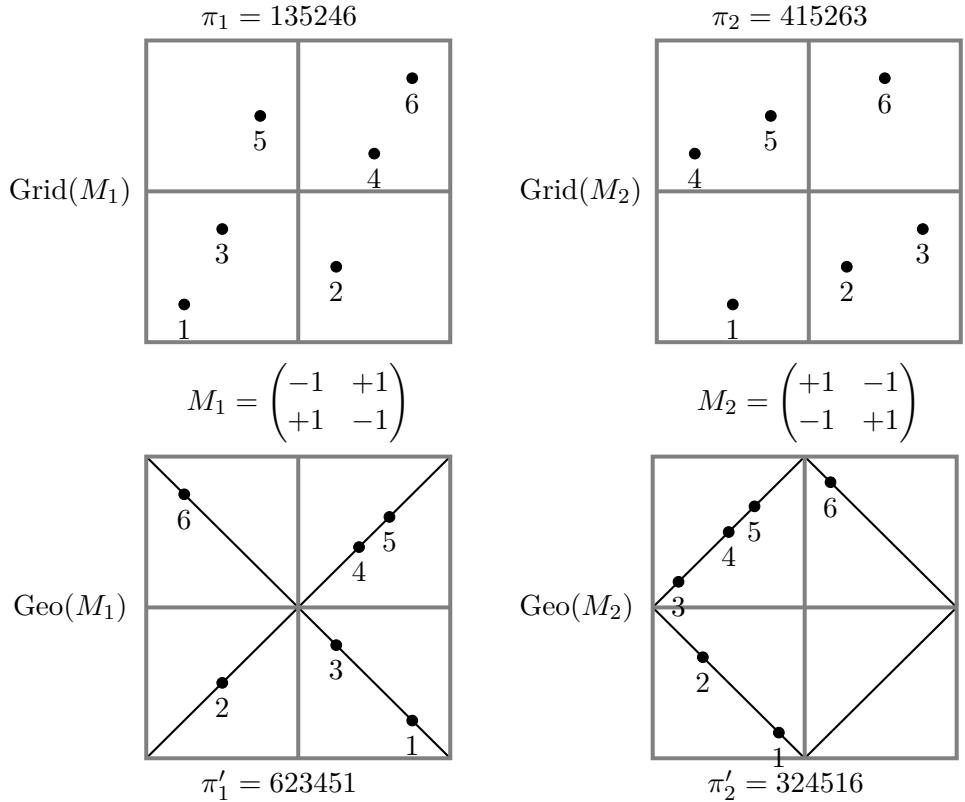
\begin{figure}
\begin{tikzpicture}
\path [grid edge] (0,0) rectangle (4,4);
\path [grid edge] (2,0) -- (2,4);
\path [grid edge] (0,2) -- (4,2);
\draw (0.5,0.5) node[mesh node,label=below:$1$]{};
\draw (2.5,1) node[mesh node,label=below:$2$]{};
\draw (1,1.5) node[mesh node,label=below:$3$]{};
\draw (3,2.5) node[mesh node,label=below:$4$]{};
\draw (1.5,3) node[mesh node,label=below:$5$]{};
\draw (3.5,3.5) node[mesh node,label=below:$6$]{};
\node at (2,-0.8) {$M_1=\begin{pmatrix} -1 & +1 \\ +1 & -1 \end{pmatrix}$};
\node at (1.75,4.3) {$\pi_1=135246$};
\node at (-1,2) {$\Grid(M_1)$};
\end{tikzpicture}
\hspace{5mm}
\begin{tikzpicture}
\path [grid edge] (0,0) rectangle (4,4);
\path [grid edge] (2,0) -- (2,4);
\path [grid edge] (0,2) -- (4,2);
\draw (1,0.5) node[mesh node,label=below:$1$]{};
\draw (2.5,1) node[mesh node,label=below:$2$]{};
\draw (3.5,1.5) node[mesh node,label=below:$3$]{};
\draw (0.5,2.5) node[mesh node,label=below:$4$]{};
\draw (1.5,3) node[mesh node,label=below:$5$]{};
\draw (3,3.5) node[mesh node,label=below:$6$]{};
\node at (2,-0.8) {$M_2=\begin{pmatrix} +1 & -1 \\ -1 & +1 \end{pmatrix}$};
\node at (1.75,4.3) {$\pi_2=415263$};
\node at (-1,2) {$\Grid(M_2)$};
\end{tikzpicture}
\\
\begin{tikzpicture}
\path [slope edge] (0,0) -- (4,4);
\path [slope edge] (0,4) -- (4,0);
\path [grid edge] (0,0) rectangle (4,4);
\path [grid edge] (2,0) -- (2,4);
\path [grid edge] (0,2) -- (4,2);
\draw (3.5,0.5) node[mesh node,label=below:$1$]{};
\draw (1,1) node[mesh node,label=below:$2$]{};
\draw (2.5,1.5) node[mesh node,label=below:$3$]{};
\draw (2.8,2.8) node[mesh node,label=below:$4$]{};
\draw (3.2,3.2) node[mesh node,label=below:$5$]{};
\draw (0.5,3.5) node[mesh node,label=below:$6$]{};
\node at (1.75,-0.35) {$\pi_1'=623451$};
\node at (-1,2) {$\Geo(M_1)$};
\end{tikzpicture}
\hspace{5mm}
\begin{tikzpicture}
\path [slope edge] (0,2) -- (2,4);
\path [slope edge] (0,2) -- (2,0);
\path [slope edge] (2,4) -- (4,2);
\path [slope edge] (2,0) -- (4,2);
\path [grid edge] (0,0) rectangle (4,4);
\path [grid edge] (2,0) -- (2,4);
\path [grid edge] (0,2) -- (4,2);
\draw (1.66,0.34) node[mesh node,label=below:$1$]{};
\draw (0.66,1.34) node[mesh node,label=below:$2$]{};
\draw (0.34,2.34) node[mesh node,label=below:$3$]{};
\draw (1,3) node[mesh node,label=below:$4$]{};
\draw (1.34,3.34) node[mesh node,label=below:$5$]{};
\draw (2.34,3.66) node[mesh node,label=below:$6$]{};
\node at (1.75,-0.35) {$\pi_2'=324516$};
\node at (-1,2) {$\Geo(M_2)$};
\end{tikzpicture} \\
\caption{Top: Illustration of monotone grid classes. Bottom: Illustration of geometric grid classes, with X-shaped permutations on the left, and O-shaped permutations on the right.
Observe that $\pi_1\in\Grid(M_1)$, but $\pi_1\notin\Geo(M_1)$, as $\pi_1$ contains the pattern~$2413$.
Similarly, we have $\pi_2\in\Grid(M_2)$, but $\pi_2\notin\Geo(M_2)$, as $\pi_2$ contains the pattern~$1423$.}
\label{fig:grid-geo}
\end{figure}

The second type of permutations we shall discuss in this section are geometric grid classes, introduced by Albert, Atkinson, Bouvel, Ru\v{s}kuc, and Vatter~\cite{MR3091268}.
They are defined using a matrix~$M$ with entries from~$\{0,+1,-1\}$ as before.
A permutation~$\pi$ of~$[n]$ is in the \emph{geometric grid class of~$M$}, denoted~$\Geo(M)$, if it can be drawn in a rectangular grid as described before, with the slightly strengthened conditions that if $M_{x,y}=+1$, then the points in the cell~$(x,y)$ lie on the increasing diagonal line through this cell, and if $M_{x,y}=-1$, then the points in the cell~$(x,y)$ lie on the decreasing diagonal line through this cell.
This definition is illustrated in the bottom part of Figure~\ref{fig:grid-geo}.
Similarly to before, we define~$\Geo_n(M):=\Geo(M)\cap S_n$.
We clearly have $\Geo(M)\seq \Grid(M)$ and $\Geo_n(M)\seq \Grid_n(M)$.

Unlike for monotone grid classes, it was shown in~\cite{MR3091268} that any geometric grid class~$\Geo(M)$ is characterized by finitely many forbidden patterns, i.e.,
\begin{equation*}
  \Geo_n(M)=S_n(\tau_1\wedge \cdots\wedge \tau_\ell)
\end{equation*}
for a suitable set of patterns $\tau_1,\ldots,\tau_\ell$ and for all $n\geq 0$.
For instance, X-shaped permutations studied in~\cite{waton_2007,MR2785755} are exactly the permutations in $S_n(2143\wedge 2413\wedge 3142\wedge 3412)$.
However, the argument given in~\cite{MR3091268} for the existence of~$\tau_1,\ldots,\tau_\ell$ is non-constructive, so there is no procedure known to compute these patterns from the matrix~$M$.

Nevertheless, our next theorem provides an easily verifiable sufficient condition for deciding whether~$\Grid_n(M)$ and~$\Geo_n(M)$ are zigzag languages, based only on two particular entries of~$M$.

\begin{theorem}
\label{thm:geo-grid}
Let $M$ be a matrix with $-1$ in the top-left corner and $+1$ in the top-right corner.
Then $\Grid_n(M)$, $n\geq 0$, and $\Geo_n(M)$, $n\geq 0$, are both hereditary sequences of zigzag languages.
Consequently, all of these languages can be generated by Algorithm~J.
\end{theorem}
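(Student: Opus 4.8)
The plan is to prove both statements at once by induction on~$n$, following the same two-part scheme as in the proof of Theorem~\ref{thm:mesh}. Writing $L_n$ for either $\Grid_n(M)$ or $\Geo_n(M)$, I would verify two facts in the induction step: (B) that $p(L_n)\subseteq L_{n-1}$, and (Z) that $c_1(\pi),c_n(\pi)\in L_n$ for every $\pi\in L_{n-1}$. Since (Z) also gives $p(c_1(\pi))=p(c_n(\pi))=\pi$, it yields the reverse inclusion $L_{n-1}\subseteq p(L_n)$, so (B) and (Z) together force $p(L_n)=L_{n-1}$ (heredity). This is exactly what is needed to invoke the inductive hypothesis that $L_{n-1}=p(L_n)$ is a zigzag language, and (Z) is precisely the closure condition~(z); by the recursive definition of zigzag languages the induction then closes. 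The base case $n=0$ is immediate, since $\Grid_0(M)=\Geo_0(M)=\{\varepsilon\}$.

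Part (B) is the easy direction. Given $\pi\in L_n$, I would fix a drawing of $\pi$ in the grid witnessing membership in $\Grid(M)$ (resp.\ $\Geo(M)$) and simply delete the highest point, which carries the label~$n$. Every surviving point still satisfies all cell constraints---monotonicity within each cell, and in the geometric case incidence with the prescribed diagonals---so the reduced drawing witnesses $p(\pi)\in\Grid_{n-1}(M)$ (resp.\ $\Geo_{n-1}(M)$).

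Part (Z) is where the hypothesis on the corners of~$M$ enters, and it is the heart of the argument. Given $\pi\in L_{n-1}$ with a grid drawing, I want to insert a new highest point that is simultaneously the globally leftmost point (realizing $c_1(\pi)=n\,a_1\cdots a_{n-1}$) or the globally rightmost point (realizing $c_n(\pi)=a_1\cdots a_{n-1}\,n$). For $c_1(\pi)$ the natural home is the top-left cell of the grid, whose top-left corner coincides with the top-left corner of the whole bounding rectangle and is therefore the highest and leftmost location available; placing the new point just inside that corner makes it the unique topmost and leftmost point overall, and reading labels left to right produces exactly $c_1(\pi)$. This placement respects the cell constraint precisely because the top-left entry of~$M$ is $-1$: in a decreasing cell the topmost point is the leftmost, so a new top-left point extends the decreasing order, and in the geometric case the decreasing diagonal passes through that very corner, so the new point may be taken on the diagonal, above and to the left of everything already in the cell. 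Symmetrically, $c_n(\pi)$ is realized in the top-right cell, using that the top-right entry of~$M$ is $+1$ so that both the increasing order and the increasing diagonal accommodate a new topmost-rightmost point.

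I expect the only delicate point to be the geometric case of Part~(Z): one must confirm that a point can always be added on the fixed diagonal of the corner cell, beyond all existing points, while remaining the global extreme in both coordinates. This is guaranteed because the relevant diagonal terminates exactly at the corner of the bounding rectangle, so there is always room between that corner and the outermost existing point on the diagonal, and no rescaling of the other cells is required. With (B) and (Z) in hand for both $\Grid$ and $\Geo$, the induction goes through, both sequences are hereditary sequences of zigzag languages, and Algorithm~J applies to each by Theorem~\ref{thm:jump}.
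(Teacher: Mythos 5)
Your proposal is correct and follows essentially the same route as the paper's proof: induction on $n$, showing that deleting the highest point preserves a valid gridding (heredity) and that the $-1$ top-left and $+1$ top-right cells always accommodate a new globally topmost-leftmost or topmost-rightmost point (condition~(z)). The only difference is that you spell out the geometric case explicitly, including the placement on the corner diagonal, whereas the paper treats $\Geo_n(M)$ as ``completely analogous'' to $\Grid_n(M)$; your extra care there is sound and fills in exactly the detail the paper omits.
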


From the two monotone and geometric grid classes shown in Figure~\ref{fig:grid-geo}, only the left two satisfy the conditions of the theorem.

\begin{proof}
We only prove the theorem for monotone grid classes~$\Grid_n(M)$.
The argument for geometric grid classes~$\Geo_n(M)$ is completely analogous.

We argue by induction on~$n$.
Note that $\Grid_0(M)=S_0=\{\varepsilon\}$ is a zigzag language by definition, so the induction basis is clear.
For the induction step let $n\geq 1$.
We first show that if $\pi\in\Grid_{n-1}(M)$, then $c_1(\pi),c_n(\pi)\in \Grid_n(M)$.
For this argument we use the assumption that the top-left entry of~$M$ is~$-1$, and the top-right entry of~$M$ is~$+1$, i.e., $\pi$ can be drawn into a grid so that the points in the top-left cell are decreasing, and the points in the top-right cell are increasing.
It follows that we can draw $c_1(\pi)$ on the same grid, by extending the drawing of~$\pi$ so that the new point~$n$ is placed to the top-left of all other points.
Similarly, we can draw $c_n(\pi)$ on the same grid, by extending the drawing of~$\pi$ so that the new point~$n$ is placed to the top-right of all other points.

To complete the induction step, we now show that if $\pi\in\Grid_n(M)$, then $p(\pi)\in\Grid_{n-1}(M)$.
As $\pi\in\Grid_n(M)$, we can draw $\pi$ into a grid respecting the monotonicity conditions described by~$M$.
Clearly, removing any entry from~$\pi$, in particular the largest one, maintains this property, i.e., we can draw $p(\pi)$ on the same grid, showing that $p(\pi)\in\Grid_{n-1}(M)$.
This completes the proof.
\end{proof}

\section{Limitations of our approach}
\label{sec:limits}

Theorem~\ref{thm:jump} asserts that if $L_n$ is a zigzag language, then Algorithm~J successfully visits every permutation from~$L_n$.
This condition is not necessary, however.
For instance, the set $L_4\seq S_4$ discussed in Section~\ref{sec:algo} is not a zigzag language, and still Algorithm~J is successful when initialized suitably.
From the proof of Theorem~\ref{thm:jump} we immediately see that condition~(z) in the definition of zigzag language could be weakened as follows, and the same proof would still go through:
\begin{enumerate}[leftmargin=8mm, noitemsep, topsep=3pt plus 3pt]
\item[(z')] Given the sequence~$J(L_{n-1})$, then for every permutation~$\pi$ in this sequence there is a direction~$\chi(\pi)\in\{\leftarrow,\rightarrow\}$ satisfying the following:
For every $\pi$ from~$J(L_{n-1})$ we define $I(\pi):=\{i\in[n]\mid c_i(\pi)\in L_n\}$, and we also define $\icheck(\pi):=\max I(\pi)$ and $\ihat(\pi):=\min I(\pi)$ if $\chi(\pi)={}\leftarrow$, and $\icheck(\pi):=\min I(\pi)$ and $\ihat(\pi):=\max I(\pi)$ if $\chi(\pi)={}\rightarrow$.
Then for any two permutations $\pi,\rho$ that appear consecutively in~$J(L_{n-1})$ and that differ in a jump from position~$i_1$ to position~$i_2$, we have $\ihat(\pi)=\icheck(\rho)$ and this number is not in the interval $]\min\{i_1,i_2\},\max\{i_1,i_2\}]$.
\end{enumerate}
Note that condition~(z) implies~(z'), as for a zigzag language we have $1,n\in I(\pi)$ for all $\pi\in L_{n-1}$, so $\min I(\pi)=1$ and $\max I(\pi)=n$, and the sequence~$\chi$ alternates between~$\leftarrow$ and~$\rightarrow$ in every step.
This shows that Algorithm~J succeeds to generate a much larger class of languages $L_n\seq S_n$.
However, condition~(z') is considerably more complicated, in particular as it depends on the ordering~$J(L_{n-1})$ generated for~$L_{n-1}$.
More importantly, in the context of permutation patterns we do not see any interesting languages~$L_n$ that would satisfy condition~(z') but not condition~(z), which is why we think condition~(z) is the `correct' one.

Let us also briefly motivate the definition of tame permutation patterns given in Section~\ref{sec:tame}.
On the one hand, for a fixed pattern~$\tau$, it is certainly reasonable to require that \emph{all} of the sets $S_n(\tau)$, $n\geq 0$, are zigzag languages.
This is because for \emph{any} classical pattern~$\tau\in S_k$, the set $S_n(\tau)=S_n$ for $n<k$ is trivially a zigzag language, but a very uninteresting one.
On the other hand, now that we are concerned with an infinite sequence of sets $S_n(\tau)$, $n\geq 0$, the requirement for the sequence to be hereditary is equally reasonable, as we shall see by considering the problems that arise if we drop this requirement.
For this consider the non-tame barred pattern~$\tau=132\ol{4}$.
We have $132\notin S_3(\tau)$, i.e., the permutation~$132$ contains the pattern, whereas $c_4(132)=1324\in S_4(\tau)$ avoids it.
Consequently, if we were interested in the language $L_4=S_4(\tau)$, then the corresponding set $L_3:=p(L_4)$ would be $L_3=S_3$, and not $S_3(\tau)=S_3\setminus\{132\}$.
In general, when considering~$S_n(\tau)$ for a particular value of~$n$, then the sets $L_{i-1}:=p(L_i)$ for $i=n,n-1,\ldots,1$ are not necessarily characterized by avoiding the pattern~$\tau$.
In fact, it is not clear whether they are characterized by any kind of pattern-avoidance.
In terms of the tree representation of zigzag languages, if the sequence $S_n(\tau)$, $n\geq 0$, is not hereditary, then different values of~$n$ correspond to different tree prunings.
In the example from before, the node~$132$ is pruned from the tree for~$S_3(\tau)$, but it is not pruned from the tree for~$S_4(\tau)$.
In contrast to that, in a hereditary sequence, all sets~$S_n(\tau)$, $n\geq 0$, arise from pruning the infinite rooted tree of permutations in a way that is consistent for all~$n$.
Summarizing, for patterns~$\tau$ for which $S_n(\tau)$ is not hereditary, our proof of Theorem~\ref{thm:jump} breaks seriously.
Not only that, Algorithm~J in general fails to generate~$S_n(\tau)$.
For instance, it fails to generate $S_4(\tau)$, $\tau=132\ol{4}$, when initialized with $\ide_4=1234$, while it succeeds to generate~$S_4(\tau')$ for the non-tame barred pattern~$\tau'=\ol{4}132$ .

For a classical pattern~$\tau\in S_k$ that has the largest value~$k$ at the leftmost or rightmost position, we have that $S_k(\tau)=S_k\setminus\{\tau\}$ is not a zigzag language (as $\tau$ equals either $c_1(p(\tau))$ or $c_k(p(\tau))$), i.e., $\tau$ is not tame.
Moreover, in general Algorithm~J fails to generate~$S_k(\tau)$.
For instance, running Algorithm~J on $S_3(321)$ gives only three permutations $123,132,312$, and then the algorithm stops.
This is admittedly a very strong limitation of our approach, as many interesting permutation patterns have the largest value at the boundary, such as $321$, which gives rise to an important Catalan family~$S_n(321)$.

By what we said before, the condition for tameness stated in Lemma~\ref{lem:class} is not only sufficient, but also necessary.
In a similar way, it can be shown that the conditions stated in Lemmas~\ref{lem:vinc}--\ref{lem:bivinc} are necessary for tameness.
The situation is slightly more complicated for Theorem~\ref{thm:mesh}:
Conditions~(i) and~(ii) of the theorem are indeed necessary.
Specifically, if condition~(i) is violated, then $S_k(\sigma)=S_k\setminus\{\tau\}$ is not a zigzag language, and if condition~(ii) is violated, then $S_k(\sigma)\neq p(S_{k+1}(\sigma))$, i.e., the hereditary property is violated.
However, conditions~(iii) and~(iv) are not necessary.
Consider for instance the patterns~$\sigma_1,\sigma_2,\sigma_3$ shown below:
\begin{center}
\begin{tikzpicture}[scale=0.4]
\path [fill=gray!60] (1,5) rectangle (2,6);
\path [fill=gray!60] (2,3) rectangle (3,4);
\draw [->,>=stealth] (2.5,3.5) -- (1.5,3.5);
\mesh{5}{1,5,4,3,2}
\node[anchor=west] at (2.0,-0.6) {$\sigma_1$};
\end{tikzpicture} \hspace{1mm}
\begin{tikzpicture}[scale=0.4]
\path [fill=gray!60] (1,5) rectangle (2,6);
\path [fill=gray!60] (2,3) rectangle (3,4);
\draw [->,>=stealth] (2.5,3.5) -- (1.5,3.5);
\path [fill=gray!60] (3,2) rectangle (4,3);
\mesh{5}{1,5,4,3,2}
\node[anchor=west] at (2.0,-0.6) {$\sigma_2$};
\end{tikzpicture} \hspace{1mm}
\begin{tikzpicture}[scale=0.4]
\path [fill=gray!60] (1,5) rectangle (2,6);
\path [fill=gray!60] (2,3) rectangle (3,4);
\draw [->,>=stealth] (2.5,3.5) -- (1.5,3.5);
\path [fill=gray!60] (3,2) rectangle (4,3);
\path [fill=gray!60] (4,1) rectangle (5,2);
\mesh{5}{1,5,4,3,2}
\node[anchor=west] at (2.0,-0.6) {$\sigma_3$};
\end{tikzpicture}
\end{center}
They all satisfy conditions~(i), (ii) and~(iv), but violate condition~(iii) due to the cells~$(2,3)$ and~$(1,3)$, connected by an arrow in the figures.
However, the proof of Theorem~\ref{thm:mesh} given in Section~\ref{sec:mesh} can be modified to show that $\sigma_1$ is tame.
The idea is to apply the exchange argument that involves a point in a match of the pattern and that is illustrated on the right hand side of Figure~\ref{fig:mesh} twice instead of only once.
This idea can be iterated, and by applying the exchange argument three or four times, respectively, one can show that~$\sigma_2$ and~$\sigma_3$ are tame as well.
This kind of reasoning apparently leads to combinatorial chaos, depending on the relative location of points and shaded cells in the pattern, and this prevents us from being able to formulate conditions for a mesh pattern that are necessary and sufficient for tameness.
This is not an issue from our point of view, because again, we do not see any interesting families of pattern-avoiding permutations that would satisfy such more complicated conditions but not the conditions stated in Theorem~\ref{thm:mesh}.

\section{Acknowledgments}

We thank Michael Albert, Mathilde Bouvel, Sergi Elizalde, V{\'{i}}t Jel{\'{i}}nek, Sergey Kitaev, Vincent Vajnovszki, and Vincent Vatter for very insightful feedback on this work, for pointing out relevant references, and for sharing their knowledge about pattern-avoiding permutations.
We also thank Jean Cardinal, Vincent Pilaud, and Nathan Reading for several stimulating discussions about lattice congruences of the weak order on the symmetric group.
Lastly, we thank the anonymous reviewer who provided very thoughtful feedback and many suggestions that helped improving the manuscript.

\bibliographystyle{alpha}
\bibliography{refs}

\end{document}